\date{}
\begin{document}
\title{{Limiting Spectral Distribution of High-dimensional Hayashi-Yoshida Estimator of Integrated Covariance Matrix}}
\author[1]{Arnab Chakrabarti}
\author[2]{Rituparna Sen*}

\authormark{Chakrabarti and Sen}

\address[1]{\orgdiv{Misra Centre for Financial Markets and Economy}, \orgname{Indian Institute of Management}, \orgaddress{Ahmedabad, \state{GJ}, \country{India}}}
\address[2]{\orgdiv{Applied Statistics Division}, \orgname{Indian Statistical Institute}, \orgaddress{Bangalore, \state{KA}, \country{India}}}

\corres{*Rituparna Sen, Indian Statistical Institute, Bangalore, KA, India. \email{ritupar.sen@gmail.com}}

\abstract[Summary]{In this paper, the estimation of the Integrated Covariance matrix from high-frequency data, for high dimensional stock price process, is considered.   The Hayashi-Yoshida covolatility estimator is an improvement over Realized covolatility for asynchronous data and works well in low dimensions. However it becomes inconsistent and unreliable in the high dimensional situation. We study the bulk spectrum of this matrix and establish its connection to the spectrum of the true covariance matrix in the limiting case where the dimension goes to infinity. The results are illustrated with simulation studies in finite, but high, dimensional cases. An application to real data with tick-by-tick data on 50 stocks is presented. }

\keywords{Asynchronicity, Integrated Covariance, Realized Variance, Spectral Distribution, High-frequency data. }

\maketitle
\footnotetext{\textbf{Acknowledgement:}The authors thank Vikram Sarabhai library for
the help with obtaining the data.} 
\footnotetext{\textbf{Codes:} All the R codes are available in the following GitHub repository https://github.com/Arnabchakrabarti15/LSD-of-Hayashi-Yoshida-estimator.}
\section{Introduction}\label{sec:Introduction}

Intraday financial data of multiple stocks are almost always nonsynchronous in nature. If not adjusted appropriately, nonsynchronous trading can affect multivariate stock price data analysis and the resulting inference quite heavily \cite{epps1979comovements}. The analysis of (intraday) financial data would fail to capture the reality of the financial market if the effect of asynchronicity is ignored. \cite{baumohl2010stock}. Despite this problem, intraday data are important to measure the (co)variance of daily {\it log} returns of a given set of securities and can offer additional information compared to the estimate obtained from daily financial data. This covariance is called {\it integrated (co)variance} (or {\it integrated (co)volatility}). For low-dimensional stock-price process, integrated covariance can be accurately estimated by the {\it Hayashi-Yoshida estimator} \cite{hayashi2005covariance}. But in case of high-dimensional data, it suffers from the same problems as any sample covariance matrix under high-dimensional set up ~\cite{pourahmadi2013high}. As a consequence, its eigenvalue spectrum deviates considerably from the population counterpart. In this study, we derive the limiting spectral distribution of the Hayashi-Yoshida estimator for high-dimensional data.

In 1979, T. W. Epps reported that stock return correlations decrease
as the sampling frequency of data increases \cite{epps1979comovements}. This is one of the earliest manifestations of the problems caused by asynchronicity and is known as the \emph{Epps effect}. Later the phenomenon
has been reported in several studies of different stock markets \cite{tumminello2007correlation,zebedee2009closer,bonanno2001high}
and foreign exchange markets \cite{muthuswamy2001time,lundin1998correlation}.
This is primarily a result of asynchronicity of price observations
and the existing lead-lag relation between asset prices \cite{reno2003closer,precup2004comparison,lo1990econometric}. Empirical results showed that considering only the synchronous or nearly synchronous ticks mitigates the problem significantly \cite{reno2003closer}. In several studies it was shown that asynchronicity can induce potentially serious bias in the estimates of moments and co-moments of asset returns, such as their means, variances, covariances, betas, and autocorrelation and cross-autocorrelation coefficients \cite{lo1990econometric, campbell1997econometrics,bernhardt2008impact,atchison1987nonsynchronous}.

Integrated volatility is defined as the variance of the log return over a day of a given security. In fact, due to its high frequency, intraday financial data has been proven to be more efficient in measuring daily volatility when compared to daily financial data \cite{barndorff2004econometric}. For a single stock, Merton (1980) showed that the variance over a fixed interval can be estimated accurately as sum of squared realization as long as the data are available at sufficiently high sampling frequency \cite{merton1980estimating}. This estimator is known as \textit{Realized volatility}. But often univariate modeling is not sufficient, it is also important to model the correlation dynamics between several assets.  Hence one of the parameters of interest, to accurately estimate and infer about, is the  \textit{integrated co-volatility} or \textit{integrated covariance} matrix.  Analogous to the realized volatility, for a multivariate stock price process, the \textit{realized covolatility matrix}/\textit{realized covariance matrix} is defined. But the realized
covolatility matrix relies upon synchronous observations and can not be readily extended for asynchronous data. Therefore in order to evaluate the realized covolatility, we have to first ``synchronize'' the data. Fixed clock time and refresh time \cite{barndorff2011multivariate} samplings are two such synchronizing algorithms widely used in practice. But the realized covariance, evaluated on a synchronous grid, is biased \cite{hayashi2005covariance}. \cite{hayashi2005covariance} proposed an unbiased estimator of the Integrated covolatility that is applicable on intraday data without a need for synchronization. We will call this estimator as the {\it Hayashi-Yoshida estimator}. Although in presence of microstructure noise Hayashi-Yoshida estimator is also biased, a bias-corrected version was developed \cite{voev2006integrated}.

Hayashi-Yoshida estimator has good asymptotic properties as long as the data comes from an underlying low-dimensional diffusion process. But as the dimension of the data increases the estimator becomes inefficient. Developing a good estimator of high dimensional covolatility is challenging unless we impose some structure. A consistent and positive definite estimator is proposed based on blocking and regularization techniques \cite{hautsch2012blocking}. The central idea is to obtain one large covariance matrix from a series of smaller covariance matrices, each based on different sampling frequency. Shrinkage estimator of the covariance matrix with optimal shrinkage intensity, which is also important for portfolio optimization, also reduces the estimation error significantly \cite{ledoit2003honey}. Many other modified shrinkage estimator with good asymptotic properties are proposed and applied in financial context \cite{ledoit2004well, ledoit2020analytical}. The mixed frequency factor models, which uses high-frequency data to estimate factor covariance and low-frequency data to estimate the factor loadings, are also used to estimate high dimensional covariance matrices \cite{bannouh2012realized}. The composite realized Kernel approach that estimates each entry of ICV matrix optimally (in terms of bandwidth and data loss) has been proposed and asymptotic properties are established \cite{lunde2011econometric}. High-dimensionality affects the subsequent calculations of many important quantities based on covariance matrix. \cite{el2010high} showed that high-dimensionality affects the solution of Markowitz problem and results in underestimation of risk.

Instead of imposing a structure, an alternative avenue of investigating a high dimensional covariance matrix is to study its spectral distribution.
\cite{zheng2011estimation} established the limiting spectral distribution of realized covariance matrix obtained from synchronized data. Recently an asymptotic relationship has been established between the limiting spectral distributions of the true sample covariance matrix and noisy sample covariance matrix \cite{xia2018inference}. \cite{wang2021estimation} studied the estimation of integrated covariance matrix based on noisy high-frequency data with multiple transactions using random matrix theory. \cite{robert2010limiting} obtained the limiting spectral distribution of the covariance matrices of time-lagged processes. The limiting spectral distribution of sample covariance matrix was also derived under VARMA(p,q) model assumption \cite{wang2011limiting}. In this paper, we establish the limiting spectral distribution for the Hayashi-Yoshida estimator which has not yet been studied. Rest of the paper is organized as follows. In section \ref{sec:ICV_async}, we discuss the background of the problem. Section \ref{sec:spectral_distn} deals with a very brief introduction to random matrix theory. In section \ref{sec:LSD_main_theorem}, we determine the limiting spectral distribution of high-dimensional Hayashi-Yoshida estimator. Simulated data analysis results are presented in section \ref{sec:Simulation}. The summary of this work and a brief discussion on some further directions are given in section \ref{sec:conclusion}.

\section{Integrated Covariance Matrix and Asynchronicity}\label{sec:ICV_async}

Suppose, we have $p$ stocks, whose price
processes are denoted by $S_{t}^{j}$ for $j=1,...,p$. and define the $j$th log price process as $X_{t}^{j}:=\mathrm{log}S_{t}^{j}$. Let $X_{t}=(X_{t}^{1},...,X_{t}^{p})^{T}$.
Then we can model $X_{t}$ as a $p$-dimensional diffusion process
described as

\begin{equation}
dX_{t}=\mu_{t}dt+\sigma dW_{t}\label{eq:diffusion_process}
\end{equation}
where $\mu_{t}$ is a $p$ dimensional drift process and $\sigma$ is a $p$x$p$ matrix, called instantaneous covolatility process. $W_{t}$ is a $p$ dimensional standard Brownian motion. The Integrated covariance (ICV) matrix, our parameter of interest, is defined by
\begin{equation}
\Sigma_{p}=\int_{0}^{1}\sigma\sigma^{T}dt.
\end{equation}
In univariate case, the most widely used estimator of integrated variance is called the \emph{Realized variance}. For $p$ stocks, analogous covariance estimator can be defined in the following way.

\subsection{Realized covariance}
Note that the transactions in each stock occur  at random time points. Let $n_{i}$ be the number of observations for the $i$th stock.
The arrival time of the $l$th observation of the $i$th stock is denoted by $t_{l}^{i}$ . When the observations are assumed to be synchronous i.e. $t_{l}^{i}=t_{l}$ for $\forall i$,  the Realized Covariance (RCV) matrix can be defined as the following:

\begin{equation}
\begin{split}
\Sigma_{p}^{RCV} & = \sum_{l=1}^{n}\Delta X_{l}\Delta X_{l}^{T}, \text{where} \\
\Delta X_{l} & =\begin{pmatrix}\begin{array}{c}
\Delta X_{l}^{1}\\
\Delta X_{l}^{2}\\
.\\
.\\
\Delta X_{l}^{p}
\end{array}\end{pmatrix}  =\begin{pmatrix}\begin{array}{c}
X_{t_{l}}^{1}-X_{t_{l-1}}^{1}\\
.\\
.\\
.\\
X_{t_{l}}^{p}-X_{t_{l-1}}^{p}
\end{array}\end{pmatrix}.
\end{split}
\end{equation}

\begin{figure}
    \centering
\begin{tabular}{p{.6\textwidth}}
\begin{tikzpicture}[scale=1,xscale=2]
\node[shape=circle,draw=black,fill=black!5] (1) at (0,0) {$t_0^{1}$};
\node[shape=circle,draw=black,fill=black!5] (2) at (0.9,0) {$t_1^{1}$};
\node[shape=circle,draw=black,fill=black!5] (3) at (2.2,0) {$t_2^{1}$};
\node[shape=circle,draw=black,fill=black!5] (4) at (2.8,0) {$t_3^{1}$};
\node[shape=circle,draw=black,fill=black!5] (5) at (3.5,0) {$t_4^{1}$};
\node[shape=circle,draw=black,fill=black!5] (6) at (4.5,0) {$t_5^{1}$};
\node[shape=rectangle,draw=black,fill=black!5] (a) at (0,-3) {$t_0^{2}$};
\node[shape=rectangle,draw=black,fill=black!5] (b) at (1.2,-3) {$t_1^{2}$};
\node[shape=rectangle,draw=black,fill=black!5] (c) at (1.8,-3) {$t_2^{2}$};
\node[shape=rectangle,draw=black,fill=black!5] (d) at (5,-3) {$t_3^{2}$};
\coordinate (7) at (5,0);
\draw[black, thick] (1) -- (2);
\draw[black, thick] (2) -- (3);
\draw[black, thick] (3) -- (4);
\draw[black, thick] (4) -- (5);
\draw[black, thick] (5) -- (6);
\draw[black, thick] (6) -- (7);
\draw[black, thick] (a) -- (b);
\draw[black, thick] (b) -- (c);
\draw[black, thick] (c) -- (d);
\draw[black, dashed] (d) -- (7);
\draw[black,dashed] (1) -- (a);
\draw[black,dashed] (2) -- (0.9,-3);
\draw[black,dashed] (1.2,0) -- (b);
\draw[black,dashed] (3) -- (2.2,-3);
\draw[black,dashed] (1.8,0) -- (c);
\draw[black,dashed] (4) -- (2.8,-3);
\draw[black,dashed] (5) -- (3.5,-3);
\draw[black,dashed] (6) -- (4.5,-3);
\end{tikzpicture}
\vspace{.05in}\\
	\centering (a) Nonsynchronous observations
\end{tabular}\\
\vspace{0.5cm}
\begin{tabular}{p{.6\textwidth}}
\begin{tikzpicture}[scale=1,xscale=2]
\node[shape=circle,draw=red,fill=red!5] (1) at (0,0) {$t_0^{1}$};
\node[shape=circle,draw=blue,fill=blue!5] (2) at (0.9,0) {$t_1^{1}$};
\node[shape=circle,draw=orange,fill=orange!5] (3) at (2.2,0) {$t_2^{1}$};
\node[shape=circle,draw=black,fill=black!5] (4) at (2.8,0) {$t_3^{1}$};
\node[shape=circle,draw=black,fill=black!5] (5) at (3.5,0) {$t_4^{1}$};
\node[shape=circle,draw=green,fill=green!5] (6) at (4.5,0) {$t_5^{1}$};
\node[shape=rectangle,draw=red,fill=red!5] (a) at (0,-3) {$t_0^{2}$};
\node[shape=rectangle,draw=blue,fill=blue!5] (b) at (1.2,-3) {$t_1^{2}$};
\node[shape=rectangle,draw=orange,fill=orange!5] (c) at (1.8,-3) {$t_2^{2}$};
\node[shape=rectangle,draw=green,fill=green!5] (d) at (5,-3) {$t_3^{2}$};
\coordinate (7) at (5,0);
\draw[black, thick] (1) -- (2);
\draw[black, thick] (2) -- (3);
\draw[black, thick] (3) -- (4);
\draw[black, thick] (4) -- (5);
\draw[black, thick] (5) -- (6);
\draw[black, thick] (6) -- (7);
\draw[black, thick] (a) -- (b);
\draw[black, thick] (b) -- (c);
\draw[black, thick] (c) -- (d);
\draw[black,thick, <->] (1) -- (a);
\draw[black,thick, <->] (2) -- (b);
\draw[black,thick, <->] (3) -- (c);
\draw[black,thick, <->] (6) -- (d);
\draw[black,dashed] (2) -- (0.9,-3);
\draw[black,dashed] (1.2,0) -- (b);
\draw[black,dashed] (3) -- (2.2,-3);
\draw[black,dashed] (1.8,0) -- (c);
\draw[black,dashed] (4) -- (2.8,-3);
\draw[black,dashed] (5) -- (3.5,-3);
\draw[black,dashed] (5,0) -- (d);
\draw[black,dashed] (6) -- (4.5,-3);
\end{tikzpicture}
\vspace{.05in}\\
	\centering (b) Synchronized observations illustrated by arrows.
\end{tabular}
\caption{(a) Illustration of asynchronous arrival or transaction times for two stocks are shown. The opens at $t^1_0=t^2_0=0$ and the corresponding stock price at time $0$ can be taken as previous day's closing price. Transaction times of the first stock are shown by circles. Transaction times of the second stock are shown by rectangles. (b) The synchronized pairs (having the same colour code) are indicated by arrows.}
    \label{fig:asynchronicity_and_sampling_times}
\end{figure}

\subsection{Hayashi-Yoshida estimator}
For asynchronous intraday data, the Realized covariance can not be directly calculated unless we synchronize the data by some ad hoc method. This means that we have to throw away some of the observations such that synchronized vectors of observations can be formed. In Fig. \ref{fig:asynchronicity_and_sampling_times}, we illustrate this for the bivariate case. Fig. \ref{fig:asynchronicity_and_sampling_times}(a) shows how nonsynchronous data would look like. The circles and quadrangles represent the transaction times for the first and second stock respectively. Synchronization of transaction times, as indicated by the arrows, are shown in Fig. \ref{fig:asynchronicity_and_sampling_times}(b). A synchronized dataset can be formed by pairing the stock prices corresponding to the synchronized time points. For example, two consecutive observations can be $(X^1_{t^1_2},X^2_{t^2_2})$ and $(X^1_{t^1_5},X^2_{t^2_3})$. This is equivalent to ``pretending'' that $X^1_{t^1_5}$ is observed at $t^2_3$ instead of at $t^1_5$ and similarly, $X^2_{t^2_2}$ is observed at $t^1_2$ instead of at $t^2_2$.\footnote{For this reason, synchronization methods can be expressed as a problem of choosing a set of sampling times $\{\tau_1,\tau_2,...,\tau_n\}$, from the set $\mathcal{T} = \{t^1_1,...,t^1_{n_1}\}\cup\{t^2_1,...,t^2_{n_2}\}$. The corresponding price of each stock at $\tau_i$ is taken as the price observed previous to $\tau_i$.} We can see from Fig \ref{fig:asynchronicity_and_sampling_times}(b) that for the first stock, two observations at time $t^1_3$ and $t^1_4$ are not synchronized with any time point in the second stock and therefore can be excluded from the study. Extending this to $p$ stocks, let us denote that the number of resulting synchronized vectors is $n$. Then it is evident that $n\leq \{n_1,n_2,...,n_p\}$.

In Fig. \ref{fig:asynchronicity_and_sampling_times}(b), we have applied a particular synchronization method called \emph{refresh time} sampling \cite{ait2010high, fan2012vast}.

\cite{hayashi2005covariance} proposed an alternative estimator ($\Sigma_{p}^{HY}$) of ICV matrix that does not require the dataset to be synchronized and therefore can be directly applied on the asynchronous data. Before defining it for high-dimension, we introduce it for the bivariate case. In the following expression, instead of writing $X^1$ and $X^2$ we simply write $X$ and $Y$. Now, for two stocks the Hayashi-Yoshida estimator is defined as the following way:
\begin{equation}
    \Sigma_{2}^{HY} = \sum_{k,l}\Delta X_{k}\Delta Y_{l}\times {\bf 1}\{(t_{k-1}^{i},t_{k}^{i})\cap (t_{l-1}^{i},t_{l}^{i})\neq \phi \},\label{eq:HY_for_two_stocks}
\end{equation}
where ${\bf 1}\{(t_{k-1}^{i},t_{k}^{i})\cap (t_{l-1}^{i},t_{l}^{i})\neq \phi \}$ is an indicator function that takes value 1 when the condition is satisfied. Fig. \ref{fig:Hayashi_Yoshida_refresh_time} illustrates the computation. $\Delta x_1\times \Delta y_1$ will contribute to the sum in Eq.~\eqref{eq:HY_for_two_stocks} as $(t^2_0,t^2_1)$ and $(t^1_0,t^1_1)$ are overlapping intervals. But $\Delta x_1\times \Delta y_2$ will not contribute to the sum in Eq.~\eqref{eq:HY_for_two_stocks} as the intervals $(t^2_1,t^2_2)$ and $(t^1_0,t^1_1)$ are non-overlapping.
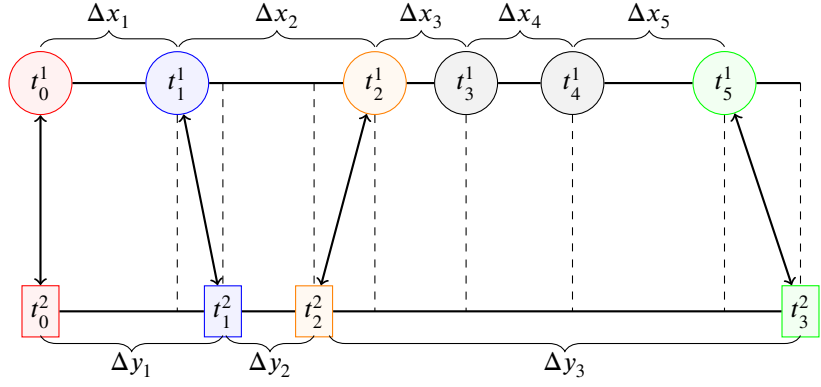
\begin{figure}
\begin{center}
\begin{tikzpicture}[scale=1,xscale=2]
\node[shape=circle,draw=red,fill=red!5] (1) at (0,0) {$t_0^{1}$};
\node[shape=circle,draw=blue,fill=blue!5] (2) at (0.9,0) {$t_1^{1}$};
\node[shape=circle,draw=orange,fill=orange!5] (3) at (2.2,0) {$t_2^{1}$};
\node[shape=circle,draw=black,fill=black!5] (4) at (2.8,0) {$t_3^{1}$};
\node[shape=circle,draw=black,fill=black!5] (5) at (3.5,0) {$t_4^{1}$};
\node[shape=circle,draw=green,fill=green!5] (6) at (4.5,0) {$t_5^{1}$};
\node[shape=rectangle,draw=red,fill=red!5] (a) at (0,-3) {$t_0^{2}$};
\node[shape=rectangle,draw=blue,fill=blue!5] (b) at (1.2,-3) {$t_1^{2}$};
\node[shape=rectangle,draw=orange,fill=orange!5] (c) at (1.8,-3) {$t_2^{2}$};
\node[shape=rectangle,draw=green,fill=green!5] (d) at (5,-3) {$t_3^{2}$};
\coordinate (7) at (5,0);
\draw[black, thick] (1) -- (2);
\draw[black, thick] (2) -- (3);
\draw[black, thick] (3) -- (4);
\draw[black, thick] (4) -- (5);
\draw[black, thick] (5) -- (6);
\draw[black, thick] (6) -- (7);
\draw[black, thick] (a) -- (b);
\draw[black, thick] (b) -- (c);
\draw[black, thick] (c) -- (d);
\draw[black,thick, <->] (1) -- (a);
\draw[black,thick, <->] (2) -- (b);
\draw[black,thick, <->] (3) -- (c);
\draw[black,thick, <->] (6) -- (d);
\draw[black,dashed] (2) -- (0.9,-3);
\draw[black,dashed] (1.2,0) -- (b);
\draw[black,dashed] (3) -- (2.2,-3);
\draw[black,dashed] (1.8,0) -- (c);
\draw[black,dashed] (4) -- (2.8,-3);
\draw[black,dashed] (5) -- (3.5,-3);
\draw[black,dashed] (5,0) -- (d);
\draw[black,dashed] (6) -- (4.5,-3);
\draw [decorate,decoration={brace,amplitude=6pt,mirror,raise=2ex}]
(0,-3) -- (1.2,-3)  node[midway,yshift=-2em]{$\Delta y_1$};
\draw[decorate,decoration={brace,amplitude=6pt,mirror,raise=2ex}]
(1.22,-3) -- (1.8,-3) node[midway,yshift=-2em]{$\Delta y_2$};
\draw[decorate,decoration={brace,amplitude=6pt,mirror,raise=2ex}]
(1.9,-3) -- (5,-3) node[midway,yshift=-2em]{$\Delta y_3$};
\draw [decorate,decoration={ brace,amplitude=6pt,raise=3ex}] (0,0)--(0.9,0) node[midway,yshift=2.5em]{$\Delta x_1$};
\draw [decorate,decoration={ brace,amplitude=6pt,raise=3ex}] (0.91,0)--(2.2,0) node[midway,yshift=2.5em]{$\Delta x_2$};
\draw [decorate,decoration={ brace,amplitude=6pt,raise=3ex}] (2.21,0)--(2.8,0) node[midway,yshift=2.5em]{$\Delta x_3$};
\draw [decorate,decoration={ brace,amplitude=6pt,raise=3ex}] (2.81,0)--(3.5,0) node[midway,yshift=2.5em]{$\Delta x_4$};
\draw [decorate,decoration={ brace,amplitude=6pt,raise=3ex}] (3.51,0)--(4.5,0) node[midway,yshift=2.5em]{$\Delta x_5$};
\end{tikzpicture}
\end{center}
\caption{Illustration of asynchronous arrival or transaction times for two stocks are shown. The opens at $t^1_0=t^2_0=0$ and the corresponding stock price at time $0$ can be taken as previous day's closing price. Transaction times of the first stock are shown by circles. Transaction times of the second stock are shown by rectangles. The returns on each interarrival (shown by braces) are denoted by $\Delta x_k = x_k-x_{k-1}$ and $\Delta y_l = y_l-y_{l-1}$. When the intervals $(t^1_{k-1},t^1_{k})$ and $(t^2_{l-1},t^2_{l})$ have an nonempty intersection, $\Delta x_k \times \Delta y_l$ will contribute to the Hayashi-Yoshida covariance.}
    \label{fig:Hayashi_Yoshida_refresh_time}
\end{figure}

\subsubsection{Hayashi-Yoshida covariance and Refresh-time sampling}\label{sec:HY_n_refresh_time}
Even though the Hayashi-Yoshida estimator does not require prior synchronization of intraday data, we will show that the estimator still throws away some of the data points. Moreover these data-points are exactly the same as thrown by refresh-time sampling. To see this let us consider the case as shown in Fig. \ref{fig:Hayashi_Yoshida_refresh_time}. The Hayashi-Yoshida estimate is:
\begin{equation}
    \begin{split}
    \sigma^{HY}(X,Y) & = \Delta x_1 \Delta y_1 + \Delta x_2 \Delta y_1 + \Delta x_2 \Delta y_2 + \Delta x_2 \Delta y_3 + \Delta x_3 \Delta y_3 + \Delta x_4 \Delta y_3 + \Delta x_5 \Delta y_3\\
    & = \Delta x_1 \Delta y_1 + \Delta x_2 \Delta y_1 + \Delta x_2 \Delta y_2 + \Delta x_2 \Delta y_3 + (\Delta x_3 + \Delta x_4 + \Delta x_5) \Delta y_3
    \end{split}
\end{equation}
But $\Delta x_3 + \Delta x_4 + \Delta x_5$ is just the difference between log-price at $t^1_5$ and log-price at $t^1_3$ which doesn't require any information on stock price anytime time in between. Therefore, although the Hayashi-Yoshida estimator doesn't require presynchronization, it actually throws away the exact same observation as thrown by refresh-time sampling. As a consequence, the value of Hayashi-Yoshida covariance on full data will be equal to Hayashi-Yoshida estimator on the set of refresh-time pairs. Synchronizing the data using refresh-time sampling before computing the covariance can reduce the computational cost quite significantly.

When we move away from bivariate case to higher dimension, synchronize every pair of variables separately would not be very efficient. It would be preferable to synchronize the data for all the stocks simultaneously. This can be achieved by applying ``all refresh method'' which results in a synchronous sampled time points $\{\tilde{t}_1, \tilde{t}_2,...,\tilde{t}_n\}$ defined in the following way:
\begin{equation}
    \begin{split}
        \tilde{t}_{j+1} = \underset{1\leq i\leq p}{\text{max}} t^i_{N_i(\tilde{t}_j)}, \label{eq:refresh_algorithm}
    \end{split}
\end{equation}
where $N_i(t)$ is the number of observation before time $t$ \cite{guo2017quantitative, barndorff2011subsampling}. In this paper, we will define the Hayashi-Yoshida estimator on refresh time sampling times. The theoretical implication is that- given the synchronized data, we can now assume the number of observations of each stock to be the same. We will denote this common sample size as $n$.

 For $p$ stocks the Hayashi-Yoshida estimator is defined as the following :

\begin{equation}
\begin{split}
    \Sigma_{p}^{HY} & =\sum_{k,l}\Delta X_{k}\Delta X_{l}^{T}\circ I(k,l), \text{where} \\
    \Delta X_{l} & =\begin{pmatrix}\begin{array}{c}
\Delta X_{l}^{1}\\
\Delta X_{l}^{2}\\
.\\
.\\
\Delta X_{l}^{p}
\end{array}\end{pmatrix} =\begin{pmatrix}\begin{array}{c}
X_{t_{l}^{1}}^{1}-X_{t_{l-1}^{1}}^{1}\\
.\\
.\\
.\\
X_{t_{l}^{p}}^{p}-X_{t_{l-1}^{p}}^{p}
\end{array}\end{pmatrix}
\end{split}
\end{equation}
 and `$\circ$' is the Hadamard product and $I(k,l)$ is a $p\times p$ matrix with $(i,j)^{th}$ element is the indicator function involving $k^{th}$interarrival of $i^{th}$stock and $l^{th}$interarrival of $j^{th}$ stock: $I(I_{k}^{i}\cap I_{l}^{j}\neq\phi)$, where $I_{k}^{i}=(t_{k-1}^{i},t_{k}^{i})$. In other words if two interarrivals intersect then product $\Delta X_{k}^{i}\Delta X_{l}^{j\thinspace T}$ will contribute to the sum. In Fig. \ref{fig:Hayashi_Yoshida_refresh_time},  $I_{k}^{1}=(t_{k-1}^{1},t_{k}^{1}),~k\in \{1,2,3,4,5\}$ and $I_{l}^{2}=(t_{l-1}^{2},t_{l}^{2}),~l\in \{1,2,3\}$ are shown.

\subsection{Scaled Realized Covariance estimator}\label{sec:SRCV}
In this section we show the ``closeness'' of the Hayashi-Yoshida estimator with a scaled realized estimator which is motivated from the intraday covariance estimator proposed in \cite{chakrabarti2019copula}. We determine the scaling coefficients for the bivariate case. The result will be key in our proof of the Limiting Spectral distribution. For the bivariate case, let us denote the log price for two stocks at a particular time $t$ as $(X_t,Y_t)$.  Following \cite{chakrabarti2019copula}, we synchronize the data for two stocks in the following fashion:

\begin{center}
\fbox{\begin{minipage}{0.8\textwidth}
\textbf{Algorithm} ($\mathcal{A}_{0}$):

1. For $i=1$, assign $k_{i}^{1}=1$ and $k_{i}^{2}=1$.

2. While $k^1_i \leq n_1$ and $k^2_i \leq n_2$:
\begin{itemize}
    \item If $t_{k_{i}^{2}}^{2}>t_{k_{i}^{1}}^{1}$ then find $m=\mathrm{max}\{j:\ t_{j}^{1}<t_{k_{i}^{2}}^{2}\}$.
The $i$th pair will be $(X_{t_{m}^{1}},Y_{t_{k_{i}^{2}}^{2}})$.
Modify $k_{i}^{1}=m$.
    \item If $t_{k_{i}^{2}}^{2} \leq t_{k_{i}^{1}}^{1}$ then find $m=\mathrm{max}\{j:\ t_{j}^{2}<t_{k_{i}^{1}}^{1}\}$.
The $i$th pair will be $(X_{t_{k_{i}^{1}}^{1}},Y_{t_{m}^{2}})$.
Modify $k_{i}^{2}=m$
    \item Modify $i=i+1$. $k_{i}^{1}=k_{i}^{1}+1$ and $k_{i}^{2}=k_{i}^{2}+1$.
\end{itemize}
\end{minipage} }
\end{center}
The pairs created by this algorithm are identical to the pairs created by \emph{refresh time sampling} but accommodates more information by retaining the actual transaction times. To see this, note that in Eq.\eqref{eq:refresh_algorithm}), a common set of synchronized points $\{\tilde{t}_1,...,\tilde{t}_n\}$ are defined for all stocks. For each stock the last observed stock price prior to $\tau_i$ was taken to be the price at $\tau_i$. Therefore, from the refresh time sampling it is not possible to retrieve the actual transaction times of the synchronized pairs. Algorithm $\mathcal{A}_0$, on the other hand, keep these information. Instead of writing $(X_{t_{k_{i}^{1}}^{1}},Y_{t_{k_{i}^{2}}^{2}})$ we shall henceforth write $(X_{t(k_{i}^{1})},Y_{t(k_{i}^{2})})$.

\subsubsection{Overlapping and non-overlapping regions for return construction}
For two such consecutive synchronized pairs of stock-prices, we can now consider the bivariate return as: $\{(X_{t(k_{i}^{1})}-X_{t(k_{i-1}^{1})}), (Y_{t(k_{i}^{2})}-Y_{t(k_{i-1}^{2})}): i=1,2,...,n \}$. Note that, in this bivariate return vector, the first component (for $X$) is defined on the interval $\big(t(k_{i-1}^{1}), t(k_{i}^{1})\big)$ and the return on the $Y$ is defined on the interval $\big(t(k_{i-1}^{2}), t(k_{i}^{2})\big)$. It can be shown the \emph{overlap} and \emph{nonoverlapping} parts of these two intervals play a crucial role in bias of the estimated covariance \cite{chakrabarti2019copula}. To define the \emph{overlap} we first illustrate four possible configurations of the intervals. In Fig. \ref{fig:config_n_overlap} we show four such configurations of intervals corresponding to a particular return vector, constructed from synchronized pairs of observations. More formally, suppose $X_{t(k_{i}^{1})}-X_{t(k_{i-1}^{1})}=\sum_{i=m}^{l}(X_{t_{i+1}}-X_{t_{i}})$ for some $m$ and $l$. Then one of these four configurations is true:

\begin{equation}\left[
\begin{array}{rcl}1. \quad  Y_{t(k_{i}^{2})}-Y_{t(k_{i-1}^{2})}&=&\sum_{i=m+1}^{l-1}(Y_{t_{i+1}}-Y_{t_{i}})\\
2. \quad Y_{t(k_{i}^{2})}-Y_{t(k_{i-1}^{2})}&=&\sum_{i=m-1}^{l-1}(Y_{t_{i+1}}-Y_{t_{i}})\\
3. \quad Y_{t(k_{i}^{2})}-Y_{t(k_{i-1}^{2})}&=&\sum_{i=m+1}^{l+1}(Y_{t_{i+1}}-Y_{t_{i}})\\
4. \quad  Y_{t(k_{i}^{2})}-Y_{t(k_{i-1}^{2})}&=&\sum_{i=m-1}^{l+1}(Y_{t_{i+1}}-Y_{t_{i}})\end{array}\right]\label{eqn:config}
\end{equation}

Given this set of possible configurations, we define a random variable $L_i$, denoting the overlapping
time interval of $i$th interarrivals corresponding to $X_{t(k_{i}^{1})}-X_{t(k_{i-1}^{2})}$ and $Y_{t(k_{i}^{2})}-Y_{t(k_{i-1}^{2})}$ as
\begin{equation}
L_i=\begin{cases}
t(k_{i}^{2})-t(k_{i-1}^{2}) & \mathrm{if}\ Y_{t(k_{i}^{2})}-Y_{t(k_{i-1}^{1})}=\sum_{i=m+1}^{l-1}(Y_{t_{i+1}}-Y_{t_{i}})\\
t(k_{i}^{2})-t(k_{i-1}^{1}) & \mathrm{if}\ Y_{t(k_{i}^{2})}-Y_{t(k_{i-1}^{1})}=\sum_{i=m-1}^{l-1}(Y_{t_{i+1}}-Y_{t_{i}})\\
t(k_{i}^{1})-t(k_{i-1}^{2}) & \mathrm{if}\ Y_{t(k_{i}^{2})}-Y_{t(k_{i-1}^{1})}=\sum_{i=m+1}^{l+1}(Y_{t_{i+1}}-Y_{t_{i}})\\
t(k_{i}^{1})-t(k_{i-1}^{1}) & \mathrm{if}\ Y_{t(k_{i}^{2})}-Y_{t(k_{i-1}^{1})}=\sum_{i=m-1}^{l+1}(Y_{t_{i+1}}-Y_{t_{i}})
\end{cases}
\end{equation}
Fig. \ref{fig:config_n_overlap} illustrates the overlapping regions for all four configurations described in Eq. \ref{eqn:config}.

\begin{figure}
\centering
\begin{tabular}{p{.4\textwidth}p{.49\textwidth}}
		\parbox{.4\textwidth}{
			\centering
			\begin{tikzpicture}
\node[shape=circle, inner sep=0pt,minimum size=1cm,draw=red,fill=red!5] (1) at (-0.3,0) {${\scriptstyle t(k_{i-1}^1)}$};
\node[shape=circle, inner sep=0pt,minimum size=1cm,draw=green,fill=green!5] (2) at (3.4,0) {${\scriptstyle t(k_{i}^{1})}$};
\node[shape=rectangle, inner sep=0pt,minimum size=1cm,draw=red,fill=red!5] (b) at (0.8,-3) {${\scriptstyle t(k_{i-1}^2)}$};
\node[shape=rectangle, inner sep=0pt,minimum size=1cm,draw=green,fill=green!5] (c) at (2.6,-3) {${\scriptstyle t(k_{i}^{2})}$};
\coordinate (0) at (-1.1,0);
\coordinate (a) at (-1,-3);
\coordinate (3) at (4.2,0);
\coordinate (d) at (4.2,-3);
\draw[black,thick] (0) -- (1);
\draw[black, thick] (1) -- (2);
\draw[black,thick] (a) -- (b);
\draw[black, thick] (b) -- (c);
\draw[black, thick] (2) -- (3);
\draw[black, thick] (c) -- (d);
\draw[black,thick, <->] (1) -- (b);
\draw[black,thick, <->] (2) -- (c);
\draw[black,dashed] (1) -- (-0.3,-3);
\draw[black,dashed] (2) -- (3.4,-3);
\draw[black,dashed] (b) -- (0.8,0);
\draw[black,dashed] (c) -- (2.6,0);
\draw[blue,ultra thick, <->] (0.8,-1.5) -- (2.6,-1.5)  node[midway,yshift=0.8em]{$\textcolor{black}{L_i}$};
\end{tikzpicture}
} &
\parbox{.4\textwidth}{
			\centering
			\begin{tikzpicture}
\node[shape=circle, inner sep=0pt,minimum size=1cm,draw=red,fill=red!5] (1) at (0,0) {${\scriptstyle t(k_{i-1}^1)}$};
\node[shape=circle, inner sep=0pt,minimum size=1cm,draw=green,fill=green!5] (2) at (2.4,0) {${\scriptstyle t(k_{i}^{1})}$};
\node[shape=rectangle, inner sep=0pt,minimum size=1cm,draw=red,fill=red!5] (b) at (1.2,-3) {${\scriptstyle t(k_{i-1}^2)}$};
\node[shape=rectangle, inner sep=0pt,minimum size=1cm,draw=green,fill=green!5] (c) at (3.5,-3) {${\scriptstyle t(k_{i}^{2})}$};
\coordinate (0) at (-1,0);
\coordinate (a) at (-1,-3);
\coordinate (3) at (4.2,0);
\coordinate (d) at (4.2,-3);
\draw[black,thick] (0) -- (1);
\draw[black, thick] (1) -- (2);
\draw[black,thick] (a) -- (b);
\draw[black, thick] (b) -- (c);
\draw[black, thick] (2) -- (3);
\draw[black, thick] (c) -- (d);
\draw[black,thick, <->] (1) -- (b);
\draw[black,thick, <->] (2) -- (c);
\draw[black,dashed] (1) -- (0,-3);
\draw[black,dashed] (2) -- (2.4,-3);
\draw[black,dashed] (b) -- (1.2,0);
\draw[black,dashed] (c) -- (3.5,0);
\draw[blue,ultra thick, <->] (1.2,-1.5) -- (2.4,-1.5)  node[midway,yshift=0.8em]{$\textcolor{black}{L_i}$};
\end{tikzpicture}
} \vspace{.1in}\\
	\centering (a) &\centering (b)
\end{tabular} \\
\begin{tabular}{p{.4\textwidth}p{.49\textwidth}}
\parbox{.4\textwidth}{
			\centering
			\begin{tikzpicture}
\node[shape=circle, inner sep=0pt,minimum size=1cm,draw=red,fill=red!5] (1) at (1,0) {${\scriptstyle t(k_{i-1}^1)}$};
\node[shape=circle, inner sep=0pt,minimum size=1cm,draw=green,fill=green!5] (2) at (2.5,0) {${\scriptstyle t(k_{i}^{1})}$};
\node[shape=rectangle, inner sep=0pt,minimum size=1cm,draw=red,fill=red!5] (b) at (0,-3) {${\scriptstyle t(k_{i-1}^2)}$};
\node[shape=rectangle, inner sep=0pt,minimum size=1cm,draw=green,fill=green!5] (c) at (3.6,-3) {${\scriptstyle t(k_{i}^{2})}$};
\coordinate (0) at (-1.1,0);
\coordinate (a) at (-1,-3);
\coordinate (3) at (4.2,0);
\coordinate (d) at (4.2,-3);
\draw[black,thick] (0) -- (1);
\draw[black, thick] (1) -- (2);
\draw[black,thick] (a) -- (b);
\draw[black, thick] (b) -- (c);
\draw[black, thick] (2) -- (3);
\draw[black, thick] (c) -- (d);
\draw[black,thick, <->] (1) -- (b);
\draw[black,thick, <->] (2) -- (c);
\draw[black,dashed] (1) -- (1,-3);
\draw[black,dashed] (2) -- (2.5,-3);
\draw[black,dashed] (b) -- (0,0);
\draw[black,dashed] (c) -- (3.6,0);
\draw[blue,ultra thick, <->] (1,-1.5) -- (2.5,-1.5)  node[midway,yshift=0.8em]{$\textcolor{black}{L_i}$};
\end{tikzpicture}
} &
\parbox{.4\textwidth}{
			\centering
			\begin{tikzpicture}
\node[shape=circle, inner sep=0pt,minimum size=1cm,draw=red,fill=red!5] (1) at (1,0) {${\scriptstyle t(k_{i-1}^1)}$};
\node[shape=circle, inner sep=0pt,minimum size=1cm,draw=green,fill=green!5] (2) at (3.8,0) {${\scriptstyle t(k_{i}^{1})}$};
\node[shape=rectangle, inner sep=0pt,minimum size=1cm,draw=red,fill=red!5] (b) at (0,-3) {${\scriptstyle t(k_{i-1}^2)}$};
\node[shape=rectangle, inner sep=0pt,minimum size=1cm,draw=green,fill=green!5] (c) at (2.4,-3) {${\scriptstyle t(k_{i}^{2})}$};
\coordinate (0) at (-1,0);
\coordinate (a) at (-1,-3);
\coordinate (3) at (4.2,0);
\coordinate (d) at (4.2,-3);
\draw[black,thick] (0) -- (1);
\draw[black, thick] (1) -- (2);
\draw[black,thick] (a) -- (b);
\draw[black, thick] (b) -- (c);
\draw[black, thick] (2) -- (3);
\draw[black, thick] (c) -- (d);
\draw[black,thick, <->] (1) -- (b);
\draw[black,thick, <->] (2) -- (c);
\draw[black,dashed] (1) -- (1,-3);
\draw[black,dashed] (2) -- (3.8,-3);
\draw[black,dashed] (b) -- (0,0);
\draw[black,dashed] (c) -- (2.4,0);
\draw[blue,ultra thick, <->] (1,-1.5) -- (2.4,-1.5)  node[midway,yshift=0.8em]{$\textcolor{black}{L_i}$};
\end{tikzpicture}
}
\vspace{.1in}\\
	 \centering (c)&\centering (d)
	\end{tabular}
	\caption{Four configurations described in Eq.\eqref{eqn:config}. For each case the overlapping interval is shown as $L_i$.}\label{fig:config_n_overlap}
\end{figure}
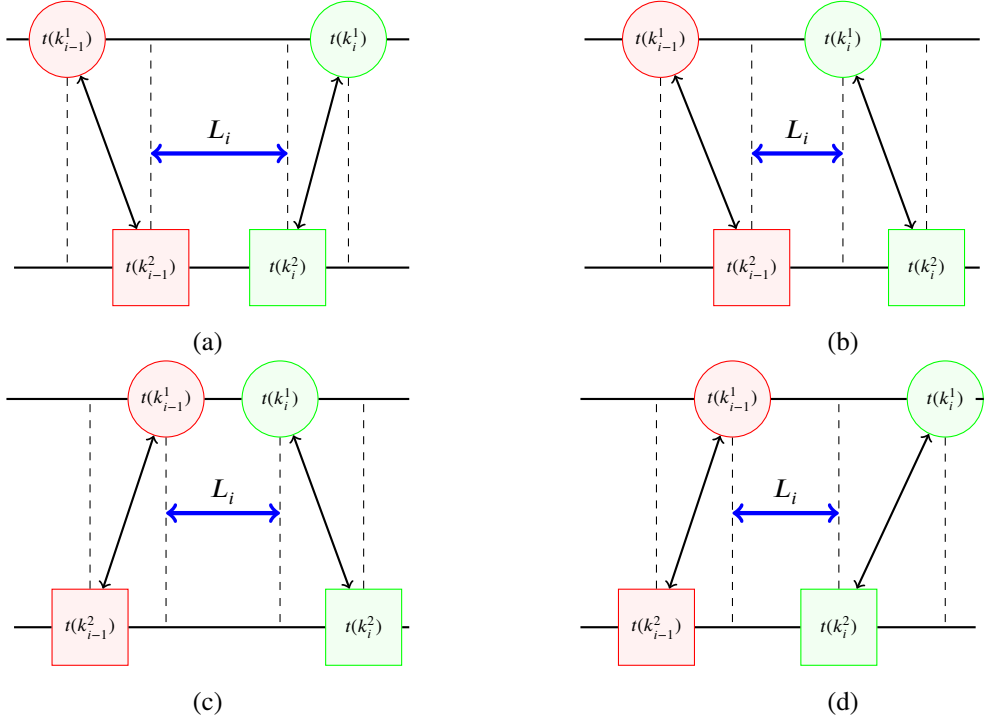

The next theorem says that in presence of high-frequency data, the performance of Hayashi-Yoshida covariance would be very similar to a scaled realized covariance. We make the following set of assumptions:
\begin{enumerate}
    \item[($\mathcal{C}_1$):] The log return process follows independent and stationary increment property.
    \item[($\mathcal{C}_2$):] The observation times (arrival process) of two stocks are independent Renewal processes and $n\rightarrow\infty$ as $n_{1},n_{2}\rightarrow$ $\infty$.
    \item[($\mathcal{C}_3$):] Estimation is based on paired data obtained by algorithm $A_0$.
\end{enumerate}

Define  a scaled (pairwise) realized covariance (SRCV) on a synchronized data $\big\{\big(X_{t(k^1_i)}, Y_{t(k^2_i)}\big):i=1,2,..,n\big\}$ as follows:
\begin{equation}
    \text{SRCV}(X,Y) = \sum_i \psi_i \Delta X_{t(k^1_i)} \Delta Y_{t(k^2_i)},\label{eq:SRCV}
\end{equation}
where
\begin{equation*}
    \psi_i = \frac{\sqrt{(t(k^1_i)-t(k^1_{i-1}))(t(k^1_i)-t(k^1_{i-1}))}}{L_i}.
\end{equation*}
The following theorem says that for all practical purposes this estimator performs as good as the Hayashi-Yoshida estimator.
\begin{theorem}
Under the assumptions $\mathcal{C}_{1}-\mathcal{C}_{3},$ SRCV is a consistent estimator of the pairwise integrated covariance. \label{thm:scaled_realized_HY}
\end{theorem}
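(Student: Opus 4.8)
The plan is to establish $L^2$-consistency of $\mathrm{SRCV}(X,Y)$ for the pairwise integrated covariance $\int_0^1(\sigma\sigma^T)_{XY}\,dt$ by conditioning on the observation times. Write $h^1_i=t(k^1_i)-t(k^1_{i-1})$ and $h^2_i=t(k^2_i)-t(k^2_{i-1})$ for the lengths of the $i$th synchronized interarrivals of $X$ and $Y$ produced by $\mathcal{A}_0$, and let $\mathcal{G}$ be the $\sigma$-field generated by the two renewal processes. By $\mathcal{C}_2$ the sampling times are independent of the price process, so by $\mathcal{C}_1$ the returns $\Delta X_{t(k^1_i)}$ and $\Delta Y_{t(k^2_i)}$ remain, conditionally on $\mathcal{G}$, increments of a process with independent, stationary increments; two such increments over disjoint intervals are then conditionally independent, and $\mathrm{Cov}\big(\Delta X_{t(k^1_i)},\Delta Y_{t(k^2_i)}\mid\mathcal{G}\big)$ equals $(\sigma\sigma^T)_{XY}$ times the length $L_i$ of the overlap $I^1_i\cap I^2_i$ (one of the four configurations of Eq.~\eqref{eqn:config}).

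First I would compute the conditional mean. Decomposing each return into its increment over the overlap $I^1_i\cap I^2_i$ and over the (at most two) disjoint non-overlapping pieces, and using that increments over disjoint intervals have expectation equal to the product of their means, one obtains $E\big[\mathrm{SRCV}(X,Y)\mid\mathcal{G}\big]=(\sigma\sigma^T)_{XY}\sum_i\psi_iL_i+R_n$, where $R_n$ collects the drift cross-terms. Because $\psi_iL_i$ collapses to an interarrival length, $\sum_i\psi_iL_i$ telescopes to $t(k^1_n)-t(k^1_0)$, which tends to the length of $[0,1]$ since the mesh of the refresh-time grid vanishes as $n_1,n_2\to\infty$; and $R_n$ is of order $\sum_i(h^1_ih^2_i)^{3/2}/L_i=o_p(1)$. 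Hence $E[\mathrm{SRCV}\mid\mathcal{G}]$ converges in probability to the target. (If time-varying coefficients are allowed, $\sum_i\psi_iL_i(\sigma\sigma^T)_{XY}(\tau_i)$ is a Riemann sum of $t\mapsto(\sigma\sigma^T)_{XY}(t)$ against the refresh grid and converges to $\int_0^1(\sigma\sigma^T)_{XY}\,dt$ by continuity of $\sigma$.)

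Next I would bound the conditional variance. Writing $\mathrm{Var}(\mathrm{SRCV}\mid\mathcal{G})=\sum_{i,j}\psi_i\psi_j\,\mathrm{Cov}\big(\Delta X_{t(k^1_i)}\Delta Y_{t(k^2_i)},\Delta X_{t(k^1_j)}\Delta Y_{t(k^2_j)}\mid\mathcal{G}\big)$, the summand vanishes whenever none of the intervals indexed by $i$ meets any indexed by $j$; since each $I^m_i$ meets only a bounded number of the $I^{m'}_j$ under refresh-time sampling (Section~\ref{sec:HY_n_refresh_time}), only $O(n)$ summands remain. For each of them, Cauchy--Schwarz and the Gaussian (more generally sub-Gaussian) fourth-moment bound give $\psi_i^2\,E\big[(\Delta X_{t(k^1_i)})^2(\Delta Y_{t(k^2_i)})^2\mid\mathcal{G}\big]\le C\,\psi_i^2h^1_ih^2_i$, so the whole sum is $O_p\big(\sum_i(h^1_ih^2_i)^2/L_i^2\big)$. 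Combining $\mathrm{Var}(\mathrm{SRCV})=E[\mathrm{Var}(\mathrm{SRCV}\mid\mathcal{G})]+\mathrm{Var}(E[\mathrm{SRCV}\mid\mathcal{G}])$ with the two estimates then yields convergence of $\mathrm{SRCV}(X,Y)$ to the pairwise integrated covariance in $L^2$, hence in probability.

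The crux — and the step I expect to be the main obstacle — is the geometric and probabilistic control of the overlap lengths: one needs $h^1_i$, $h^2_i$, and $L_i$ all to be of order $n^{-1}$ up to the fluctuations of the renewal processes, so that $\psi_i$ is $O_p(1)$ uniformly and both $\sum_i(h^1_ih^2_i)^2/L_i^2$ and $\sum_i(h^1_ih^2_i)^{3/2}/L_i$ are $o_p(1)$. When two consecutive synchronized interarrivals are nearly tangent, $L_i$ can be much smaller than $\min(h^1_i,h^2_i)$ and a single variance term can be large; ruling out too many such configurations is where assumption $\mathcal{C}_2$ (and, if necessary, a mild regularity condition on the interarrival law, or a truncation of $\psi_i$) must be used carefully. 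An alternative to the variance computation is to show that $\mathrm{SRCV}(X,Y)$ differs from the Hayashi--Yoshida estimator restricted to the refresh-time pairs by an $o_p(1)$ term and to invoke the consistency of the latter \cite{hayashi2005covariance}, but this again reduces to the same overlap estimates.
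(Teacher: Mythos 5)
The paper itself does not prove Theorem \ref{thm:scaled_realized_HY}; it only remarks that the proof runs along the lines of Theorem 1 of \cite{chakrabarti2019copula}, so there is no in-paper argument to match. Your overall strategy --- condition on the transaction times, use that the conditional covariance of the paired returns equals $(\sigma\sigma^T)_{XY}L_i$ so that $\psi_i$ undoes the overlap, then bound the conditional variance --- is the natural route and presumably parallels that reference. However, as written your sketch has a genuine gap at its central step, the conditional mean. Writing $h^1_i=t(k^1_i)-t(k^1_{i-1})$ and $h^2_i=t(k^2_i)-t(k^2_{i-1})$, the intended scaling in Eq.~\eqref{eq:SRCV} is $\psi_i=\sqrt{h^1_ih^2_i}/L_i$, so $\psi_iL_i=\sqrt{h^1_ih^2_i}$ is not an interarrival of either stock and $\sum_i\psi_iL_i$ does \emph{not} telescope to $t(k^1_n)-t(k^1_0)$ (only a scaling such as $h^1_i/L_i$ would telescope). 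What you actually need is $\sum_i\sqrt{h^1_ih^2_i}\to 1$, and by AM--GM the shortfall from $\tfrac12\big(\sum_ih^1_i+\sum_ih^2_i\big)$ equals $\tfrac12\sum_i\big(\sqrt{h^1_i}-\sqrt{h^2_i}\big)^2$; under $\mathcal{A}_0$ the paired interarrivals differ by boundary terms of the same order $n^{-1}$ as the interarrivals themselves, so each summand is of order $n^{-1}$ and the shortfall is a priori $O(1)$, not obviously $o(1)$. Establishing that this term is negligible is precisely where assumption $\mathcal{C}_2$ and the structure of the pairing must do real work --- it is the substance of the theorem --- and your proposal disposes of it with an incorrect telescoping claim.

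The second gap is one you flag yourself but do not close: uniform control of $\psi_i$. In configurations 2--3 of Eq.~\eqref{eqn:config} the overlap $L_i$ can be arbitrarily small relative to $h^1_i$ and $h^2_i$ (for instance when the paired transaction of one stock falls just after the preceding pair time), so $\psi_i$ is heavy-tailed; consequently both your variance bound $\sum_i(h^1_ih^2_i)^2/L_i^2$ and the drift remainder $\sum_i(h^1_ih^2_i)^{3/2}/L_i$ require a genuine probabilistic estimate on small overlaps under the renewal assumption (or a truncation of $\psi_i$ that must then be reconciled with the estimator as defined). Since neither the mean step nor the small-overlap control is actually carried out, the proposal is a plausible outline rather than a proof; to complete it you would need the overlap and interarrival estimates of \cite{chakrabarti2019copula} or an equivalent renewal-theoretic argument.
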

Proof of Theorem \ref{thm:scaled_realized_HY} will be along the same line as in Theorem 1 of \cite{chakrabarti2019copula}.

The $\psi_i$'s in Eq.\eqref{eq:SRCV} are the scaling coefficients which are functions of the arrival (transaction) times only (does not depend on the stock prices at those time points). The conventional refresh-time periods $\tilde{t}_j$s in Eq. \ref{eq:refresh_algorithm} do not allow us to calculate these scaling coefficients. The algorithm $\mathcal{A}_0$, on the other hand, enable us to calculate the scaling coefficients as it preserves the actual arrival times of the synchronized pairs. The importance of this estimator will be evident in

As both Hayashi-Yoshida (pairwise) covariance and SRCV are consistent estimators, in presence of sufficient amount of data, they would be ``close'' to each other. For $p$-dimensional process, the SRVC matrix can be written as the following:
\begin{equation}
    SRCV_p = \sum_{i=1}^n (\Delta X_i \Delta X_i^T \circ \Psi_i), \label{eq:SRCV_p_dimension}
\end{equation}
where $\psi_i$ is a $p\times p$ symmetric matrix consisting of the pairwise scaling coefficients. This matrix will help us to make an important assumption necessary to determine the LSD of Hayashi-Yoshida matrix.

\subsection{Inconsistency in high-dimension}
From multivariate statistical theory we know that the sample covariance matrix is consistent for the population covariance matrix. In high-dimensional scenario, when the dimension grows at the same or a higher rate as the number of observations, this properties do not hold anymore. It can be shown that under a high-dimensional setup the following is true:
\begin{equation*}
    \|S-\Sigma\|\nrightarrow 0,
\end{equation*}
where $\|.\|$ is the operator norm, $S$ is the usual sample covariance matrix and $\Sigma$ is the population covariance matrix.

But what impact would this inconsistency of sample covariance matrix make on the eigenvalues and the eigenvectors? Weyl's theorem and Davis-Kahan theorems show that when the sample covariance matrix is not consistent, neither the sample eigenvalues and the eigenvectors are going to converge to their true counterpart \cite{pourahmadi2013high, yu2015useful}. Therefore it is worthwhile to study the limiting distribution of sample eigenvalues and its relation to the distribution of true eigenvalues.

For low dimensions, the Hayashi's estimator is consistent estimator of ICV matrix. But for high dimensional stock price process, when the dimension grows at the same or a higher rate as the number of observations, neither RCV nor Hayashi-Yoshida estimator is consistent anymore. It is evident from the above discussion that the sample spectrum of Hayashi-Yoshida estimator deviates significantly form the true spectrum. In this paper, we study the asymptotic behavior of the distribution of eigenvalues of the Hayashi-Yoshida matrix.

\section{Spectral Distribution}\label{sec:spectral_distn}
The \textit{empirical spectral distribution (ESD)} of a symmetric (more generally Hermitian) matrix $A_{p\times p}$ is defined as
\begin{equation*}
F_{p}^{A}(x)=\frac{1}{p}\#\{j\leq p:\ \lambda_{j}\leq x\}
\end{equation*}
where $\lambda_{j}$'s are the eigenvalues of the matrix $A$ and $\#E$ denotes the cardinality of set $E$. The limit distribution ($F$) of \emph{ESD} is called the \textit{limiting spectral distribution (LSD)}. One commonly used method of finding LSD is through Stieltjes transform.

\textbf{Stieltjes transform:} Let  $A_{p\times p}$  is a Hermitian matrix and $F_{p}^{A}$ be its ESD. Then the Stieltjes transform of $F_{p}^{A}$ is defined as
\begin{align*}
s_{p}(z) & =\int\frac{1}{x-z}dF_{p}^{A}(x)\\
 & =\frac{1}{p}tr(A-zI)^{-1}
\end{align*}
where $z\in D=\{z\in\mathbb{C}|\mathbb{I}(z)>0\}$, $\mathbb{I}(z)$ being the imaginary part of $z$. The importance of Stieltjes transformation in Random Matrix Theory is due to Theorem B.9 and Theorem B.10  \cite{bai2010spectral}).

These theorems suggest that in order to determine the {\it LSD}, it is enough to obtain the limit of the Stieltjes transform.
\section{Spectral Analysis of High Dimensional Hayashi's Estimator}\label{sec:LSD_main_theorem}
Based on our model (Eq. \ref{eq:diffusion_process}), the distribution of $\Delta X_i$ can be written like the following:
\begin{equation*}
    \Delta X_{i}=\sigma\left(\begin{array}{c}
\int_{t_{l-1}^{1}}^{t_{l}^{1}}dW_{1}\\
\int_{t_{l-1}^{2}}^{t_{l}^{2}}dW_{2}\\
.\\
.\\
\int_{t_{l-1}^{p}}^{t_{l}^{p}}dW_{p}
\end{array}\right)\stackrel{d}{=}\sigma D_i Y_{i},
\end{equation*}
where $D_i = \text{diag}\Big(\sqrt{t(k^1_i)-t(k^1_{i-1})},\sqrt{t(k^2_i)-t(k^2_{i-1})},..,\sqrt{t(k^1_p)-t(k^p_{i-1})}\Big)$ is a $p\times p$ diagonal matrix and $Y_i$ is a $p$-dimensional vector with its components being {\it iid} standard normal distribution. As a consequence, the Hayashi-Yoshida estimator has the same distribution as that of $S_0$:
\begin{equation*}
    S_0 = \sum_k \sum_l \sigma D_k Y_k Y_l^T D_l \sigma \circ I(k,l).
\end{equation*}
Hence, to determine the LSD of Hayashi-Yoshida estimator, it is enough to find the limit of the Stieltjes transform of $S_0$.

The set of assumptions ($\mathcal{A}$) necessary for determining the limiting spectral distribution of Hayashi-Yoshida estimator are given below:
\begin{enumerate}
    \item[($\mathcal{A}_1$)] $c_{n}=\frac{p}{n}\rightarrow c>0$ as $p\rightarrow\infty$.
    \item[($\mathcal{A}_2$)] $F^{\Sigma}\stackrel{d}{\Rightarrow}H$ ($\neq\mathrm{delta}$ $\mathrm{measure\ at\ 0}$)
a.s. and $H$ has a finite second order moment.
    \item[($\mathcal{A}_3$)] $Y_{l}^{j}$'s ($j=1(1)p$) are iid with mean $0$, variance $\text{1}$ and finite moments of all orders.
    \item[($\mathcal{A}_4$)] $\exists$ $\tau_{l}\in\mathbb{R^{+}}$ and let $S=\sum_{l}\tau_{l}\Sigma^{\frac{1}{2}}Y_{l}Y_{l}^{T}\Sigma^{\frac{1}{2}}$
such that
\begin{equation}
    \text{tr}(S_{0}-S)^{2}  = o(p)~ \text{a.s.} \label{eq:closeness_condition}
\end{equation}
and \begin{equation}
    \text{tr}([(S_{0}-zI)^{-1}(S-zI)^{-1}]^{2})  = O(p)~\text{a.s.} \label{eq:closeness_condition2}
\end{equation}
 where $z=i.v$ with $v$ is a sufficiently large positive number $(S_{0}-zI)^{-1}(S-zI)^{-1}$ being positive semidefinite.
    \item[($\mathcal{A}_5$)] There exists $\kappa,$ such that $\mathrm{max_{n}max_{l}}(n\tau_{l})\leq\kappa.$
Also there exists a nonnegative cadlag process $\tau'_{t}$ such that
\begin{equation}
\underset{n \rightarrow \infty}{\text{lim}}\sum_{l =1}^n \int_{\frac{l-1}{n}}^{\frac{l}{n}}|n\tau_{l}-\tau'_{t}|dt=0.
\end{equation}
    \item[($\mathcal{A}_6$)] There exists a $K<\infty$ and $\delta<1/6$ such that for all $p$, $\|\Sigma\|\leq Kp^{\delta}$ almost surely.
\end{enumerate}
 Before stating the main theorem, we present a brief motivation of the assumption $\mathcal{A}_4$. In Sec.~\ref{sec:SRCV}, we have seen that under a low dimensional setup, SRCV is also a consistent estimator for ICV. Therefore with high frequency data we can expect $S_0$ and $SRCV$ to be ``close'' to each other. The matrix $S$ replaces the matrix $\Psi_i$ in Eq. \eqref{eq:SRCV_p_dimension} with a constant matrix where each element of the matrix is $\tau_i$. Assumption $\mathcal{A}_4$ claims that under high-dimensional setup, even when both $S_0$ and $SRCV$ are inconsistent, upon choosing the $\tau$'s, $S$ and $S_0$  have a closeness in the sense expressed by Eq.~\eqref{eq:closeness_condition} \& Eq.~\eqref{eq:closeness_condition2}.   \\
Now we state our main theorem.
\begin{theorem}\label{thm:main_theorem}
Under the assumptions ($\mathcal{A}$), almost surely, ESD of $S_{0}$ converges in distribution to a probability distribution with Stieltjes transform
\begin{equation}\label{eqn:thm1_1}
s(z)=-\frac{1}{z}\int\frac{1}{1-\frac{\lambda}{z}\int_{0}^{1}\frac{\tau'_{t}}{1+c\tau'_{t}\tilde{s}(z)}dt }dH(\lambda),
\end{equation}
 where $\tilde{s}(z)$ can be solved by the equation:
 \begin{equation}\label{eqn:thm1_2}
     \tilde{s}(z)=-\frac{1}{z}\int_{\lambda}\frac{\lambda}{1-\frac{\lambda}{z}\int_{0}^{1}\frac{\tau'_{t}}{1+c\tau'_{t}\tilde{s}(z)}dt}dH(\lambda).
 \end{equation}
\end{theorem}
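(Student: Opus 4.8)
The plan is to exploit Assumption $\mathcal{A}_4$ to reduce the problem from $S_0$ to the more tractable matrix $S=\sum_l \tau_l \Sigma^{1/2}Y_lY_l^T\Sigma^{1/2}$, and then to handle $S$ by a time-discretization argument that turns the deterministic weights $\tau_l$ into the limiting process $\tau'_t$. First I would show that $S_0$ and $S$ have the same LSD. By the resolvent identity, $(S_0-zI)^{-1}-(S-zI)^{-1}=(S_0-zI)^{-1}(S-S_0)(S-zI)^{-1}$, so
\begin{equation*}
\bigl|\tfrac1p\operatorname{tr}(S_0-zI)^{-1}-\tfrac1p\operatorname{tr}(S-zI)^{-1}\bigr|
\le \tfrac1p\bigl(\operatorname{tr}(S_0-S)^2\bigr)^{1/2}\bigl(\operatorname{tr}[(S_0-zI)^{-1}(S-zI)^{-1}]^2\bigr)^{1/2},
\end{equation*}
by Cauchy--Schwarz; Eq.~\eqref{eq:closeness_condition} gives $o(p)^{1/2}$ and Eq.~\eqref{eq:closeness_condition2} gives $O(p)^{1/2}$, so the difference is $o(1)$ a.s.\ for $z=iv$ with $v$ large. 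Since both Stieltjes transforms are analytic on $\mathbb{C}^+$ and locally bounded, agreement of the limits on a set with an accumulation point propagates to all of $\mathbb{C}^+$ by Vitali/Montel, so $S_0$ and $S$ share the same LSD (when it exists). Thus it suffices to find the LSD of $S$.

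Next I would analyze $S$ itself. Grouping the summation index $l$ according to the block $\bigl(\frac{l-1}{n},\frac{l}{n}\bigr]$ and invoking Assumption $\mathcal{A}_5$, which forces $n\tau_l$ to be uniformly bounded by $\kappa$ and to approximate $\tau'_t$ in $L^1$, $S$ is well approximated (again in the $\operatorname{tr}(\cdot)^2=o(p)$ sense, controlled as above) by a matrix of the form $\sum_l \frac{1}{n}\tau'_{l/n}\Sigma^{1/2}Y_lY_l^T\Sigma^{1/2}$. This is now a generalized sample covariance matrix with a separable variance profile in the sense of the Marchenko--Pastur/deterministic-equivalent theory for matrices $\frac1n\sum_l \theta_l \Sigma^{1/2}Y_lY_l^T\Sigma^{1/2}$ with column weights $\theta_l=\tau'_{l/n}$. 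The standard machinery (e.g.\ the approach of Bai--Silverstein, or the variance-profile results for such models) gives a system of self-consistent equations for the Stieltjes transform. Concretely, writing $\tilde s(z)$ for the companion-type transform associated with the weighted design, one obtains a fixed-point relation in which the empirical average $\frac1n\sum_l \frac{\tau'_{l/n}}{1+c\,\tau'_{l/n}\tilde s(z)}$ appears; passing $n\to\infty$ and using the $L^1$ convergence of $n\tau_l$ to $\tau'_t$ replaces this average by $\int_0^1 \frac{\tau'_t}{1+c\tau'_t\tilde s(z)}\,dt$. Substituting back into the resolvent expansion in terms of the eigenvalues $\lambda$ of $\Sigma$ (whose ESD converges to $H$ by $\mathcal{A}_2$) yields Eqs.~\eqref{eqn:thm1_1} and \eqref{eqn:thm1_2}; uniqueness of the solution in the relevant class of Stieltjes transforms (Assumptions $\mathcal{A}_1$, $\mathcal{A}_2$, $\mathcal{A}_3$, $\mathcal{A}_6$ ensuring the usual moment and norm controls, with $\delta<1/6$ giving room for the truncation/centralization estimates) pins down $s(z)$.

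The main obstacle I expect is the rigorous justification of the deterministic-equivalent step for the weighted model $\sum_l \theta_l \Sigma^{1/2}Y_lY_l^T\Sigma^{1/2}$ when $\Sigma$ is allowed to grow, $\|\Sigma\|\le Kp^\delta$. The classical results assume $\|\Sigma\|$ bounded; here one must carry the $p^\delta$ factor through the martingale-difference and rank-one perturbation bounds (using $\delta<1/6$ to keep the error terms $o(1)$), and simultaneously control the passage from the discrete weights $n\tau_l$ to the continuous profile $\tau'_t$ uniformly in $z$ on compact subsets of $\mathbb{C}^+$. A secondary technical point is verifying that the fixed-point equation \eqref{eqn:thm1_2} has a unique solution with $\operatorname{Im}\tilde s(z)>0$ for $z\in\mathbb{C}^+$, which I would handle by the standard contraction/Montel argument for such self-consistent equations. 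Everything else — the Cauchy--Schwarz reduction, the block-averaging, and the final bookkeeping to recover \eqref{eqn:thm1_1} — is routine once these two points are in place.
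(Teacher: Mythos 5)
Your proposal is correct and follows essentially the same route as the paper: the Cauchy--Schwarz reduction from $S_0$ to $S$ under $\mathcal{A}_4$, a deterministic-equivalent (generalized Marchenko--Pastur) analysis of the weighted matrix $S=\sum_l\tau_l\Sigma^{1/2}Y_lY_l^T\Sigma^{1/2}$ with $\mathcal{A}_5$ used to replace the discrete weights $n\tau_l$ by $\tau'_t$, and then the coupled self-consistent equations plus a uniqueness argument. The ``standard machinery'' you defer to is exactly what the paper implements explicitly, via the deterministic equivalent $S^*=\Sigma\sum_l\frac{\tau_l}{1+\tau_l a_l}$, the decomposition $\Xi=\mathrm{I}+\mathrm{II}+\mathrm{III}+\mathrm{IV}$, and the appendix lemmas adapted from \cite{zheng2011estimation}, which already carry the growth condition $\|\Sigma\|\leq Kp^{\delta}$ that you flag as the main obstacle.
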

This theorem establishes the Limiting spectral distribution of Hayashi-Yoshida estimator by determining its spectral distribution through the limiting spectral distribution of $\Sigma$. In other words the theorem establishes the link between the limiting spectral distributions of $\Sigma$ and Hayashi-Yoshida estimator through its Stieltjes transform.\\
See Appendix for the lemmas that would lead to the theorem.\\
Now we are ready to prove Theorem \ref{thm:main_theorem}.
\begin{proof}
Define \begin{equation*}S^*=\Sigma\sum_{l=1}^{n}\frac{\tau_{l}}{1+\tau_{l}a_{l}},\end{equation*}
where $a_{l}=Y_{l}'\Sigma^{\frac{1}{2}}(S_{l}-zI)^{-1}\Sigma^{\frac{1}{2}}Y_{l}$
and $S_{l}=\sum_{j\ne l}\Sigma^{\frac{1}{2}}Y_{j}Y_{j}'\Sigma^{\frac{1}{2}}$.
\textcolor{black}{We denote the Stieltjes transform of $F^{S}$ by
$s_{n}$,
\begin{equation*}
s_{n}:=s_{n}(z)=\frac{tr((S-zI)^{-1})}{p}.
\end{equation*}
}
Similarly, \textcolor{black}{Stieltjes transform of $F^{S_{0}}$and
Stieltjes transform of $F^{S^*}$ are denoted by $s_{n}^{0}$
and $s^*_{n}$ respectively.}\\
In order to show the convergence of $s_{n}(z)$ it is enough to show
for all $z=iv$ with $v>0$ sufficiently large and the condition,
in Lemma 8, is satisfied. \\
We will start by showing that \textcolor{black}{$s_{n}^{0}$ and $s^*_{n}$
}converge to the same limit.
\begin{align*}
s^*_{n}-s_{n}^{0} &=\frac{1}{p}tr[(S^*-zI)^{-1}-(S_{0}-zI)^{-1}]\\
 & =\frac{1}{p}tr[(S^*-zI)^{-1}(S-S^*)(S-zI)^{-1}]+\frac{1}{p}tr[(S-zI)^{-1}(S_{0}-S)(S_{0}-zI)^{-1}]
 \end{align*}
\begin{align*}
&tr[(S^*-zI)^{-1}(S-S^*)(S-zI)^{-1}] \\
&=tr[(\Sigma\sum_{l=1}^{n}\frac{\tau_{l}}{1+\tau_{l}a_{l}}-zI)^{-1}[\sum_{l=1}^{n}\tau_{l}\Sigma^{\frac{1}{2}}Y_{l}Y_{l}'\Sigma^{\frac{1}{2}}-\Sigma\sum_{l=1}^{n}\frac{\tau_{l}}{1+\tau_{l}a_{l}}](S-zI)^{-1}]\\
\\
&=tr[(\Sigma\sum_{l=1}^{n}\frac{\tau_{l}}{1+\tau_{l}a_{l}}-zI)^{-1}[\sum_{l=1}^{n}\tau_{l}\Sigma^{\frac{1}{2}}Y_{l}Y_{l}'\Sigma^{\frac{1}{2}}]\frac{(S_{l}-zI)^{-1}}{1+\tau_{l}a_{l}}]\\
 & -tr[(\Sigma\sum_{l=1}^{n}\frac{\tau_{l}}{1+\tau_{l}a_{l}}-zI)^{-1}[\Sigma\sum_{l=1}^{n}\frac{\tau_{l}}{1+\tau_{l}a_{l}}](S-zI)^{-1}]\\
 &=\sum_{l=1}^{n}\frac{\tau_{l}}{1+\tau_{l}a_{l}}(Y_{l}'\Sigma^{\frac{1}{2}}(S_{l}-zI)^{-1}(\Sigma\sum_{l=1}^{n}\frac{\tau_{l}}{1+\tau_{l}a_{l}}-zI)^{-1}\Sigma^{\frac{1}{2}}Y_{l}\\
 & -tr[\Sigma^{\frac{1}{2}}(S-zI)^{-1}(\Sigma\sum_{l=1}^{n}\frac{\tau_{l}}{1+\tau_{l}a_{l}}-zI)^{-1}\Sigma^{\frac{1}{2}}])
 \end{align*}\begin{align*}
tr[(S-zI)^{-1}(S_{0}-S)(S_{0}-zI)^{-1}] &\le \sqrt{tr(S_{0}-S)^{2}\times tr([(S-zI)^{-1}(S_{0}-zI)^{-1}]^{2})}
\end{align*}
\textcolor{black}{According to our assumption, $\mathcal{A}_6$, $\sum_{l=1}^{n}\frac{\tau_{l}}{1+\tau_{l}a_{l}}\leq n\ \mathrm{max}(\tau_{l})\leq\kappa$.}\\
It is enough to show that
\begin{align*}
\Xi & = \frac{1}{p}\mathrm{max_{l}}(Y_{l}'\Sigma^{\frac{1}{2}}(S_{l}-zI)^{-1}(\Sigma\sum_{l=1}^{n}\frac{\tau_{l}}{1+\tau_{l}a_{l}}-zI)^{-1}\Sigma^{\frac{1}{2}}Y_{l}\\
 & -tr[\Sigma^{\frac{1}{2}}(S-zI)^{-1}(\Sigma\sum_{l=1}^{n}\frac{\tau_{l}}{1+\tau_{l}a_{l}}-zI)^{-1}\Sigma^{\frac{1}{2}}])\rightarrow0\ \mathrm{a.s.}\end{align*}
 We write $\Xi=I+II+III+IV$ where \begin{align*}I=&\frac{1}{p}\mathrm{max_{l}}(Y_{l}'\Sigma^{\frac{1}{2}}(S_{l}-zI)^{-1}(\Sigma\sum_{l=1}^{n}\frac{\tau_{l}}{1+\tau_{l}a_{l}}-zI)^{-1}\Sigma^{\frac{1}{2}}Y_{l}\\
 & -Y_{l}'\Sigma^{\frac{1}{2}}(S_{l}-zI)^{-1}(\Sigma\sum_{j\neq l}\frac{\tau_{j}}{1+\tau_{j}b_{j}^{l}}-zI)^{-1}\Sigma^{\frac{1}{2}}Y_{l})
\end{align*}\begin{align*}II= &\frac{1}{p}\mathrm{max_{l}}(Y_{l}'\Sigma^{\frac{1}{2}}(S_{l}-zI)^{-1}(\Sigma\sum_{j\neq l}\frac{\tau_{j}}{1+\tau_{j}b_{j}^{l}}-zI)^{-1}\Sigma^{\frac{1}{2}}Y_{l}\\
 & -tr[\Sigma^{\frac{1}{2}}(S_{l}-zI)^{-1}(\Sigma\sum_{j\neq l}\frac{\tau_{j}}{1+\tau_{j}b_{j}^{l}}-zI)^{-1}\Sigma^{\frac{1}{2}}])
 \end{align*}\begin{align*}III=
 & \frac{1}{p}\mathrm{max_{l}}(tr[\Sigma^{\frac{1}{2}}(S_{l}-zI)^{-1}(\Sigma\sum_{j\neq l}\frac{\tau_{j}}{1+\tau_{j}b_{j}^{l}}-zI)^{-1}\Sigma^{\frac{1}{2}}]\\
 & -tr[\Sigma^{\frac{1}{2}}(S_{l}-zI)^{-1}(\Sigma\sum_{l=1}^{n}\frac{\tau_{l}}{1+\tau_{l}a_{l}}-zI)^{-1}\Sigma^{\frac{1}{2}}])
 \end{align*}\begin{align*}IV=
 &\frac{1}{p}\mathrm{max_{l}}(tr[\Sigma^{\frac{1}{2}}(S_{l}-zI)^{-1}(\Sigma\sum_{l=1}^{n}\frac{\tau_{l}}{1+\tau_{l}a_{l}}-zI)^{-1}\Sigma^{\frac{1}{2}}]\\
 & -tr[\Sigma^{\frac{1}{2}}(S-zI)^{-1}(\Sigma\sum_{l=1}^{n}\frac{\tau_{l}}{1+\tau_{l}a_{l}}-zI)^{-1}\Sigma^{\frac{1}{2}}])
\end{align*}
where $b_{j}^{l}=Y_{j}'\Sigma^{\frac{1}{2}}(S_{j,l}-zI)^{-1}\Sigma^{\frac{1}{2}}Y_{j}$ with $S_{j,l}=\sum_{i\neq j,l}\Sigma^{\frac{1}{2}}Y_{i}Y_{i}'\Sigma^{\frac{1}{2}}$.

It is sufficient to show that $\mathrm{I},\mathrm{II},\mathrm{III},\mathrm{IV}\rightarrow0\ \mathrm{a.s.}$

\textcolor{black}{Convergence of $\mathrm{I}$ and $\mathrm{II}$ are
followed by Lemma \ref{lemma1app} and Lemma \ref{lemma2app}. Convergence of $\mathrm{III}$ and
$\mathrm{IV}$ can be proved similarly.}

Now due to Assumption 5, we have
\begin{equation*}
\frac{1}{p}tr(\sum_{l=1}^{n}\frac{\tau_{l}}{1+\tau_{l}a_{l}}\Sigma-zI)^{-1}\rightarrow-\frac{1}{z}\int\frac{1}{1-\frac{\lambda}{z}\int_{0}^{1}\frac{\tau'_{t}}{1+c\tau'_{t}\tilde{s}(z)}dt}dH(\lambda).
\end{equation*}

As \begin{equation*}\frac{1}{p}tr(\sum_{l=1}^{n}\frac{\tau_{l}}{1+\tau_{l}a_{l}}\Sigma-zI)^{-1}-\frac{1}{p}tr(S-zI)^{-1}\rightarrow0,\end{equation*}
we get
\begin{equation}\label{eqn:sz}
s(z)=-\frac{1}{z}\int\frac{1}{1-\frac{\lambda}{z}\int_{0}^{1}\frac{\tau'_{t}}{1+c\tau'_{t}\tilde{s}(z)}dt}dH(\lambda).
\end{equation}

The fact that $\int_{0}^{1}\frac{\tau'_{t}}{1+c\tau'_{t}\tilde{s}(z)}dt\neq0$
and $Real(\int_{0}^{1}\frac{\tau'_{t}}{1+c\tilde{s}(z)}dt)\geq0$
implies $|s(z)|<\frac{1}{|z|}$ and therefore $1+zs(z)\neq0$. So
from Lemma 5, it is clear that $\tilde{s}(z)\neq0$.

Now
\begin{align*}
\int_{0}^{1}\frac{\tau'_{t}}{1+c\tilde{s}(z)\tau'_{t}}dt & =\frac{1}{c\tilde{s}(z)}(1-\int_{0}^{1}\frac{1}{1+c\tilde{s}(z)\tau'_{t}}dt)\\
 & =\frac{1}{c\tilde{s}(z)}(1-(1-c(1+zs(z))))\\
 & =\frac{1+zs(z)}{\tilde{s}(z)}
\end{align*}

Also, equation \ref{eqn:sz} implies
\begin{align*}
1+zs(z) & =-\frac{1}{z}\int_{0}^{1}\frac{\tau'_{t}}{1+c\tau'_{t}\tilde{s}(z)}dt\int_{\lambda}\frac{\lambda}{1-\frac{\lambda}{z}\int_{0}^{1}\frac{\tau'_{t}}{1+c\tau'_{t}\tilde{s}(z)}dt}dH(\lambda)\\
\implies\int_{0}^{1}\frac{\tau'_{t}}{1+c\tilde{s}(z)\tau'_{t}}dt & =-\frac{1}{z\tilde{s}(z)}\int_{0}^{1}\frac{\tau'_{t}}{1+c\tau'_{t}\tilde{s}(z)}dt\int_{\lambda}\frac{\lambda}{1-\frac{\lambda}{z}\int_{0}^{1}\frac{\tau'_{t}}{1+c\tau'_{t}\tilde{s}(z)}dt}dH(\lambda)\\
\implies\tilde{s}(z) & =-\frac{1}{z}\int_{\lambda}\frac{\lambda}{1-\frac{\lambda}{z}\int_{0}^{1}\frac{\tau'_{t}}{1+c\tau'_{t}\tilde{s}(z)}dt}dH(\lambda)
\end{align*}

Due to Lemma 6 and Lemma 7, $-\frac{1}{z}\sum_{l=1}^{n}\frac{\tau_{l}}{1+a_{l}\tau_{l}}\in Q_{1}$
and $\frac{1}{p}tr(S-zI)^{-1}\in Q_{1}$. Same will be true for their
limits. Now to show that $s(z)$ is not unique it is enough to show
that if there are two solutions $s_{1}(z)$ and $s_{2}(z)$ (and therefore
$\tilde{s}_{1}(z),\tilde{s}_{2}(z)$) then $s_{1}=s_{2}$. If possible
let there are two limiting spectral distributions $s_{1}$ and $s_{2}$
such that $s_{1}\neq s_{2}$. To show the contradiction, it is enough
to show $\tilde{s}_{1}(z)=\tilde{s}_{2}(z)$.

Note that,
\begin{equation*}
\int_{0}^{1}\frac{\tau'_{t}}{1+c\tau'_{t}\tilde{s}_{1}(z)}dt-\int_{0}^{1}\frac{\tau'_{t}}{1+c\tau'_{t}\tilde{s}_{2}(z)}dt
=\int_{0}^{1}\frac{c(\tau'_{t})^{2}(\tilde{s}_{2}(z)-\tilde{s}_{1}(z))}{(1+c\tau'_{t}\tilde{s}_{1}(z))(1+c\tau'_{t}\tilde{s}_{2}(z))}dt
\end{equation*}

But,

\begin{equation*}
\tilde{s}_{1}(z)-\tilde{s}_{2}(z)
=-\frac{1}{z}\int_{\lambda\in\mathbb{R}}\Bigg[\frac{\lambda}{1-\frac{\lambda}{z}\int_{0}^{1}\frac{\tau'_{t}}{1+c\tau'_{t}\tilde{s}_{1}(z)}dt}-\frac{\lambda}{1-\frac{\lambda}{z}\int_{0}^{1}\frac{\tau'_{t}}{1+c\tau'_{t}\tilde{s}_{2}(z)}dt}\Bigg]dH(\lambda)
\end{equation*}
On simplification, this gives
\begin{equation}
1=\frac{c}{z^{2}}\int_{0}^{1}\frac{(\tau'_{t})^{2}}{\big(1+c\tau'_{t}\tilde{s}_{1}(z)\big)\big(1+c\tau'_{t}\tilde{s}_{2}(z)\big)}dt\times\int_{0}^{1}\frac{\lambda^{2}}{\big(1-\frac{\lambda}{z}\int_{0}^{1}\frac{\tau'_{t}}{1+c\tau'_{t}\tilde{s}_{1}(z)}dt\big)\big(1-\frac{\lambda}{z}\int_{0}^{1}\frac{\tau'_{t}}{1+c\tau'_{t}\tilde{s}_{2}(z)}dt\big)}dH(\lambda)\label{1}
\end{equation}

As $\tilde{s}_{1},\tilde{s}_{2}\in Q_{1}$ , \begin{equation*}\Bigg|\int_{0}^{1}\frac{(\tau'_{t})^{2}}{\Big(1+c\tau'_{t}\tilde{s}_{1}(z)\Big)\Big(1+c\tau'_{t}\tilde{s}_{2}(z)\Big)}dt\Bigg|\leq\int_{0}^{1}(\tau'_{t})^{2}dt<\infty.\end{equation*}

And $-\frac{1}{z}\int_{0}^{1}\frac{\tau'_{t}}{1+c\tau'_{t}\tilde{s}_{i}(z)}dt$,
$i=1,2$ implies \begin{equation*}\Bigg|\int_{0}^{1}\frac{\lambda^{2}}{\Big(1-\frac{\lambda}{z}\int_{0}^{1}\frac{\tau'_{t}}{1+c\tau'_{t}\tilde{s}_{1}(z)}dt)(1-\frac{\lambda}{z}\int_{0}^{1}\frac{\tau'_{t}}{1+c\tau'_{t}\tilde{s}_{2}(z)}dt\Big)}dH(\lambda)\Bigg|\leq\int\lambda^{2}dH(\lambda)<\infty.\end{equation*}

So for $z=iv$, with $v$ sufficiently large, Eq. \eqref{1} can not be
true. So $s(z)$ is unique.
\end{proof}

The above theorem is true for time-varying instantaneous covolatility process with a little stronger set assumptions. Following \cite{zheng2011estimation}, we can assume that time varying covolatility process can be decomposed in two parts: a time varying \textit{cadlag} process and a symmetric matrix that is not varying with time. Formally, $\sigma_{t}=\gamma_{t}\tilde{\Sigma}^{\frac{1}{2}}$ where $\tilde{\Sigma}^{\frac{1}{2}}$ does not depend on time and as mentioned $\gamma_{t}$ is a time-varying cadlag process. \\
If we assume
this, then $\Delta X_{l}$ has the same distribution as the following-
\begin{equation*}
\Delta X_{l}=\tilde{\Sigma}^{\frac{1}{2}}\left(\begin{array}{c}
\int_{t_{l-1}^{1}}^{t_{l}^{1}}\gamma_{s}dW_{1}\\
\int_{t_{l-1}^{2}}^{t_{l}^{2}}\gamma_{s}dW_{2}\\
.\\
.\\
\int_{t_{l-1}^{p}}^{t_{l}^{p}}\gamma_{s}dW_{p}
\end{array}\right)\stackrel{d}{=}(\tilde{\Sigma}^{\frac{1}{2}}\circ A) D_l Y_{l},
\end{equation*}
where $Y_{l}^{p}=(Y_{l}^{j})_{1\leq j\leq p}$; $Y_{l}^{j}$ 's are iid normal with mean $0$ and variance $1$ for $p=1,2,...$ and $1\leq l\leq n$ and $A=(a_{ij})$ when $a_{ij}=\int_{t_{l-1}^{i}\vee t_{l-1}^{j}}^{t_{l}^{i}\wedge t_{l}^{j}}\gamma_{s}^{2}I((t_{l-1}^{i},t_{l}^{i})\cap I(t_{l-1}^{j},t_{l}^{j}))ds$, and\\ $D_l = diag(\sqrt{\int_{t_{l-1}^{1}}^{t_{l}^{1}}\gamma_{s}^{2}ds},\sqrt{\int_{t_{l-1}^{2}}^{t_{l}^{2}}\gamma_{s}^{2}ds},...,\sqrt{\int_{t_{l-1}^{p}}^{t_{l}^{p}}\gamma_{s}^{2}ds})$. \\
Therefore we are interested in the spectral distribution of
\begin{equation*}
S_{0}=\sum_{k}\sum_{l}\{(\Sigma^{\frac{1}{2}}\circ A)D_{k}x_{k}x_{l}^{T}D_{l}(\Sigma^{\frac{1}{2}}\circ A)\circ I(k,l)\}
\end{equation*}
 as both $S_{0}$ and $\Sigma_{p}^{HY}$ have the same LSD.\\
Suppose now we denote the Limiting spectral distribution of $\tilde{\Sigma}$ as $H$.  Then along with other assumptions of $\mathcal{A}$ we need the following additional assumptions
\begin{enumerate}
 \item[($\mathcal{A}_7$)] $\sigma_{t}=\gamma_{t}\tilde{\Sigma}^{\frac{1}{2}}$ where $\tilde{\Sigma}^{\frac{1}{2}}$ does not depend on time and $\gamma_{t}$ is a time-varying cadlag process.
    \item[($\mathcal{A}_8$)] $\int_{t_{l-1}^{j}}^{t_{l}^{j}}\gamma_{s}^{2}ds$ are independent
of $Y_{l}$.
\end{enumerate}
Then the result holds for time-varying covolatility process. How these conditions impose constraints on the time-varying covolatility or more specifically the cadlag process is a separate but interesting question to study. We write the theorem below
\begin{theorem}
Under the assumptions ($\mathcal{A}_1$-$\mathcal{A}_8$), almost surely, ESD of $S_{0}$ converges in distribution to a probability distribution with Stieltjes transform given by equations (\ref{eqn:thm1_1})-(\ref{eqn:thm1_2}).
\end{theorem}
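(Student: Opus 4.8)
The plan is to derive this theorem from Theorem \ref{thm:main_theorem} by checking that, under $\mathcal{A}_7$--$\mathcal{A}_8$, the time-varying model can be put in exactly the form to which Theorem \ref{thm:main_theorem} applies, with $\tilde{\Sigma}$ playing the role of $\Sigma$. First I would record the distributional identity already derived in the excerpt, $\Delta X_l \stackrel{d}{=} (\tilde{\Sigma}^{1/2}\circ A)\,D_l\,Y_l$, where $A$ and $D_l$ are built from the integrals $\int\gamma_s^2\,ds$ over (intersections of) interarrival intervals and $Y_l$ is a standard Gaussian vector. Because of $\mathcal{A}_8$ the whole array $(A,D_l)_l$ is independent of the $Y_l$'s, so I can condition on the arrival process and on the path of $\gamma$; on that event the analysis is of the same type as in Theorem \ref{thm:main_theorem}, i.e.\ i.i.d.\ Gaussian coordinates ($\mathcal{A}_3$) hit by deterministic matrices. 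Set $S = \sum_l \tau_l\,\tilde{\Sigma}^{1/2} Y_l Y_l^T \tilde{\Sigma}^{1/2}$, where $\tau_l$ is now the $\gamma$-weighted scaling coefficient --- the deterministic constant that, in the spirit of Section \ref{sec:SRCV} and \eqref{eq:SRCV_p_dimension}, replaces the $l$-th Hadamard factor built from the $\gamma^2$-weighted overlap and non-overlap lengths of the $l$-th synchronized return.

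With this $S$, $S_0$, and $\tau_l$, I would verify that all of $\mathcal{A}_1$--$\mathcal{A}_6$ hold, now read for $\tilde{\Sigma}$ and the $\gamma$-weighted $\tau_l$: $\mathcal{A}_1$ and $\mathcal{A}_3$ are unchanged; $\mathcal{A}_2$ is the standing hypothesis $F^{\tilde{\Sigma}}\Rightarrow H$; $\mathcal{A}_6$ follows from $\|\tilde{\Sigma}\|\le Kp^{\delta}$ together with boundedness of the cadlag $\gamma$ on $[0,1]$, since $\|\sigma_t\sigma_t^T\| = \gamma_t^2\|\tilde{\Sigma}\|$ then stays $O(p^{\delta})$; and $\mathcal{A}_4$, $\mathcal{A}_5$ are assumed outright for the redefined objects. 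For the limit $\tau'_t$ in $\mathcal{A}_5$ one uses cadlag regularity of $\gamma$ and the vanishing mesh of the refresh-time grid to get $\int_{t_{l-1}^j}^{t_l^j}\gamma_s^2\,ds = \gamma_{t_{l-1}^j}^2\,(t_l^j-t_{l-1}^j)\,(1+o(1))$ uniformly, so the $\tau'_t$ appearing in \eqref{eqn:thm1_1}--\eqref{eqn:thm1_2} is simply the old limiting process with the factor $\gamma_t^2$ absorbed. Once $\mathcal{A}_1$--$\mathcal{A}_6$ are in place for the pair $(S_0,S)$, the proof of Theorem \ref{thm:main_theorem} runs verbatim: the splitting of $s_n^*-s_n^0$, the Cauchy--Schwarz bound on $\mathrm{tr}[(S-zI)^{-1}(S_0-S)(S_0-zI)^{-1}]$ using \eqref{eq:closeness_condition}--\eqref{eq:closeness_condition2}, the control of the terms $\mathrm{I}$, $\mathrm{II}$, $\mathrm{III}$, $\mathrm{IV}$ by the appendix lemmas, and the fixed-point/uniqueness argument for $s(z)$ and $\tilde{s}(z)$ all go through unchanged, producing the Stieltjes transform \eqref{eqn:thm1_1}--\eqref{eqn:thm1_2} with $H$ the LSD of $\tilde{\Sigma}$.

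The step I expect to be the main obstacle --- and the real content hidden behind ``read $\mathcal{A}_4$ for the new $S_0$'' --- is the reduction of the non-diagonal Hadamard factor $\tilde{\Sigma}^{1/2}\circ A$ to a scalar multiple of $\tilde{\Sigma}^{1/2}$. In contrast to $D_l$, the matrix $A=(a_{ij})$ does not factor through a diagonal matrix: $a_{ij}$ is a $\gamma^2$-weighted length of the intersection $I_l^i\cap I_l^j$, and these intersections genuinely vary across $(i,j)$. The argument I would run is: by cadlag regularity of $\gamma$ and the vanishing mesh, $a_{ij}$ equals $\gamma^2$ evaluated near $l/n$ times the overlap length, up to a uniformly negligible error, so $\tilde{\Sigma}^{1/2}\circ A$ differs from $\tilde{\Sigma}^{1/2}$ Hadamard-multiplied by a matrix of overlap lengths only by a small perturbation; the resulting contributions are then estimated by a Schur-product operator-norm inequality, and it is precisely here that the sub-polynomial bound $\|\tilde{\Sigma}\|\le Kp^{\delta}$ with $\delta<1/6$ in $\mathcal{A}_6$ is needed to keep the two error terms of the size demanded by \eqref{eq:closeness_condition} (namely $o(p)$) and \eqref{eq:closeness_condition2} (namely $O(p)$). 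Making all of this uniform in $p$ and almost sure over the arrival process and the $\gamma$-path is the delicate part; everything after that is a verbatim repetition of the proof of Theorem \ref{thm:main_theorem}.
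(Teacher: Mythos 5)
Your reduction is exactly the paper's argument: it records the distributional identity $\Delta X_l \stackrel{d}{=} (\tilde{\Sigma}^{1/2}\circ A)D_l Y_l$, reads assumptions $\mathcal{A}_1$--$\mathcal{A}_6$ for $\tilde{\Sigma}$ and the new $S_0$ (with $H$ now the LSD of $\tilde{\Sigma}$, and $\mathcal{A}_8$ supplying the independence needed to condition on the $\gamma$-weights), and then lets the proof of Theorem~\ref{thm:main_theorem} run verbatim, which is all the paper does. The ``main obstacle'' you single out---actually deriving the $\mathcal{A}_4$-type closeness of $S_0$ to $S$ from the Hadamard factor $\tilde{\Sigma}^{1/2}\circ A$---is not carried out in the paper either: it is assumed outright via $\mathcal{A}_4$--$\mathcal{A}_5$ read for the new $S_0$, and the paper explicitly leaves as a separate open question what constraints these conditions impose on the cadlag process $\gamma_t$.
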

\section{Spectral Analysis of Hayashi's Estimator when $p/n \rightarrow 0$}\label{sec:p/n_0}
Due to the fact that Hayashi's estimator is unbiased we will be concerned
about the following matrix (let us call $\tilde{S}$),
\begin{equation*}
\tilde{S}=\sqrt{\frac{n}{p}}(\frac{S_{0}}{n}-\frac{\Sigma_{p}}{n}).
\end{equation*}
 Like the previous chapter here also we have to take some assumptions,
we will call it \textbf{$\mathcal{B}$. }
\begin{enumerate}
\item[($\mathcal{B}$1)]: $\mathrm{lim}~\frac{p}{n}\rightarrow0$ as $p\rightarrow\infty$
and $n\rightarrow\infty$.

\item[($\mathcal{B}$2)]: $Z_{ij}'s$ ($1\leq i\leq p$, $1\leq j\leq n$) are iid Gaussian
random variables with $E(Z_{ij})=0$, $E|Z_{ij}|^{2}=1$.

\item[($\mathcal{B}$3)]: $F^{\Sigma_{p}}\stackrel{L}{\rightarrow}F^{H}(\neq\delta_{\{0\}})$
as $p\rightarrow\infty$ where $F^{H}$ is a distribution function.

\item[($\mathcal{B}$4)] : Define, $\bar{S}= \sqrt{\frac{n}{p}}(\frac{S}{n}-\frac{\Sigma_{p}}{n})$. Then $\bar{S}-\tilde{S}$, $(\tilde{S}-zI)^{-1}$ and $(\bar{S}-zI)^{-1}$ are positive definite, $tr(S_0-S)=O(p)$, $tr(\bar{S}-zI)^{-1}=O(1)$, $tr(\tilde{S}-zI)^{-1}=O(1)$ and $(1-n\sum_{l=1}^n \tau_l)tr(\Sigma)=O(p)$.
\item[($\mathcal{B}$5)] : $||\mathrm{diag}(\tau_{1},\tau_{2},...,\tau_{n})||$ is bounded
above.
\item[($\mathcal{B}$6)] : $\frac{1}{n}(\sum_{l=1}^{n}\tau_{l})\rightarrow\tau>0$ and $\frac{1}{n}(\sum_{l=1}^{n}\tau_{l}^{2})\rightarrow\bar{\tau}>0$
as $n\rightarrow\infty$
\end{enumerate}
Now we are ready to state the main theorem.
\begin{theorem}\label{thch5}
If the above assumptions ($\mathcal{B}$) are true
then the empirical spectral distribution of $\sqrt{\frac{n}{p}}(\frac{1}{n}S_{0}-\frac{1}{n}\Sigma_{p})$
almost surely converges weakly to a nonrandom distribution $F$ as
$n\rightarrow\infty$, whose Stieltjes transform $s(z)$ is determined
by the following system of equations:
\begin{equation*}
\begin{cases}
s(z)= & -\int\frac{dH(\lambda)}{z+\bar{\tau}\lambda\beta(z)}\\
\beta(z)= & -\int\frac{\lambda dH(\lambda)}{z+\bar{\tau}\lambda\beta(z)}
\end{cases}
\end{equation*}
for any $z\in\mathbb{C}_{+}$.
\end{theorem}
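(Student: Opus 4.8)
The plan is to reprise the two-stage reduction used for Theorem~\ref{thm:main_theorem}, but at the fluctuation scale $\sqrt{n/p}$, and then to identify the limit with the spectral distribution of a deformed Wigner matrix $\Sigma_p^{1/2}W\Sigma_p^{1/2}$, where $W$ is a semicircular matrix of variance $\bar\tau$. The first step is to replace $\tilde S$ by the surrogate $\bar S=\sqrt{n/p}\,(\tfrac1n S-\tfrac1n\Sigma_p)$, with $S=\sum_l\tau_l\Sigma_p^{1/2}Y_lY_l^{T}\Sigma_p^{1/2}$. From the resolvent identity $(\tilde S-zI)^{-1}-(\bar S-zI)^{-1}=(\tilde S-zI)^{-1}(\bar S-\tilde S)(\bar S-zI)^{-1}$ with $\bar S-\tilde S=\tfrac1{\sqrt{np}}(S-S_0)$, taking the normalized trace and using the trace-norm bound $|\mathrm{tr}(ABC)|\le\|A\|\,\|C\|\,\|B\|_1$, the positive-definiteness of $\bar S-\tilde S$ (whence $\|S-S_0\|_1=|\mathrm{tr}(S_0-S)|=O(p)$), the resolvent bounds in $\mathcal B4$, and $\|(\,\cdot\,-zI)^{-1}\|\le 1/v$ for $z=iv$ with $v$ large, one gets $s_n^{\tilde S}(z)-s_n^{\bar S}(z)\to 0$ a.s. Hence $\tilde S$ and $\bar S$ have the same LSD provided one exists.

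Next, since $S-\Sigma_p=\Sigma_p^{1/2}\big(\sum_l\tau_l Y_lY_l^{T}-I\big)\Sigma_p^{1/2}$, the matrix $\bar S$ equals, up to an additive term proportional to $\Sigma_p$ that the last identity of $\mathcal B4$ (namely $(1-n\sum_l\tau_l)\mathrm{tr}(\Sigma_p)=O(p)$) forces to be spectrally negligible, a matrix of the form $\Sigma_p^{1/2}G_n\Sigma_p^{1/2}$, where $G_n$ is the centered and suitably rescaled $\tau_l$-weighted sample-covariance matrix of the $Y_l$'s. When $p/n\to 0$ each entry of $G_n$ is a sum of $n$ independent centered contributions with total variance $\approx\bar\tau(1+\delta_{ij})/p$, with $\bar\tau=\lim\tfrac1n\sum_l\tau_l^2$ from $\mathcal B6$; exploiting the Gaussianity of $\mathcal B2$ (Gaussian integration by parts, or the moment method) one shows that the within-row dependence distinguishing a weighted Wishart fluctuation from a genuine Wigner matrix enters only at lower order, so the ESD of $G_n$ converges to the semicircle law with parameter $\bar\tau$. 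This is the $\sqrt{n/p}$-analogue of the Bai--Yin phenomenon $\sqrt{n/p}(\tfrac1n XX^{T}-I)\Rightarrow$ semicircle.

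Finally I would close the self-consistent equation for the sandwiched matrix. With $R(z)=(\bar S-zI)^{-1}$, expand by the Sherman--Morrison leave-one-out formula on the rank-one pieces, take conditional expectations, and use Gaussian concentration together with the boundedness of $\mathrm{diag}(\tau_1,\dots,\tau_n)$ ($\mathcal B5$) to show that $s_n(z)=\tfrac1p\mathrm{tr}\,R(z)$ and $\beta_n(z)=\tfrac1p\mathrm{tr}[\Sigma_p R(z)]$ concentrate, with limits satisfying $s(z)=-\int\tfrac{dH(\lambda)}{z+\bar\tau\lambda\beta(z)}$ and $\beta(z)=-\int\tfrac{\lambda\,dH(\lambda)}{z+\bar\tau\lambda\beta(z)}$; a contraction/monotonicity argument, as in the uniqueness part of Theorem~\ref{thm:main_theorem}, shows this system has a unique solution in $\mathbb{C}_{+}$ for $z=iv$ with $v$ large, which together with Theorems~B.9--B.10 pins down $F$. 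The main obstacle is the content of the middle step: establishing that, at scale $\sqrt{n/p}$ with $p/n\to 0$, the weighted sample-covariance fluctuation genuinely behaves like a Wigner matrix (i.e.\ the entrywise dependence washes out), and then rigorously closing the resolvent system; obtaining the recentering at exactly the rate permitted by $\mathcal B4$—so that subtracting $\Sigma_p/n$ is the correct normalization—is the delicate point that makes the reduction in the first step legitimate.
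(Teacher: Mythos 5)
Your opening reduction --- passing from $\tilde S$ to $\bar S$ via the resolvent identity and the bounds in $\mathcal{B}4$, so that the two Stieltjes transforms have the same limit --- is exactly how the paper begins, and the limiting system you write down agrees with the statement. The genuine gap is everything in between, which is the entire technical content of the theorem and which your sketch both defers and, as described, would not close. First, proving that the inner matrix $G_n$ has a semicircle LSD with parameter $\bar\tau$ does not by itself determine the spectrum of the sandwich $\Sigma_p^{1/2}G_n\Sigma_p^{1/2}$: the ESD of a sandwiched matrix is not a function of the two individual ESDs, so you would need either asymptotic freeness of $G_n$ from $\Sigma_p$ (plausible under the Gaussian assumption $\mathcal{B}2$, but nowhere argued) or a direct resolvent analysis of $\bar S$ itself --- which is precisely the part you postpone to step three. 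Second, the Sherman--Morrison expansion ``on the rank-one pieces'' reads as removing the $n$ sample terms $\tau_l\Sigma_p^{1/2}Y_lY_l^{T}\Sigma_p^{1/2}$ one at a time; at the scale $\sqrt{n/p}$ with $p/n\to 0$ each such piece is negligible and that expansion degenerates --- it does not produce the denominator $z+\bar\tau\lambda\beta(z)$. The paper, following the Bai--Yin-type treatment in \cite{wang2014limiting}, instead works in the eigenbasis of $\Sigma_p$, writes $M=\sqrt{\tfrac{n}{p}}\,(WW^{*}-(\sum_l\tau_l)\Lambda)$, and removes one \emph{coordinate} $k\in\{1,\dots,p\}$ (one row of $W$) at a time (Remarks 1--4); this yields $\bar s_n(z)=-\tfrac1p\sum_k\bigl(z-t_{kk}+h_k^{*}(\bar M_k-zI)^{-1}h_k\bigr)^{-1}$ with $h_k^{*}(\bar M_k-zI)^{-1}h_k\approx\bar\tau\,\lambda_k\,\beta_n(z)$, and it is this coupled system in $\bigl(\bar s_n(z),\beta_n(z)\bigr)$, together with explicit control of the error terms $t_{kk},\epsilon_{1k},\epsilon_{2k}$, that produces the stated pair of equations.

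Beyond the fixed-point derivation, the almost-sure statement also requires an argument you only gesture at: the paper first shows the \emph{expected} transforms satisfy the system up to $o(1)$, then uses McDiarmid's inequality (Lemma 9) and Borel--Cantelli to get $\bar s_n(z)-E\bar s_n(z)\to 0$ and $\beta_n(z)-E\beta_n(z)\to 0$ a.s., and concludes by dominated convergence; invoking ``Gaussian concentration'' would have to be made precise in the same bounded-differences form (perturbing one column changes $\bar s_n$ by $O(1/(pv))$). You correctly identify the middle step as the main obstacle, but since that step, the closing of the resolvent system, and the concentration argument together constitute essentially the whole proof, what you have is a plausible program rather than a proof, and the specific mechanism proposed for its core differs from the one that actually works in this regime.
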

Proof of this theorem will be in the similar path as in \cite{wang2014limiting}. Before proving the theorem we will define some quantities and make
some observations.
Let $\Sigma=U^{*}\Lambda U$ be the spectral decomposition of $\Sigma$.
Now define,
\begin{equation*}
W=U\Sigma^{\frac{1}{2}}\{\sum_{l}\tau_{l}^{\frac{1}{2}}Z_{l}e_{l}^{*}\}.
\end{equation*}
Let $w_{k}$ be the $k$th row of $W$ and $W_{k}$ be the matrix
after deleting the $k$th row. Define,
\begin{align*}
h_{k}= & \sqrt{\frac{n}{p}}W_{k}w_{k},\\
M= & \sqrt{\frac{n}{p}}(WW^{*}-(\sum_{l}\tau_{l})\Lambda),\\
M_{k}= & \sqrt{\frac{n}{p}}(W_{k}W^{*}-(\sum_{l}\tau_{l})\Lambda_{k}),\\
\bar{M}_{k}= & \sqrt{\frac{n}{p}}(W_{k}W_{k}^{*}-(\sum_{l}\tau_{l})\Lambda_{k}),\\
t_{kk}= & \sqrt{\frac{n}{p}}(w_{k}w_{k}^{*}-(\sum_{l}\tau_{l})\lambda_{k}),
\end{align*}
where $\Lambda_{k}$ is the matrix obtained by deleting $k$th diagonal
element of $\Lambda$. \\
Now we will make some remarks. Justifications of the remarks are given
in the appendix.\\
\begin{enumerate}
    \item[Remark 1:] $M=\sum_{k=1}^{p}e_{k}(h_{k}+t_{kk}e_{k})^{*}$ and $M=M_{k}+e_{k}(h_{k}+t_{kk}e_{k})^{*}$.
    \item[Remark 2:] $tr(I+z(M-zI)^{-1})=\sum_{k=1}^{p}\big[(h_{k}+t_{kk}e_{k})^{*}(M_{k}-zI)^{-1}e_{k}\big]/\big[1+(h_{k}+t_{kk}e_{k})^{*}(M_{k}-zI)^{-1}e_{k}\big].$
    \item[Remark 3:] $(h_{k}+t_{kk}e_{k})^{*}(M_{k}-zI)^{-1}e_{k}=h_{k}^{*}\bar{(M}_{k}-zI)^{-1}h_{k}-t_{kk}/z.$
    \item[Remark 4:] $E(h_{k}^{*}\bar{(M}_{k}-zI)^{-1}h_{k}-(\sum_{l=1}^{n}\tau_{l}^{2})\lambda_{k}tr(\bar{(M}_{k}-zI)^{-1}\Lambda_{k})/np)=0.$
\end{enumerate}

Now we are ready to prove Theorem \ref{thch5}.
\begin{proof}[Proof of Theorem \ref{thch5}]
Suppose the Stiletjes transform of $\bar{S}_{n}$ is $\bar{s}_{n}(z)$. So $\bar{s}_{n}(z)=\frac{1}{p}\mathrm{tr}(\bar{S}_{n}-zI)^{-1}$.\\
Observe that, $z^{-1}I+(\bar{S}_{n}-zI)^{-1} =z^{-1}\bar{S}_{n}(\bar{S}_{n}-zI)^{-1}.$
This implies the following:
\begin{align}
(\bar{S}_{n}-zI)^{-1}= & z^{-1}\bar{S}_{n}(\bar{S}_{n}-zI)^{-1}-z^{-1}I\nonumber \\
\implies \hspace{.2cm} \bar{s}_{n}(z)= & -z^{-1}+\frac{z^{-1}}{p}tr(\bar{S}_{n}(\bar{S}_{n}-zI)^{-1}).\label{eq:1-1}
\end{align}
Notice that,
\begin{align*}
\bar{s}_{n}(z)-\tilde{s}_{n}(z) & =\frac{1}{p}[tr(\bar{S}_{n}-zI)^{-1}-tr(\tilde{S}_{n}-zI)^{-1}]\\
 & =\frac{1}{p}tr[(\tilde{S}_{n}-\bar{S}_{n})(\tilde{S}_{n}-zI)^{-1}(\bar{S}_{n}-zI)^{-1}]\\
 & \leq \frac{1}{p} tr(\tilde{S}_{n}-\bar{S}_{n})tr(\tilde{S}_{n}-zI)^{-1}tr(\bar{S}_{n}-zI)^{-1}\\
\end{align*}
Now,
\begin{align*}
tr(\tilde{S}_{n}-\bar{S}_{n}) & = tr\Big[\sqrt{\frac{n}{p}}\Big\{(S_0/n - S/n) - (\frac{1}{n}\Sigma - (\sum_{l=1}^n \tau_l)\Sigma)\Big\}\Big]\\
& = \sqrt{\frac{1}{np}}tr(S_0 - S) - \sqrt{\frac{1}{np}} \Big[1 - n(\sum_{l=1}^n \tau_l)tr(\Sigma)\Big]\\
& = o(p)~\text{a.s.}
\end{align*}
The last line of the above derivation is a consequence of Assumption
$\mathcal{B} 4$. Moreover as $tr(\bar{S}-zI)^{-1}=O(1)$ and $tr(\tilde{S}-zI)^{-1}=O(1)$, we have $\bar{s}_{n}(z)-\tilde{s}_{n}(z)\rightarrow 0, $ a.s.\\
This means that we can derive valuable
information about $\tilde{s}_{n}(z)$ by studying the spectral distribution
of $\bar{S}_{n}(z)$. \\
Note that, according to our definitions
\begin{align}
tr(\bar{S}_{n}-zI)^{-1} & =tr[\sqrt{\frac{n}{p}}(\Sigma^{\frac{1}{2}}\{\sum_{l}\tau_{l}^{\frac{1}{2}}Y_{l}e_{l}^{*}\})(\Sigma^{\frac{1}{2}}\{\sum_{l}\tau_{l}^{\frac{1}{2}}Y_{l}e_{l}^{*}\}^{*}-(\sum_{l}\tau_{l})\Sigma))-zI]^{-1}\nonumber \\
 & =tr[\sqrt{\frac{n}{p}}U^{*}(U\Sigma^{\frac{1}{2}}\{\sum_{l}\tau_{l}^{\frac{1}{2}}Y_{l}e_{l}^{*}\})(\{\sum_{l}\tau_{l}^{\frac{1}{2}}Y_{l}e_{l}^{*}\}^{*}\Sigma^{\frac{1}{2}}U^{*}U-(\sum_{l}\tau_{l})U^{*}\Lambda U))-zI]^{-1}\nonumber \\
 & =tr[\sqrt{\frac{n}{p}}(WW^{*}-(\sum_{l}\tau_{l})\Lambda)-zI]^{-1}.\label{eq:2-1}
\end{align}
Now from Eq.~\eqref{eq:1-1} we have,
\begin{align*}
\bar{s}_{n}(z)
 & =-z^{-1}+\frac{z^{-1}}{p}tr(\sqrt{\frac{n}{p}}(WW^{*}-(\sum_{l}\tau_{l})\Lambda)(\sqrt{\frac{n}{p}}((WW^{*}-(\sum_{l}\tau_{l})\Lambda)-zI)^{-1})\\
 & =-z^{-1}+\frac{z^{-1}}{p}tr\{M(M-zI)^{-1})\\
 & =-\frac{z^{-1}}{p}\sum_{k=1}^{p}\frac{1}{1+(h_{k}+t_{kk}e_{k})^{*}(M_{k}-zI)^{-1}e_{k}}\\
 & =-\frac{1}{p}\sum_{k=1}^{p}\frac{1}{z-t_{kk}+\frac{n\lambda_{k}(\sum_{l}\tau_{n}^{2})}{p}tr(\Lambda_{k}\bar{(M}_{k}-zI)^{-1})+\epsilon_{1k}},
\end{align*}

Define $\beta_{n}=\frac{1}{p}tr(\bar{S}_{n}-zI)^{-1}\Sigma$, then
similar derivation will lead to
\begin{equation}
\beta_{n}(z)=-\frac{1}{p}\sum_{k=1}^{p}\frac{\lambda_{k}}{z+h_{k}^{*}\bar{(M}_{k}-zI)^{-1}h_{k}-t_{kk}}.\label{eq:3}
\end{equation}
But again,
\begin{align*}
\beta_{n}(z) & =\frac{1}{p}tr(\bar{S}_{n}-zI)\Sigma)^{-1}\\
\implies\beta_{n}(z) & =\frac{1}{p}tr(\bar{S}_{n}-zI)U^{*}\Lambda U)^{-1}\\
\implies\beta_{n}(z) & =\frac{1}{p}tr(M-zI)^{-1}\Lambda)\qquad\mathrm{by~Eq.}\ \eqref{eq:2-1}\\
\implies\lambda_{k}n(\sum_{l=1}^{n}\tau_{l}^{2})\beta_{n}(z) & =\lambda_{k}n(\sum_{l=1}^{n}\tau_{l}^{2})\frac{1}{p}tr(\Lambda(M-zI)^{-1}).
\end{align*}
By Remark 4, we know that $h_{k}^{*}\bar{(M}_{k}-zI)^{-1}h_{k}=n(\sum_{l=1}^{n}\tau_{l}^{2})\lambda_{k}tr(\Lambda_{k}\bar{(M}_{k}-zI)^{-1})/p+\epsilon_{1k}$.
Define, $\epsilon_{2k}=\lambda_{k}n(\sum_{l=1}^{n}\tau_{l}^{2})E(\beta_{n}(z))-h_{k}^{*}\bar{(M}_{k}-zI)^{-1}h_{k}$.
Therefore from Eq.~\eqref{eq:3},
\begin{align*}
\beta_{n}(z)  & =-\frac{1}{p}\sum_{k=1}^{p}\frac{\lambda_{k}}{z+\lambda_{k}n(\sum_{l=1}^{n}\tau_{l}^{2})E(\beta_{n}(z))}+\\
 & \{-\frac{1}{p}\sum_{k=1}^{p}\frac{\lambda_{k}}{z+\lambda_{k}n(\sum_{l=1}^{n}\tau_{l}^{2})E(\beta_{n}(z))-t_{kk}-\epsilon_{2k}}+\frac{1}{p}\sum_{k=1}^{p}\frac{\lambda_{k}}{z+\lambda_{k}n(\sum_{l=1}^{n}\tau_{l}^{2})E(\beta_{n}(z))}\}.
\end{align*}
Let us consider the second part of the right hand side. This equals
\begin{align*}
 \frac{1}{p}\sum_{k=1}^{p}\frac{-\lambda_{k}(t_{kk}+\epsilon_{2k})}{(z+\lambda_{k}n(\sum_{l=1}^{n}\tau_{l}^{2})E(\beta_{n}(z)))^{2}}\\
 & +\frac{1}{p}\sum_{k=1}^{p}\frac{-\lambda_{k}(t_{kk}+\epsilon_{2k})}{(z+\lambda_{k}n(\sum_{l=1}^{n}\tau_{l}^{2})E(\beta_{n}(z)))(z+\lambda_{k}n(\sum_{l=1}^{n}\tau_{l}^{2})E(\beta_{n}(z))-t_{kk}-\epsilon_{2k})}\\
 & -\frac{1}{p}\sum_{k=1}^{p}\frac{-\lambda_{k}(t_{kk}+\epsilon_{2k})}{(z+\lambda_{k}n(\sum_{l=1}^{n}\tau_{l}^{2})E(\beta_{n}(z)))^{2}}\\
 & =-\frac{1}{p}\sum_{k=1}^{p}\frac{\lambda_{k}(t_{kk}+\epsilon_{2k})}{(z+\lambda_{k}n(\sum_{l=1}^{n}\tau_{l}^{2})E(\beta_{n}(z)))^{2}}\\
 & +\frac{1}{p}\sum_{k=1}^{p}\lambda_{k}(\epsilon_{2k}+t_{kk})\frac{-(t_{kk}+\epsilon_{2k})}{(z+\lambda_{k}n(\sum_{l=1}^{n}\tau_{l}^{2})E(\beta_{n}(z)))^{2}(z+\lambda_{k}n(\sum_{l=1}^{n}\tau_{l}^{2})E(\beta_{n}(z))-t_{kk}-\epsilon_{2k})}\\
 & =\epsilon_{3k}+\epsilon_{4k}.
\end{align*}
We try to argue that $|E(\epsilon_{2k})|\rightarrow0$ as $p\rightarrow\infty$,
\begin{align*}
|E(\epsilon_{2k})| & =|E(\lambda_{k}\frac{(\sum_{l=1}^{n}\tau_{l}^{2})}{n}E(\beta_{n}(z))-h_{k}^{*}\bar{M}_{k}(z)^{-1}h_{k})|\\
 & =|\lambda_{k}\frac{\sum_{l=1}^{n}\tau_{l}^{2}}{np}E\{tr((M(z)^{-1}\Lambda)-tr(\bar{M}_{k}(z)^{-1}(\Lambda-\lambda_{k}e_{k}e_{k}^{*}))\}-\epsilon_{1k}|\\
 & =|\lambda_{k}\frac{\sum_{l=1}^{n}\tau_{l}^{2}}{np}E\{tr((M(z)^{-1}-\bar{M}_{k}(z)^{-1})\Lambda)-\frac{\lambda_{k}}{z})\}-\epsilon_{1k}|\\
 & \leq\lambda_{k}\frac{\sum_{l=1}^{n}\tau_{l}^{2}}{np}|Etr((M(z)^{-1}-\bar{M}_{k}(z)^{-1})\Lambda)|+|\frac{\lambda_{k}^{2}(\sum_{l=1}^{n}\tau_{l}^{2})}{znp}|+\lambda_{k}\frac{\sum_{l=1}^{n}\tau_{l}^{2}}{np}|\epsilon_{1k}|.
\end{align*}
It can be shown that the above expression is $o(p)$ (see \cite{wang2014limiting}).
\begin{align*}
E(|\epsilon_{2k}+t_{kk}|^{2}) & =E|\epsilon_{2k}+t_{kk}-E(\epsilon_{2k})|^{2}+|E(\epsilon_{2k})|^{2}\\ & =E|-h_{k}^{*}\bar{M}_{k}(z)^{-1}h_{k}+t_{kk}+(\sum_{l=1}^{n}\tau_{l}^{2})\lambda_{k}trE(\Lambda_{k}\bar{M}_{k}(z)^{-1})/np|^{2}+|E(\epsilon_{2k})|^{2}\\
 & =E|-h_{k}^{*}\bar{M}_{k}(z)^{-1}h_{k}+t_{kk}+(\sum_{l=1}^{n}\tau_{l}^{2})\lambda_{k}trE(\Lambda_{k}\bar{M}_{k}(z)^{-1})/np|^{2}\\
 & +E|(\sum_{l=1}^{n}\tau_{l}^{2})\lambda_{k}trE(\Lambda_{k}\bar{M}_{k}(z)^{-1})/np-(\sum_{l=1}^{n}\tau_{l}^{2})\lambda_{k}tr(\Lambda_{k}\bar{M}_{k}(z)^{-1})/np|^{2}+|E(\epsilon_{2k})|^{2}.
\end{align*}
It can be shown that all three terms is $o(p)$. \\
We further observe that
\[
|z+\lambda_{k}\frac{(\sum_{l=1}^{n}\tau_{l}^{2})}{n}E(\beta_{n}(z))|\geq Im(z+\lambda_{k}\frac{(\sum_{l=1}^{n}\tau_{l}^{2})}{n}E(\beta_{n}(z)))\geq Im(z)=v>0
\]
and
\[
|z+\lambda_{k}\frac{(\sum_{l=1}^{n}\tau_{l}^{2})}{n}E(\beta_{n}(z))-t_{kk}-\epsilon_{2k}|\geq|z+h_{k}^{*}\bar{M}_{k}^{-1}h_{k}-t_{kk}|\geq v.
\]
This implies $\epsilon_{3k}\rightarrow0$ a.s. \\
With this we showed that
\[
E(\beta_{n}(z))=-\frac{1}{p}\sum_{k=1}^{p}\frac{\lambda_{k}}{z+\lambda_{k}\frac{(\sum_{l=1}^{n}\tau_{l}^{2})}{n}E(\beta_{n}(z))}+d_{n},
\]
where $d_{n}\rightarrow0$ as $n\rightarrow\infty$. \\
If we replace $(x_{1i},...,x_{pi})$ by $(x'_{1i},...,x'_{pi})$ and
call the resulting new $\bar{s}_{n}$ by $s'_{n}$ (similarly $\beta_{n}$
by $\beta'_{n}(z)$) then it is easy to show that
\[
|\bar{s}_{n}(z)-s'_{n}(z)|\leq\frac{c_{1}}{pv}
\]
and
\[
|\bar{\beta}_{n}(z)-\beta'_{n}(z)|\leq\frac{c_{1}}{pv}.
\]
Hence by Lemma 9,
\[
\sum_{n}P(|\bar{s}_{n}(z)-s'_{n}(z)|>\delta)<\infty
\]
and
\[
\sum_{n}P(|\bar{\beta}_{n}(z)-\beta'_{n}(z)|>\delta)<\infty.
\]
By Borel-Cantelli lemma, $\bar{s}_{n}(z)-E(\bar{s}_{n})\rightarrow0$
a.s. and $\beta_{n}(z)-E(\beta_{n}(z))\rightarrow0$ a.s. and thus
we also have $\tilde{s}_{n}(z)-E(\bar{s}_{n})\rightarrow0$.
Now as $\text{max}_{k}|\lambda_{k}|\leq a_{0}$ and $|z+\lambda_{k}n(\sum_{l=1}^{n}\tau_{l}^{2})E(\beta_{n}(z))|\geq v$
is bounded, $E(\beta_{n}(z))$ is bounded. So by dominated
convergence theorem $E(\beta_{n}(z))$ converges to $\beta(z)$. But
as $\beta_{n}(z)-E(\beta_{n}(z))\rightarrow0$ a.s., we have $\beta_{n}(z)\rightarrow\beta(z)$
a.s. Similarly for $s_{n}(z)\rightarrow s(z)$ a.s. So $s_{n}(z)$
can be evaluated by the following two equations,
\[
s(z)=-\int\frac{dH^{\Sigma}(\lambda)}{z+n(\sum_{l}\tau_{l}^{2})\lambda E(\beta(z))}
\]
\\
\[
\beta(z)=-\int\frac{\lambda dH^{\Sigma}(\lambda)}{z+n(\sum_{l}\tau_{l}^{2})\lambda E(\beta(z))}.
\]
\end{proof}

\section{Data Analysis}\label{sec:data analysis}
\subsection{Simulated Data Analysis}\label{sec:Simulation}
Although we are working with a high-dimensional set up, the computational
complexity of the Hayashi's estimator is worth paying attention to.
The fact that the time to compute the Hayashi's estimator is much greater
compared to the Realized covariance estimator, restricts us to a moderate
dimension and sample size in the simulation study.
We simulated 30 stocks with each 500 observations where the spot volatility
matrix is taken to be $I$. The empirical cdf of Hayashi-Yoshida estimator and the cdf of integrated covariance matrix are shown in Fig. \ref{fig:theoretical_empirical_cdf}. The red line is obtained by generating data from the same underlying process on sufficiently fine synchronous grid and calculating the realized covariance for such data. It serves as the proxy of the spectrum of Integrated covariance matrix.  One limitation of the simulation study is that we have taken same number of observations for each stock. This is of course not a practical assumption but as discussed in Sec.\ref{sec:HY_n_refresh_time} it corresponds to the refresh time sampling.

We create the nonsynchronous data using the following algorithm. \\

\begin{center}
\fbox{\begin{minipage}{0.99\textwidth}
\textbf{Algorithm}
\begin{enumerate}
    \item Initialise $p$ (the number of stocks), $n_{1}, n_{2}, n_{3},..., n_{p}$ (the number of observations in each stock) and $D$ (the interval $[0,D]$ represents a day).
    \item Draw a sample of size $n=\sum n_{i}$ from a uniform distribution on $[0,D]$ where $[0,D]$ represents a day. Denote it by $T=\{t_{1},...,t_{n}\}$. Assume $t_{0}=0$.
    \item Generate $n$ random vectors ($\mathbf{x}_{j}, j=1(1)n$) from a $p$-dimensional distribution. Denote them by $\mathbf{x}_{j}, j=1(1)n$. Scale the $\mathbf{x}_{j}$'s appropriately to represent the increment in returns for the interval ($t_{j}-t_{j-1}$): $\mathbf{x}_{j}=\thinspace\mathbf{x}_{j}\sqrt{t_{j}-t_{j-1}}$.
    \item Define $\mathbf{y}_{k}=\sum_{1}^{k}\mathbf{x_{j}} \quad\forall k\in \{1,2,...,n\}$
    \item for $i=1(1)p$

        \quad From $T$ take a random subset of size $n_{i}$. Denote it by $T_{i}$.

        \quad Data for $i$th stock is \{($x_{i,j}, t_{j}), j=1(1)n_{i}$\}

        \quad Update $T=T-T_{i}$ i.e. removing the time points chose for $T_{i}$ from $T$.

\end{enumerate}
\end{minipage} }
\end{center}

\begin{figure}
\centering
\includegraphics[scale=0.6]{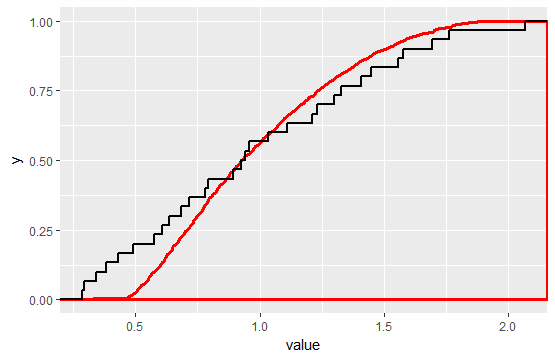}
\caption{Plot for empirical cumulative distribution function of Hayashi-Yoshida estimator and
cumulative distribution functions of Integrated covariance matrix where the spot volatility matrix is taken to be I for 30 stocks and 500 observations.}\label{fig:theoretical_empirical_cdf}
\end{figure}

We want to see the effect of $\gamma$, that is the ratio of $p/n$ on the empirical spectral distribution. So we repeat the simulation with $p=30$ and $n=60$. The result is given in the left panel of Fig.~\ref{fig:two simulations}

Next, we create a similar plot when the stocks are dependent. We have taken a 30-dimensional positive semi-definite covariance matrix with $p = 30$ stocks. As nontrivial high-dimensional covariance matrix is difficult to prefix, we take the $30\times30$ principal sub-matrix of the estimated covariance matrix from real data (Sec.~\ref{realdata}). The right panel of Fig.~\ref{fig:two simulations} shows the distribution of the eigenvalues. We can see that for general covariance matrix Hayashi-Yoshida estimator may not be positive definite.

\begin{figure}
     \centering
     \includegraphics[scale=0.4]{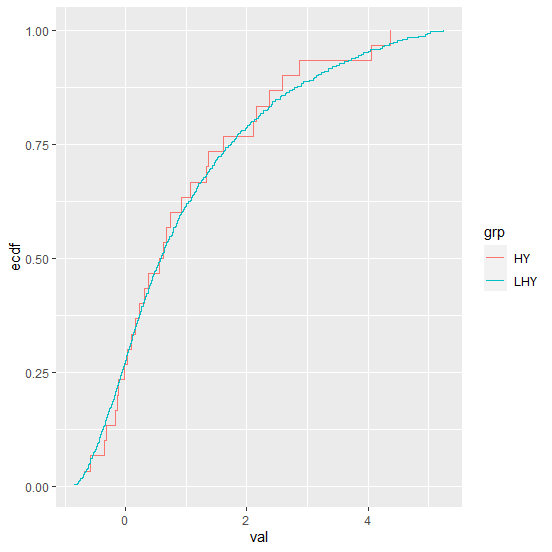}\includegraphics[scale=0.4]{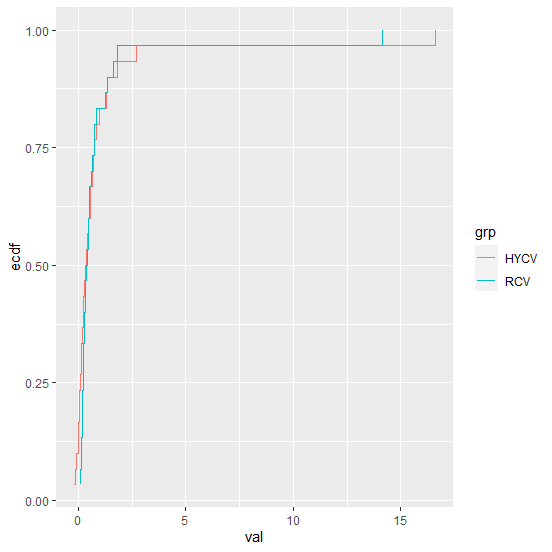}
        \caption{Plot for empirical cumulative distribution function of Hayashi-Yoshida estimator and
cumulative distribution functions of Integrated covariance matrix. Left: The spot volatility matrix is taken to be $I$ for 30 variables and sample size 60. Right: Spot volatility matrix $\Sigma$ is different from $I$ with 30 variables and sample size 500.}
        \label{fig:two simulations}
\end{figure}

It is clear from the algorithm that the time points are generated by a Poisson Process with different intensity parameter. For computational convenience we have taken all $n_{i}$'s to be same ($=n$).

\subsection{Real Data analysis}\label{realdata}
The limiting spectral distribution is particularly useful to test for deviation from null model, for example, whether the covolatility process is $I$ or not. Spectrum of integrated covariance matrix also helps to understand some of the key properties of the interacting units of the intraday-financial-network \cite{kumar2012correlation}. The extreme (highest) eigenvalue, for example, gives us significant insight about the \textit{market mode} or the collective overall response of the market to some external information. Spectral analysis, therefore, reveals broadly three types of fluctuations: (i) common to all stocks (i.e., due to market), (ii) related to a particular business sector (e.g. sectoral) and (iii) limited to an individual stock (i.e., idiosyncratic). These can be captured by simply segregating the network spectrum into the following parts: (i) the extreme eigenvalue (ii) eigenvalues deviating from the theoretical spectral distribution and (iii) bulk of the spectrum (\cite{plerou1999universal}, \cite{sinha2010econophysics}, \cite{onatski2010determining}). Limiting spectral distribution of Hayashi-Yoshida estimator would help us to identify the sectoral mode of intraday financial network.

We collect intraday tick by tick Bloomberg data of equities in Nifty 50 for several days. Here we present the results for three consecutive days starting from 22-12-2020 which are fairly representative.
In Fig.\ref{Fig:scree_plots}, we have plotted the scree plots of eigenvalues for these 50 stocks for the three days on the left panel. We see that the impact of the market mode makes the largest eigenvalue away from the bulk. On the right panel some of the eigenvectors, for the corresponding days, are plotted. Specifically, these are the eigenvector 3 of day 1, eigenvector 2 of day 2 and eigenvector 3 of day 3. Each of these has high contributions from stocks 2,4,13,14, 36 with same sign. These stocks are all from IT sector and there are no other stocks from IT sector in our dataset. This suggests that the IT sector mean (same sign) is the next big component that drives the market after the overall mean (market mode).
\begin{figure}\begin{center}
\includegraphics[scale=0.4]{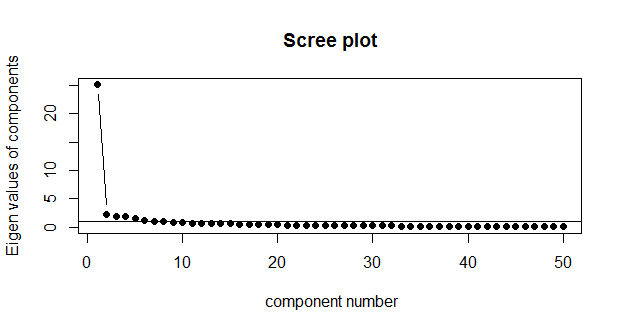}\includegraphics[scale=0.4]{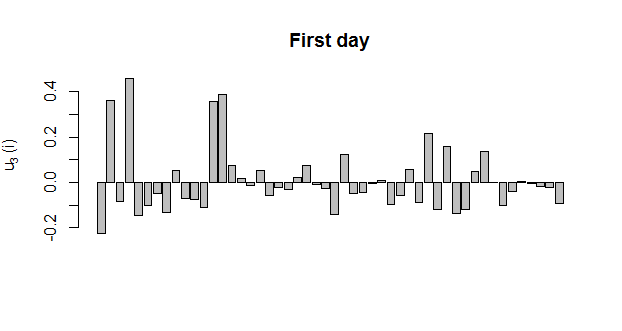}\\
\includegraphics[scale=0.4]{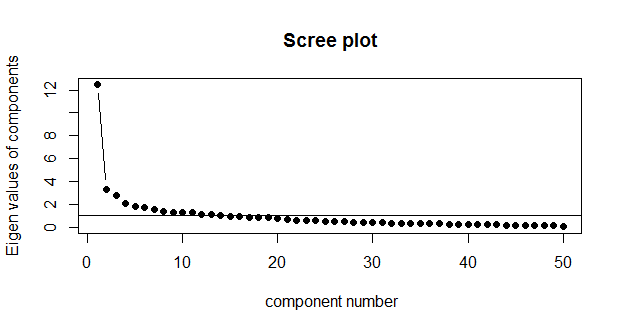}\includegraphics[scale=0.4]{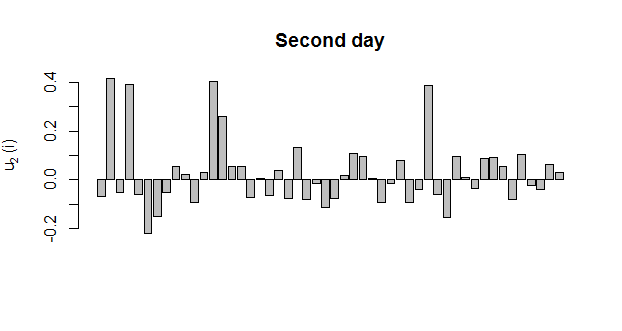}\\
\includegraphics[scale=0.4]{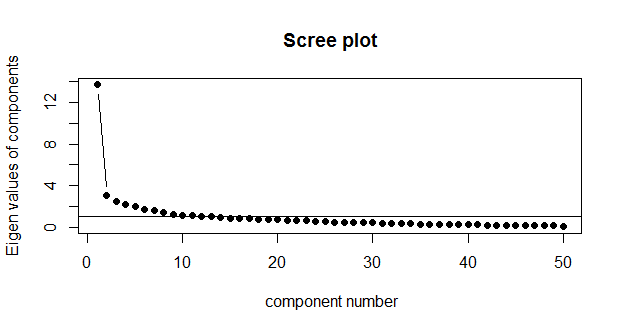}\includegraphics[scale=0.4]{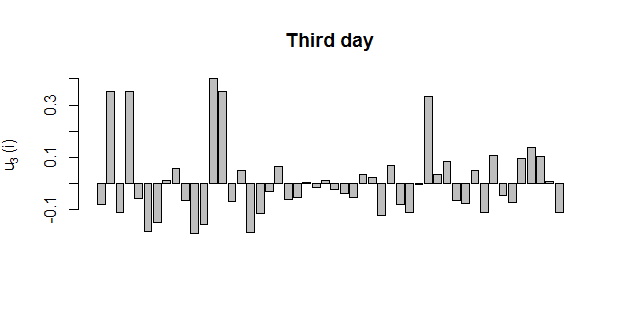}
\caption{Left panel: Scree plots of three consecutive days are plotted. We can see that the highest eigenvalue representing the market mode is away from the bulk. Right panel: Third eigenvector obtained from the data of first day, second eigenvector of second day and third eigenvector of the third day.}\label{Fig:scree_plots}
		\end{center}
	\end{figure}

\section{Conclusion and further directions}\label{sec:conclusion}
In this work we have determined the limiting spectral distribution of high dimensional Hayashi-Yoshida estimator for nonsynchronous intraday data.  Limiting spectral distribution can help to construct a shrinkage estimator of high-dimensional integrated covariance matrix (see \cite{ledoit2012nonlinear}). It can also be used for testing for a particular structure in spot volatility matrix.

In this paper we have only considered asynchronicity but not the presence of microstructure noise, which is also a feature of intraday stock-price data. So a natural direction to extend this work is by adding microstructure noise to it. Significant insights can be obtained from \cite{xia2018inference} where the same was derived for realized (co)volatility matrix. In presence of noise the spectral distribution may deviate from the ideal situation in significant ways. We have restricted ourselves to the simple Black-Scholes setup. Geometric Brownian motion models are not always very realistic models to describe financial data. One can try to go beyond that and investigate into the spectral analysis of Hayashi's estimator for more complex models. One can also try to extend the results for a general class of time varying covolatility processes (for more details, see \cite{zheng2011estimation, xia2018inference}). Changes due to leverage effects can also be quite serious and so worth looking into.

\bibliography{mainbib}

\appendix{Appendix}
\section{Lemmas}
\begin{lemma}
\textcolor{black}{Let $w_{1,}w_{2}\in\mathbb{C}$, with $Re(w_{1})\geq0$
and $Re(w_{2})\geq0$, $A$ is a $p\times p$ Hermitian nnd matrix,
$B$ being any $p\times p$ matrix, and $q\in\mathbb{C^{p}}$, then
\[
|q^{*}B(w_{1}A+I)^{-1}q-q^{*}B(w_{2}A+I)^{-1}q|\leq|w_{1}-w_{2}|~|q|^{2}~\|B\|~ \|A\|.
\]
}
\end{lemma}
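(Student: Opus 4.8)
The plan is to reduce the difference to a single resolvent identity and then control each factor by its operator norm, using nonnegative-definiteness of $A$ together with $\mathrm{Re}(w_i)\ge 0$ to bound the resolvents by $1$. First I would write, by the standard identity $X^{-1}-Y^{-1}=X^{-1}(Y-X)Y^{-1}$ applied with $X=w_1A+I$ and $Y=w_2A+I$,
\[
(w_1A+I)^{-1}-(w_2A+I)^{-1}=(w_2-w_1)\,(w_1A+I)^{-1}A\,(w_2A+I)^{-1},
\]
so that
\[
q^{*}B(w_1A+I)^{-1}q-q^{*}B(w_2A+I)^{-1}q=(w_2-w_1)\,q^{*}B\,(w_1A+I)^{-1}A\,(w_2A+I)^{-1}q.
\]

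Next I would take absolute values and submultiplicativity of the operator norm, together with $|q^{*}B|\le \|B\|\,|q|$ and $|(w_2A+I)^{-1}q|\le \|(w_2A+I)^{-1}\|\,|q|$, to get
\[
\bigl|q^{*}B(w_1A+I)^{-1}q-q^{*}B(w_2A+I)^{-1}q\bigr|\le |w_1-w_2|\,|q|^{2}\,\|B\|\,\|A\|\,\|(w_1A+I)^{-1}\|\,\|(w_2A+I)^{-1}\|.
\]

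Finally I would show $\|(wA+I)^{-1}\|\le 1$ whenever $\mathrm{Re}(w)\ge 0$ and $A$ is Hermitian nnd. Diagonalizing $A=U\,\mathrm{diag}(\lambda_1,\dots,\lambda_p)\,U^{*}$ with $\lambda_j\ge 0$, the eigenvalues of $wA+I$ are $w\lambda_j+1$, and $\mathrm{Re}(w\lambda_j+1)=\lambda_j\,\mathrm{Re}(w)+1\ge 1$, hence $|w\lambda_j+1|\ge 1$ and $\|(wA+I)^{-1}\|=\max_j|w\lambda_j+1|^{-1}\le 1$. Substituting into the previous display gives the claimed bound. This argument is essentially routine; the only point requiring care is the spectral bound on the resolvent, which is where the hypotheses $\mathrm{Re}(w_i)\ge 0$ and $A\succeq 0$ are used, so I do not expect a genuine obstacle here.
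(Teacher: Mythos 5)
Your proof is correct and complete: the resolvent identity, the submultiplicative norm bound, and the spectral estimate $\|(wA+I)^{-1}\|\le 1$ (valid since $wA+I$ is normal when $A$ is Hermitian, so its operator norm is the maximal eigenvalue modulus, and $\mathrm{Re}(w\lambda_j+1)\ge 1$) together give exactly the claimed bound. The paper itself supplies no proof but simply cites Bai (1999), and your argument is the standard one given there, so there is nothing further to reconcile.
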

For proof, see \cite{bai1999methodologies}.
\begin{lemma}
Let $z\in\mathbb{C}$, with $v=I(z)>0$, $A$ and $B$ are $p\times p$ matrices, with $B$ being Hermitian, and $q\in\mathbb{C}^{p}$. Then
\[
|tr(((B-zI)^{-1}-(B+\theta qq^{*}-zI)^{-1}A|\leq\frac{||A||}{v}
\]
for all $\theta \in\mathbb{R}.$
\end{lemma}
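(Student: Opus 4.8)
The plan is to collapse the trace of the $p\times p$ resolvent difference into a single scalar quadratic form by invoking the Sherman--Morrison rank-one update formula, and then to bound that scalar directly, exploiting the fact that the imaginary part of a resolvent quadratic form has a fixed sign and a size comparable to the quantity appearing in the numerator.

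First I would set $R=(B-zI)^{-1}$ and apply Sherman--Morrison to the rank-one perturbation $B\mapsto B+\theta qq^{*}$ (legitimate because $\mathrm{Im}(z)=v>0$ makes $B+\theta qq^{*}-zI$ invertible and, as noted below, keeps the scalar denominator away from $0$):
$$(B-zI)^{-1}-(B+\theta qq^{*}-zI)^{-1}=\frac{\theta\,Rqq^{*}R}{1+\theta\,q^{*}Rq}.$$
Multiplying by $A$, taking the trace, and using the cyclic property $\mathrm{tr}(Rqq^{*}RA)=q^{*}RARq$, the left-hand side of the lemma becomes
$$\Bigl|\tfrac{\theta\,q^{*}RARq}{1+\theta\,q^{*}Rq}\Bigr|.$$
If $\theta=0$ or $Rq=0$ this is $0$ and there is nothing to prove, so I may assume $\theta\neq0$ and $Rq\neq0$.

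Next I would estimate numerator and denominator separately. Since $B$ is Hermitian, $R$ and $R^{*}=(B-\bar zI)^{-1}$ are commuting functions of $B$, so $\|R^{*}q\|=\|Rq\|$, and therefore $|q^{*}RARq|=|(R^{*}q)^{*}A(Rq)|\le\|A\|\,\|R^{*}q\|\,\|Rq\|=\|A\|\,\|Rq\|^{2}$. For the denominator, the resolvent identity $R-R^{*}=2iv\,RR^{*}$ gives $q^{*}Rq-\overline{q^{*}Rq}=2iv\,\|Rq\|^{2}$, i.e. $\mathrm{Im}(q^{*}Rq)=v\,\|Rq\|^{2}$; as $\theta$ is real and $v>0$, this forces $|1+\theta\,q^{*}Rq|\ge|\mathrm{Im}(1+\theta\,q^{*}Rq)|=|\theta|\,v\,\|Rq\|^{2}$. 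Combining the two bounds, the factors $|\theta|$ and $\|Rq\|^{2}$ cancel and the quotient is at most $\|A\|/v$, which is the claim.

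The argument is elementary and I do not anticipate a genuine obstacle. The one point that needs a little care is ensuring the bound is uniform in $\theta$ and $q$ and that the scalar denominator never vanishes; both are handled by the single identity $\mathrm{Im}(q^{*}Rq)=v\|Rq\|^{2}$, which simultaneously lower-bounds $|1+\theta\,q^{*}Rq|$ and reproduces exactly the $\|Rq\|^{2}$ factor coming from the Cauchy--Schwarz estimate of the numerator, so that all dependence on $\theta$ and $q$ drops out. (This is the standard rank-one perturbation inequality from random matrix theory, cf. the reference cited for the preceding lemma.)
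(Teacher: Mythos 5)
Your proof is correct, and since the paper gives no argument of its own for this lemma (it simply cites Bai's 1999 review, where this is the standard rank-one perturbation inequality), there is nothing to compare beyond noting that your Sherman--Morrison computation together with the identity $\mathrm{Im}(q^{*}(B-zI)^{-1}q)=v\|(B-zI)^{-1}q\|^{2}$ is exactly the classical argument from that reference. The edge cases ($\theta=0$, $q=0$) and the nonvanishing of the scalar denominator are handled properly, so the proposal can stand as a self-contained proof.
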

For proof, see \cite{bai1999methodologies}.
\begin{lemma}
For any Hermitian matrix $A$ and $z\in\mathbb{C}$, with $Im(z)=v>0$-
\[
||(A-zI)^{-1}||\leq1/v.
\]
\end{lemma}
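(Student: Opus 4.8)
The plan is to invoke the spectral theorem and reduce the claim to a one-line estimate on scalars. First I would diagonalize: since $A$ is Hermitian we may write $A = U\Lambda U^{*}$ with $U$ unitary and $\Lambda = \mathrm{diag}(\lambda_1,\ldots,\lambda_p)$, where each $\lambda_j\in\mathbb{R}$. Because $z\notin\mathbb{R}$ (its imaginary part is $v>0$), every $\lambda_j - z$ is nonzero, so $A - zI$ is invertible and $(A-zI)^{-1} = U(\Lambda - zI)^{-1}U^{*}$. Unitary conjugation preserves the operator norm, hence $\|(A-zI)^{-1}\| = \|(\Lambda - zI)^{-1}\| = \max_{1\le j\le p}\frac{1}{|\lambda_j - z|}$, the last equality because $(\Lambda - zI)^{-1}$ is diagonal.

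It then remains to bound each $|\lambda_j - z|$ from below. Writing $z = u + iv$ with $u\in\mathbb{R}$ and $v = \mathrm{Im}(z)>0$, and using $\lambda_j\in\mathbb{R}$, we get $|\lambda_j - z|^{2} = (\lambda_j - u)^{2} + v^{2}\ge v^{2}$, so $|\lambda_j - z|\ge v$ and thus $1/|\lambda_j - z|\le 1/v$ for every $j$. Taking the maximum over $j$ yields $\|(A-zI)^{-1}\|\le 1/v$, which is the assertion.

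If one prefers an argument that does not explicitly diagonalize, the same bound follows from the quadratic form: for any unit vector $q\in\mathbb{C}^{p}$ the quantity $q^{*}Aq$ is real, so $\mathrm{Im}\big(q^{*}(A-zI)q\big) = -v$, whence $\|(A-zI)q\|\ge |q^{*}(A-zI)q|\ge v$; this says the smallest singular value of $A - zI$ is at least $v$, which is equivalent to $\|(A-zI)^{-1}\|\le 1/v$. There is no genuine obstacle in this lemma — the only point to keep in mind is that $v>0$ guarantees the invertibility of $A - zI$, so the resolvent is well-defined and the estimate holds verbatim.
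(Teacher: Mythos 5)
Your proof is correct and complete; the paper itself gives no argument for this lemma (it only cites Bai and Silverstein), and both of your routes --- the spectral decomposition giving $\|(A-zI)^{-1}\|=\max_j |\lambda_j-z|^{-1}\le 1/v$, and the quadratic-form bound $\|(A-zI)q\|\ge |\mathrm{Im}(q^{*}(A-zI)q)|=v$ --- are the standard resolvent estimates that the cited reference uses. Nothing is missing.
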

For proof, see \cite{bai2010spectral}.
\begin{lemma}
For $X=(X_{1},X_{2},...,X_{p})^{T}$ where $X_{j}$'s are iid random
variables such that $\mathbb{E}(X_{1})=0$ , $\mathbb{E}|X_{1}|^{2}=1$,
and $\mathbb{E}|X_{1}|^{2k}<\infty$ for some $2\leq k\in\mathbb{N}$,
there exists $C_{k}\geq0$, depending only on $k$, $\mathbb{E}|X_{1}|^{4}$and
$\mathbb{E}|X_{1}|^{2k}$, such that for any $p\times p$ nonrandom
matrix $A$,
\begin{equation}
\mathbb{E}|X^{*}AX-tr(A)|^{2k}\leq C_{k}(tr(AA^{*}))^{k}\leq C_{k}p^{k}\|A\|^{2k}\label{eq:11}.
\end{equation}
\end{lemma}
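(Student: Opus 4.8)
The plan is to prove this by the classical martingale-difference decomposition of a quadratic form together with Burkholder's inequality for martingale differences. Put $\mathcal{F}_0=\{\emptyset,\Omega\}$, $\mathcal{F}_i=\sigma(X_1,\dots,X_i)$, and write $E_i(\cdot)=E(\cdot\mid\mathcal{F}_i)$. Since $E(X^{*}AX)=\sum_i a_{ii}E|X_i|^2=\mathrm{tr}(A)$ by $EX_1=0$ and $E|X_1|^2=1$, one telescopes
\begin{equation*}
X^{*}AX-\mathrm{tr}(A)=\sum_{i=1}^{p}\gamma_i,\qquad \gamma_i:=(E_i-E_{i-1})\bigl(X^{*}AX\bigr),
\end{equation*}
and $(\gamma_i)_{i=1}^{p}$ is a martingale-difference sequence for $(\mathcal{F}_i)$. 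Expanding $X^{*}AX=\sum_{s,t}\bar X_s a_{st}X_t$ and discarding the terms not involving $X_i$ (annihilated by $E_i-E_{i-1}$) and the terms coupling $X_i$ with some $X_j$, $j>i$ (annihilated because $EX_j=0$), I would obtain the explicit form
\begin{equation*}
\gamma_i=a_{ii}\bigl(|X_i|^2-1\bigr)+\bar X_i\sum_{j<i}a_{ij}X_j+X_i\sum_{j<i}\bar X_j a_{ji}.
\end{equation*}

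Next, apply Burkholder's inequality with exponent $2k$,
\begin{equation*}
E\Bigl|\sum_{i=1}^{p}\gamma_i\Bigr|^{2k}\le C_k\Bigl(E\bigl(\textstyle\sum_{i=1}^{p}E_{i-1}|\gamma_i|^2\bigr)^{k}+\sum_{i=1}^{p}E|\gamma_i|^{2k}\Bigr),
\end{equation*}
and estimate the two terms separately. For the conditional-variance term, conditioning on $\mathcal{F}_{i-1}$ keeps the $X_j$ with $j<i$ fixed; expanding the square and using $EX_i=0$, $E|X_i|^2=1$ and the finite fourth moment gives $E_{i-1}|\gamma_i|^2\le C\bigl(|a_{ii}|^2+\sum_{j<i}|a_{ij}|^2+\sum_{j<i}|a_{ji}|^2\bigr)$ with $C$ depending only on $E|X_1|^4$; summing over $i$ yields the deterministic bound $\sum_i E_{i-1}|\gamma_i|^2\le C'\sum_{s,t}|a_{st}|^2=C'\,\mathrm{tr}(AA^{*})$, hence $E\bigl(\sum_i E_{i-1}|\gamma_i|^2\bigr)^{k}\le (C')^k(\mathrm{tr}(AA^{*}))^k$. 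For the second term, the $c_r$-inequality followed by Rosenthal's inequality applied to the linear forms $\sum_{j<i}a_{ij}X_j$ and $\sum_{j<i}\bar X_j a_{ji}$ gives $E|\gamma_i|^{2k}\le C_k\bigl(|a_{ii}|^{2k}E\big||X_1|^2-1\big|^{2k}+(1+E|X_1|^{2k})(\sum_{j<i}|a_{ij}|^2)^{k}+(1+E|X_1|^{2k})(\sum_{j<i}|a_{ji}|^2)^{k}\bigr)$; summing over $i$ and using $\sum_i b_i^{k}\le(\sum_i b_i)^{k}$ for nonnegative $b_i$ (applied to $b_i=\sum_{j<i}|a_{ij}|^2$, to $b_i=\sum_{j<i}|a_{ji}|^2$, and to $b_i=|a_{ii}|^2$) bounds $\sum_i E|\gamma_i|^{2k}$ by $C_k(\mathrm{tr}(AA^{*}))^k$ as well. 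Combining the two pieces gives the first inequality of the lemma. The second inequality is immediate: with $\sigma_1\ge\dots\ge\sigma_p\ge0$ the singular values of $A$, $\mathrm{tr}(AA^{*})=\sum_i\sigma_i^2\le p\sigma_1^2=p\|A\|^2$, so $(\mathrm{tr}(AA^{*}))^k\le p^k\|A\|^{2k}$.

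The main obstacle is bookkeeping rather than conceptual: one must track conjugates carefully when $A$ is not Hermitian, and—more significantly—the diagonal contribution $a_{ii}(|X_i|^2-1)$ in $\gamma_i$ naturally brings in $E|X_1|^{4k}$ rather than $E|X_1|^{2k}$, so to keep the constant dependent only on the stated moments one should isolate the diagonal part and argue slightly more carefully (in every application of this lemma in the present paper the coordinates of the $Y_l$ are Gaussian, so all moments are finite and this subtlety is immaterial). The complete details are exactly those of \cite[Lemma B.26]{bai2010spectral}; see also \cite{bai1999methodologies}.
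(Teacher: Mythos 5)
The paper itself offers no proof of this lemma --- it simply points to \cite{zheng2011estimation} --- and the martingale--Burkholder route you outline is indeed the standard argument behind that citation (it is Lemma B.26 of \cite{bai2010spectral}). However, as written your sketch has a genuine gap at the conditional square-function step. Conditioning on $\mathcal{F}_{i-1}$ fixes $X_j$, $j<i$, at their \emph{random} values, so what you actually obtain is
\begin{equation*}
E_{i-1}|\gamma_i|^2\;\le\;C\Bigl(|a_{ii}|^2\,\mathbb{E}|X_1|^4+\Bigl|\sum_{j<i}a_{ij}X_j\Bigr|^2+\Bigl|\sum_{j<i}\bar X_j a_{ji}\Bigr|^2\Bigr),
\end{equation*}
which is a random bound, not the deterministic bound $C\bigl(|a_{ii}|^2+\sum_{j<i}|a_{ij}|^2+\sum_{j<i}|a_{ji}|^2\bigr)$ you claim. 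You cannot replace $\bigl|\sum_{j<i}a_{ij}X_j\bigr|^2$ by its expectation inside the $k$-th power: $\mathbb{E}\bigl(\sum_i E_{i-1}|\gamma_i|^2\bigr)^k$ is not controlled by $\bigl(\sum_i \mathbb{E}|\gamma_i|^2\bigr)^k$. Handling this term is precisely the main technical content of the classical proof: one notes that $\sum_i\bigl|\sum_{j<i}a_{ij}X_j\bigr|^2=X^*BX$ for a nonnegative definite matrix $B$ with $\mathrm{tr}(B)\le \mathrm{tr}(AA^*)$, and then bounds $\mathbb{E}(X^*BX)^k$ by an induction on the moment order (applying the lemma at a lower exponent to $X^*BX-\mathrm{tr}(B)$), which is how \cite{bai2010spectral} complete the argument. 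So the ``deterministic bound, hence done'' step fails and the induction it hides needs to be supplied.

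A second, smaller point, which you partly flag yourself: the diagonal contribution $a_{ii}(|X_i|^2-1)$ forces $\mathbb{E}\bigl||X_1|^2-1\bigr|^{2k}$, i.e.\ the $4k$-th moment, into the constant, and no amount of extra care removes it --- for $A=e_1e_1^*$ the left-hand side equals $\mathbb{E}\bigl||X_1|^2-1\bigr|^{2k}$, which can be infinite when only $\mathbb{E}|X_1|^{2k}<\infty$, while the right-hand side is finite. So the inequality with exponent $2k$ genuinely requires $\mathbb{E}|X_1|^{4k}<\infty$ (or should be stated with exponent $k$ on the left, as in the source); this is immaterial for the paper, since assumption $\mathcal{A}_3$ gives finite moments of all orders, but a self-contained proof should state the moment requirement correctly rather than defer it.
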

For proof, see \cite{zheng2011estimation}.
\begin{lemma}
Suppose $S$ is a matrix defined as assumption $\mathcal{A}_{4}$, $z\in \mathbb{C}$ and $\frac{1}{p} \text{tr}(S_{n_{k}}-zI)^{-1}\rightarrow s(z)$, then
\begin{equation*}
    \sum_{l=1}^{n_{k}}\frac{\tau_{l}}{1+\tau_{l}a_{l}}\rightarrow\int_{0}^{1}\frac{\tau'_{t}}{1+c\tau'_{t}\tilde{s}(z)}dt \neq 0
\end{equation*}
 where $a_{l}=Y_{l}'\Sigma^{\frac{1}{2}}\big(\sum_{j\ne l}\Sigma^{\frac{1}{2}}Y_{j}Y_{j}'\Sigma^{\frac{1}{2}}-zI\big)^{-1}\Sigma^{\frac{1}{2}}Y_{l}$ ($Y_l$ is defined as in assumption $\mathcal{A}_3$) and $\tilde{s}(z)$ is the unique solution in $Q_{1}$ to the following equation:
\[
\int_{0}^{1}\frac{\tau'_{t}}{1+c\tau'_{t}\tilde{s}(z)}dt=1-c(1+zs(z)).
\]
\end{lemma}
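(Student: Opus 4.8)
The plan is to run the standard random‑matrix deterministic‑equivalent argument for the weighted sample‑covariance matrix $S=\sum_{l}\tau_{l}\Sigma^{1/2}Y_{l}Y_{l}'\Sigma^{1/2}$: show that the random quadratic forms $a_{l}$ concentrate, uniformly in $l$, around one nonrandom quantity; extract a self‑consistent equation from the resolvent identity $S(S-zI)^{-1}=I+z(S-zI)^{-1}$; and pass to the limit using $\mathcal{A}_{5}$ to turn the $l$‑sums into integrals against $\tau'_{t}$.

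First I would establish the concentration. Set $S_{l}=S-\tau_{l}\Sigma^{1/2}Y_{l}Y_{l}'\Sigma^{1/2}$ and $A_{l}=\Sigma^{1/2}(S_{l}-zI)^{-1}\Sigma^{1/2}$, so $a_{l}=Y_{l}'A_{l}Y_{l}$ with $A_{l}$ independent of $Y_{l}$ and $\mathbb{E}[a_{l}\mid A_{l}]=\mathrm{tr}(\Sigma(S_{l}-zI)^{-1})$. The operator‑norm bound $\|(A-zI)^{-1}\|\le 1/v$ (Lemma 3) with $\mathcal{A}_{6}$ gives $\|A_{l}\|\le\|\Sigma\|/v\le Kp^{\delta}/v$; feeding $p^{-1}A_{l}$ into the quadratic‑form concentration inequality (Lemma 4) at order $2k$, then Markov's inequality and Borel--Cantelli over $l\le n$ and over $n$ — this is where $\mathcal{A}_{3}$ (finite moments of all orders) and $\delta<1/6$ make the union bound summable — yields $\max_{l\le n}\bigl|a_{l}/p-\tfrac1p\mathrm{tr}(\Sigma(S_{l}-zI)^{-1})\bigr|\to0$ a.s. Since $S_{l}$ and $S$ differ by a rank‑one term, the rank‑one perturbation bound (Lemma 2) gives $\bigl|\mathrm{tr}(\Sigma(S_{l}-zI)^{-1})-\mathrm{tr}(\Sigma(S-zI)^{-1})\bigr|\le\|\Sigma\|/v=o(p)$ by $\mathcal{A}_{6}$. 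Hence, uniformly in $l$, $a_{l}/p=\tilde s_{n}(z)+o(1)$ a.s., where $\tilde s_{n}(z):=p^{-1}\mathrm{tr}(\Sigma(S-zI)^{-1})$; moreover $\tilde s_{n}(z)$ is bounded ($|\tilde s_{n}(z)|\le v^{-1}p^{-1}\mathrm{tr}(\Sigma)\to v^{-1}\!\int\lambda\,dH(\lambda)<\infty$ by $\mathcal{A}_{2}$) and, for $z=iv$, lies in $Q_{1}$.

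Next, taking $p^{-1}\mathrm{tr}$ of the resolvent identity, expanding $S$, and using the Sherman--Morrison identity $Y_{l}'\Sigma^{1/2}(S-zI)^{-1}\Sigma^{1/2}Y_{l}=a_{l}/(1+\tau_{l}a_{l})$, one gets the exact relation
\[
\frac{1}{n}\sum_{l=1}^{n}\frac{1}{1+\tau_{l}a_{l}}=1-\frac{p}{n}\bigl(1+zs_{n}(z)\bigr).
\]
Along the subsequence $n_{k}$ on which $s_{n_{k}}(z)\to s(z)$ the right side tends to $1-c(1+zs(z))$; and since $\tau_{l}a_{l}=(n\tau_{l})\cdot\tfrac{p}{n}\cdot\tfrac{a_{l}}{p}$, using $\mathcal{A}_{1}$, the first step, and $\mathcal{A}_{5}$ (which makes $n\tau_{l}$ an $L^{1}$‑Riemann sample of $\tau'_{t}$), any subsequential limit $\tilde s^{*}\in Q_{1}$ of $\tilde s_{n_{k}}(z)$ must satisfy $\int_{0}^{1}\frac{dt}{1+c\tau'_{t}\tilde s^{*}}=1-c(1+zs(z))$, the self‑consistent equation characterizing $\tilde s(z)$. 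For $z=iv$ with $v$ large this equation has a unique solution in $Q_{1}$ — by a contraction argument on the fixed‑point map, analogous to the uniqueness argument at the end of the proof of Theorem~\ref{thm:main_theorem} — so $\tilde s_{n_{k}}(z)\to\tilde s(z)$. Applying the same $\mathcal{A}_{5}$‑Riemann‑sum passage to the object of interest then gives
\[
\sum_{l=1}^{n_{k}}\frac{\tau_{l}}{1+\tau_{l}a_{l}}=\frac{1}{n_{k}}\sum_{l=1}^{n_{k}}\frac{n_{k}\tau_{l}}{1+\tau_{l}a_{l}}\longrightarrow\int_{0}^{1}\frac{\tau'_{t}}{1+c\tau'_{t}\tilde s(z)}\,dt,
\]
the asserted limit. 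Non‑vanishing follows at once: $\tau'_{t}\ge0$ and $\tilde s(z)\in Q_{1}$ force $\mathrm{Re}(1+c\tau'_{t}\tilde s(z))\ge1$, hence $\mathrm{Re}\bigl(\tfrac{\tau'_{t}}{1+c\tau'_{t}\tilde s(z)}\bigr)\ge0$ with strict inequality on $\{\tau'_{t}>0\}$ (a set of positive measure in any nondegenerate regime), so the integral has strictly positive real part.

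The main obstacle is the uniform concentration in the first step: bounding a single $a_{l}$ is routine, but the estimate must hold \emph{simultaneously} over all $l\in\{1,\dots,n\}$ as $n\to\infty$, which forces one to beat a union bound over a growing family of quadratic forms — precisely the role of $\mathcal{A}_{3}$ and the operator‑norm control $\|\Sigma\|\le Kp^{\delta}$ of $\mathcal{A}_{6}$. The limit passage in the second step is the other delicate point, since it hinges on the exact scaling of the $\tau_{l}$'s through $\mathcal{A}_{5}$; the remaining manipulations are routine bookkeeping with the resolvent identity and Riemann‑sum convergence.
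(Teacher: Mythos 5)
The paper offers no argument of its own for this lemma---it simply points to \cite{zheng2011estimation}---so there is no internal proof to compare against; your sketch is a self-contained reconstruction, and it is essentially the standard deterministic-equivalent argument that the cited reference carries out: uniform concentration of the quadratic forms $a_l$ via the moment inequality (the paper's Lemma 4) combined with $\|\Sigma\|\leq Kp^{\delta}$, Markov and Borel--Cantelli, the rank-one perturbation bound (Lemma 2) to replace $S_l$ by $S$, the exact identity $\frac{1}{n}\sum_{l}(1+\tau_l a_l)^{-1}=1-\frac{p}{n}\bigl(1+zs_n(z)\bigr)$ obtained from the resolvent identity and Sherman--Morrison, and the $\mathcal{A}_5$ Riemann-sum passage to $\tau'_t$; the key algebraic identity checks out, and the denominators are harmless since for $z=iv$ one has $\mathrm{Re}\,a_l\geq 0$, $\mathrm{Im}\,a_l\geq 0$, hence $|1+\tau_l a_l|\geq 1$. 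Two remarks. First, the self-consistent equation you derive, $\int_0^1\bigl(1+c\tau'_t\tilde{s}(z)\bigr)^{-1}dt=1-c(1+zs(z))$, differs from the equation printed in the lemma (which has $\tau'_t$ in the numerator); yours is the one actually invoked in the proof of the main LSD theorem (the step $\int_0^1\frac{\tau'_t}{1+c\tilde{s}(z)\tau'_t}dt=\frac{1}{c\tilde{s}(z)}\bigl(1-(1-c(1+zs(z)))\bigr)$ presupposes it), so the printed statement appears to carry a typo and your version is the internally consistent one---worth flagging explicitly rather than silently identifying the two. Second, a few technical points in your plan are asserted rather than proved: the boundedness of $\tilde{s}_n(z)=p^{-1}\mathrm{tr}\bigl(\Sigma(S-zI)^{-1}\bigr)$ needs $p^{-1}\mathrm{tr}(\Sigma)=O(1)$, which does not follow from the weak convergence $F^{\Sigma}\Rightarrow H$ alone; the uniqueness of the solution in $Q_1$ for large $v$ is only gestured at via a contraction map; and the positivity of the limit integral uses that $\{\tau'_t>0\}$ has positive measure, which the stated assumptions do not literally guarantee. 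These are precisely the points the paper sidesteps by deferring to \cite{zheng2011estimation}, so they are acceptable in a sketch, but they are where the remaining work sits.
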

For proof, see \cite{zheng2011estimation}.
\begin{lemma}
Let $z=iv\in\mathbb{C},$with $v>0$, $A$ be any $p\times p$ matrix,
and $B$ be a $p\times p$ Hermitian nonnegative definite matrix.
Then $tr(A(B-zI)^{-1}A^{*})\in Q_{1}$.
\end{lemma}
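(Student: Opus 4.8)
The plan is to diagonalise $B$ and thereby reduce the trace to a nonnegatively weighted sum of elementary scalars, each visibly lying in the first quadrant. First I would take the spectral decomposition $B=\sum_{j=1}^{p}\lambda_{j}u_{j}u_{j}^{*}$, where $\lambda_{j}\ge 0$ (since $B$ is nonnegative definite) and $\{u_{j}\}_{j=1}^{p}$ is an orthonormal basis of $\mathbb{C}^{p}$. Because $z=iv$ is not real, $\lambda_{j}-z\ne 0$ for every $j$, so $(B-zI)^{-1}=\sum_{j=1}^{p}(\lambda_{j}-z)^{-1}u_{j}u_{j}^{*}$. Substituting this and using cyclicity of the trace,
\begin{equation*}
\mathrm{tr}\big(A(B-zI)^{-1}A^{*}\big)=\sum_{j=1}^{p}\frac{\mathrm{tr}\big((Au_{j})(Au_{j})^{*}\big)}{\lambda_{j}-z}=\sum_{j=1}^{p}\frac{\|Au_{j}\|^{2}}{\lambda_{j}-z},
\end{equation*}
so the problem is reduced to a sum of scalars $w_{j}/(\lambda_{j}-z)$ with real weights $w_{j}=\|Au_{j}\|^{2}\ge 0$.

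Next I would put $z=iv$ with $v>0$ and read off the real and imaginary parts term by term. For each $j$,
\begin{equation*}
\frac{1}{\lambda_{j}-iv}=\frac{\lambda_{j}+iv}{\lambda_{j}^{2}+v^{2}},
\end{equation*}
whose real part $\lambda_{j}/(\lambda_{j}^{2}+v^{2})$ is $\ge 0$ because $\lambda_{j}\ge 0$, and whose imaginary part $v/(\lambda_{j}^{2}+v^{2})$ is $>0$ because $v>0$. Multiplying by the nonnegative weight $w_{j}$ keeps each summand in $Q_{1}$, and since $Q_{1}$ is a convex cone — closed under addition and under multiplication by nonnegative reals — the sum $\sum_{j}w_{j}(\lambda_{j}-iv)^{-1}$ again lies in $Q_{1}$. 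This establishes the claim.

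There is no genuine obstacle here; the statement is elementary. The only points needing a moment's attention are the trace identity $\mathrm{tr}\big((Au_{j})(Au_{j})^{*}\big)=u_{j}^{*}A^{*}Au_{j}=\|Au_{j}\|^{2}$ (immediate from cyclicity) and matching the convention for $Q_{1}$ employed elsewhere in the paper: if $Q_{1}$ denotes the \emph{open} first quadrant, one notes additionally that the imaginary part above is strictly positive as soon as some $Au_{j}\ne 0$, so it suffices to exclude the trivial case $A=0$. The same computation goes through verbatim for any $z$ with $\mathrm{Re}(z)\le 0$ and $\mathrm{Im}(z)>0$, which covers the forms in which this lemma and its companions are invoked to show membership in $Q_{1}$ in the proof of Theorem \ref{thm:main_theorem}.
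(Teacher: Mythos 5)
Your proof is correct. The paper does not actually prove this lemma; it only points to \cite{zheng2011estimation}, and your spectral-decomposition argument is precisely the standard elementary proof of that cited fact: writing $\mathrm{tr}\big(A(B-ivI)^{-1}A^{*}\big)=\sum_{j}\|Au_{j}\|^{2}(\lambda_{j}-iv)^{-1}$ with $\lambda_{j}\geq 0$ immediately gives nonnegative real part and nonnegative (indeed positive unless $A=0$) imaginary part, which suffices since the paper's $Q_{1}=\{z:\ \mathrm{Re}(z)\geq 0,\ \mathrm{Im}(z)\geq 0\}$ is the closed quadrant. An equivalent route, closer to how such lemmas are often stated in the reference, is to write $(B-ivI)^{-1}=(B+ivI)(B^{2}+v^{2}I)^{-1}$, so the real and imaginary parts of the trace are $\mathrm{tr}\big(AB(B^{2}+v^{2}I)^{-1}A^{*}\big)$ and $v\,\mathrm{tr}\big(A(B^{2}+v^{2}I)^{-1}A^{*}\big)$, each nonnegative because they have the form $\mathrm{tr}(AMA^{*})$ with $M$ Hermitian nonnegative definite; both arguments are essentially the same positivity observation, and your closing remark extending it to $\mathrm{Re}(z)\leq 0$, $\mathrm{Im}(z)>0$ is also accurate.
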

For proof, See \cite{zheng2011estimation}.
\begin{lemma}
Let $z=iv\in\mathbb{C}$ with $v>0$, A be a $p\times p$ Hermitian
nonnegative definite matrix, $q\in\mathbb{C}^{p},a>0$. Then
\begin{equation}
-\frac{1}{z}.\frac{1}{1+a.q^{*}(A-zI)^{-1}q}\in Q_{1}=\{z\in\mathbb{C}:\thinspace Re(z)\geq0,\thinspace Im(z)\geq0\}.\label{eq:12}
\end{equation}
\end{lemma}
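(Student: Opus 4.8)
The plan is to follow the argument (polar angle) of the product $\bigl(-\frac{1}{z}\bigr)\bigl(1+a\,q^{*}(A-zI)^{-1}q\bigr)^{-1}$ factor by factor and check that it never leaves the closed first quadrant. The first thing I would note is that, since $z=iv$ with $v>0$, we have $-1/z=i/v$, which sits on the positive imaginary axis; thus multiplication by $-1/z$ is a rotation by $\pi/2$ together with a positive rescaling. So it is enough to show that $\zeta:=\bigl(1+a\,q^{*}(A-zI)^{-1}q\bigr)^{-1}$ lies in the closed fourth quadrant, i.e. $\mathrm{Re}(\zeta)\geq0$ and $\mathrm{Im}(\zeta)\leq0$; the rotation by $\pi/2$ then moves it into $Q_{1}$.

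The heart of the matter is locating $w:=q^{*}(A-zI)^{-1}q$. I would use the spectral decomposition $A=U\Lambda U^{*}$, $\Lambda=\mathrm{diag}(\lambda_{1},\dots,\lambda_{p})$ with every $\lambda_{j}\geq0$ (legitimate because $A$ is Hermitian nonnegative definite; note $A-zI$ is invertible since $\lambda_{j}-iv\neq0$ for all $j$). Writing $\tilde q=U^{*}q$ gives $w=\sum_{j=1}^{p}|\tilde q_{j}|^{2}(\lambda_{j}-iv)^{-1}$, and since $(\lambda_{j}-iv)^{-1}=(\lambda_{j}+iv)/(\lambda_{j}^{2}+v^{2})$ has nonnegative real part and strictly positive imaginary part, $w$ is a nonnegatively weighted sum of such numbers, so $\mathrm{Re}(w)\geq0$ and $\mathrm{Im}(w)\geq0$ (in the trivial case $q=0$ one has $w=0$). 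This is the same kind of statement as Lemma 6. Consequently $1+aw$ has $\mathrm{Re}(1+aw)\geq1$ and $\mathrm{Im}(1+aw)\geq0$; in particular $1+aw\neq0$, so $\zeta$ is well defined, and writing $1+aw=u+iy$ with $u\geq1$, $y\geq0$ we get $\zeta=(u-iy)/(u^{2}+y^{2})$, hence $\mathrm{Re}(\zeta)=u/(u^{2}+y^{2})>0$ and $\mathrm{Im}(\zeta)=-y/(u^{2}+y^{2})\leq0$, exactly the fourth-quadrant claim.

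To finish I would assemble the two pieces:
\[
-\frac{1}{z}\,\zeta=\frac{i}{v}\bigl(\mathrm{Re}(\zeta)+i\,\mathrm{Im}(\zeta)\bigr)=\frac{-\mathrm{Im}(\zeta)}{v}+i\,\frac{\mathrm{Re}(\zeta)}{v},
\]
which has real part $-\mathrm{Im}(\zeta)/v\geq0$ and imaginary part $\mathrm{Re}(\zeta)/v>0$, so it lies in $Q_{1}$. I do not expect any genuine obstacle here: the whole proof is a short chain of sign inequalities, and the only points needing a moment's care are the invertibility of $A-zI$, the degenerate case $q=0$, and the bookkeeping that multiplication by $i/v$ sends the closed fourth quadrant into $Q_{1}$. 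Equivalently, and perhaps more transparently, one can argue with arguments directly: $\arg(1+aw)\in[0,\pi/2)$ forces $\arg(\zeta)\in(-\pi/2,0]$, and then $\arg\bigl(-z^{-1}\zeta\bigr)=\pi/2+\arg(\zeta)\in(0,\pi/2]\subset[0,\pi/2]$.
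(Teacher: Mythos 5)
Your proof is correct and complete: the reduction to $w=q^{*}(A-zI)^{-1}q=\sum_{j}|\tilde q_{j}|^{2}(\lambda_{j}-iv)^{-1}$ via the spectral decomposition, the sign bookkeeping $\mathrm{Re}(w)\geq 0$, $\mathrm{Im}(w)\geq 0$, hence $1+aw$ in the closed first quadrant with real part at least $1$, its reciprocal in the closed fourth quadrant, and the final rotation by $i/v$ are all valid, and you correctly handle invertibility of $A-zI$ and the case $q=0$. The paper itself gives no proof of this lemma, deferring to Zheng (2011); your self-contained argument is essentially the standard one used there, so it matches the intended approach.
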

For proof, see \cite{zheng2011estimation}.
\begin{lemma}
Suppose that $P_{n}$ are real probability measures with Stieltjes
transforms $s_{n}(z)$ . Let $K\subset\mathbb{C}_{+}$ be an infinite
set with limit points in $\mathbb{C}_{+}$. If $lims_{n}(z)=s(z)$
exists for all $z\in K$, then there exists a Probability measure
$P$ with Stieltjes transform $m(z)$ if and only if
\[
lim_{v\rightarrow\infty}iv.s(iv)=-1
\]
in which case $P_{n}\rightarrow P$.
\end{lemma}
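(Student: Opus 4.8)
The plan is to combine vague (weak-$*$) compactness of sub-probability measures with the analyticity of Stieltjes transforms. First I would apply Helly's selection theorem: since each $P_{n}$ is a probability measure on $\mathbb{R}$, the family $\{P_{n}\}$ is relatively compact in the vague topology, so any subsequence admits a further subsequence $P_{n_{j}}$ converging vaguely to a sub-probability measure $\mu$ (a priori only $\mu(\mathbb{R})\le 1$, since mass may escape to $\pm\infty$). For each fixed $z$ with $\mathrm{Im}(z)>0$ the function $x\mapsto (x-z)^{-1}$ is bounded, continuous and vanishes at $\pm\infty$, so vague convergence gives $s_{n_{j}}(z)\to\int (x-z)^{-1}\,d\mu(x)=:s_{\mu}(z)$ for every $z\in\mathbb{C}_{+}$.

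Next I would identify $s_{\mu}$ with $s$. Each $s_{n}$ is holomorphic on $\mathbb{C}_{+}$ and $|s_{n}(z)|\le 1/\mathrm{Im}(z)$, so the family is locally uniformly bounded; by Vitali's (or Montel's) theorem, $s_{n_{j}}\to s_{\mu}$ locally uniformly and $s_{\mu}$ is holomorphic. Since $s_{\mu}=\lim s_{n_{j}}=s$ on $K$ and $K$ has a limit point in $\mathbb{C}_{+}$, the identity theorem forces $s_{\mu}=s$ throughout $\mathbb{C}_{+}$. Thus $s$ is the Stieltjes transform of the sub-probability measure $\mu$, and by the Stieltjes inversion formula $\mu$ is determined uniquely by $s$; in particular the vague limit does not depend on the chosen subsequence, so in fact $P_{n}\to\mu$ vaguely.

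It remains to characterize when $\mu$ is a genuine probability measure. Writing $iv\,s(iv)=\int\frac{iv}{x-iv}\,d\mu(x)=-\int\frac{1}{1-x/(iv)}\,d\mu(x)$, and noting $|iv/(x-iv)|=v/\sqrt{x^{2}+v^{2}}\le 1$ with pointwise limit $-1$ as $v\to\infty$, dominated convergence yields $\lim_{v\to\infty} iv\,s(iv)=-\mu(\mathbb{R})$. Hence $\lim_{v\to\infty} iv\,s(iv)=-1$ holds if and only if $\mu(\mathbb{R})=1$, which is exactly the claimed equivalence with $P:=\mu$. Finally, once $\mu$ is a probability measure, vague convergence of the probability measures $P_{n}$ to the probability measure $\mu$ automatically upgrades to weak convergence, so $P_{n}\Rightarrow P$.

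\textbf{Main obstacle.} The only genuinely delicate point is the possible escape of mass to infinity: one must carefully distinguish ``$s$ is the Stieltjes transform of \emph{some} sub-probability measure'' (which always follows from the hypotheses, via compactness plus analytic continuation) from ``that measure has total mass one'' (which is equivalent to the stated normalization), and then invoke the fact that vague convergence together with limiting total mass one implies weak convergence. The analytic-continuation step from $K$ to all of $\mathbb{C}_{+}$ via Vitali's theorem is routine but should be spelled out, since the hypothesis only supplies convergence on $K$.
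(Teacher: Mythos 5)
Your argument is correct. Note that the paper does not prove this lemma at all---it simply cites Geronimo and Hill (2003)---so there is no internal proof to compare against; your route (Helly selection to a sub-probability vague limit, pointwise convergence of $s_{n}$ on $\mathbb{C}_{+}$ via $C_{0}$ test functions, Vitali plus the identity theorem to pin down the limit from its values on $K$, Stieltjes inversion for subsequence-independence, and dominated convergence giving $\lim_{v\to\infty} iv\,s(iv)=-\mu(\mathbb{R})$ to detect escape of mass) is precisely the standard proof of the cited result, and your handling of the one delicate point (distinguishing sub-probability limits from probability limits) is what makes the equivalence work.
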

For proof, see \cite{geronimo2003necessary}.

The next Lemma is known as McDiarmid Inequality \cite{mcdiarmid1997centering}.
\begin{lemma}
Let $Y_{1},Y_{2},...,Y_{m}$ be independent random vectors taking
values in $\mathcal{X}$. Suppose that $f:\mathcal{X}^{k}\rightarrow\mathbb{R}$
is a function of $Y_{1},Y_{2},...,Y_{m}$ satisfying $\forall y_{1},y_{2},..,y_{m},y_{i}'$,
\[
|f(y_{1},y_{2},..,y_{i},..,y_{m})-f(y_{1},y_{2},..,y_{i}',..,y_{m})|\leq c_{i},
\]
Then for all $\epsilon>0,$
\[
P(|f(y_{1},y_{2},...,y_{m})-f(y_{1},y_{2},...,y_{m})|>\epsilon)\leq2exp(-\frac{2\epsilon^{2}}{\sum_{i=1}^{m}c_{i}^{2}}).
\]
\end{lemma}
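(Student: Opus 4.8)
The plan is to establish this by the standard Doob-martingale (bounded-differences) argument, which reduces the claim to a conditional application of Hoeffding's lemma in the spirit of Azuma--Hoeffding. Throughout I read the right-hand side of the displayed bound as the concentration of $f(Y_1,\dots,Y_m)$ about its mean $\mathbb{E}f(Y_1,\dots,Y_m)$ (the left-hand side as written reduces to $P(0>\epsilon)$, so the intended statement is the two-sided deviation of $f$ from $\mathbb{E}f$), and I work with $m$ arguments for $f$.

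First I would introduce the filtration $\mathcal{F}_k=\sigma(Y_1,\dots,Y_k)$, $k=0,1,\dots,m$, and the Doob martingale $Z_k=\mathbb{E}[\,f(Y_1,\dots,Y_m)\mid\mathcal{F}_k\,]$, so that $Z_0=\mathbb{E}f(Y_1,\dots,Y_m)$ and $Z_m=f(Y_1,\dots,Y_m)$ almost surely. Setting $D_k=Z_k-Z_{k-1}$, the sequence $(D_k)_{k=1}^m$ is a martingale difference sequence for $(\mathcal{F}_k)$ and $Z_m-Z_0=\sum_{k=1}^m D_k$. The structural heart of the proof is to show that, conditionally on $\mathcal{F}_{k-1}$, the increment $D_k$ lies in an interval of length at most $c_k$. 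This is exactly where independence of the $Y_i$ is used: since the $Y_j$ with $j\ne k$ are independent of $Y_k$ (and of $\mathcal{F}_{k-1}$ in the relevant way), for each deterministic value $y$ the quantity $\mathbb{E}[\,f\mid\mathcal{F}_{k-1},Y_k=y\,]$ equals the integral of $f(Y_1,\dots,Y_{k-1},y,Y_{k+1},\dots,Y_m)$ against the law of $(Y_{k+1},\dots,Y_m)$, and changing $y$ to $y'$ changes the integrand pointwise by at most $c_k$, hence the integral by at most $c_k$. Defining the $\mathcal{F}_{k-1}$-measurable quantities
\[
A_k=\inf_{y}\ \mathbb{E}[\,f\mid\mathcal{F}_{k-1},Y_k=y\,]-Z_{k-1},\qquad B_k=\sup_{y}\ \mathbb{E}[\,f\mid\mathcal{F}_{k-1},Y_k=y\,]-Z_{k-1},
\]
(essential inf/sup over the law of $Y_k$), one gets $A_k\le D_k\le B_k$ and $B_k-A_k\le c_k$.

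Next I would apply Hoeffding's lemma conditionally: $D_k$ has conditional mean zero given $\mathcal{F}_{k-1}$ and is supported in an interval of length $\le c_k$, so $\mathbb{E}[\,e^{\lambda D_k}\mid\mathcal{F}_{k-1}\,]\le \exp(\lambda^2 c_k^2/8)$ for all $\lambda\in\mathbb{R}$. Iterating this through the tower property (condition on $\mathcal{F}_{m-1}$, then $\mathcal{F}_{m-2}$, and so on) yields $\mathbb{E}[\,e^{\lambda(Z_m-Z_0)}\,]\le \exp\bigl(\lambda^2\sum_{k=1}^m c_k^2/8\bigr)$. A Chernoff bound then gives, for $\lambda>0$, $P(Z_m-Z_0>\epsilon)\le \exp\bigl(-\lambda\epsilon+\lambda^2\sum_k c_k^2/8\bigr)$, and choosing $\lambda=4\epsilon/\sum_k c_k^2$ gives $P\bigl(f(Y_1,\dots,Y_m)-\mathbb{E}f>\epsilon\bigr)\le \exp\bigl(-2\epsilon^2/\sum_k c_k^2\bigr)$. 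Running the identical argument with $-f$, which satisfies the same bounded-differences hypothesis with the same constants, and combining the two one-sided tail estimates by a union bound produces the stated two-sided inequality with the factor $2$.

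I expect the only genuinely delicate point to be the conditional range estimate $B_k-A_k\le c_k$: one must take the essential infimum and supremum with respect to the law of $Y_k$, justify measurability of $A_k,B_k$ with respect to $\mathcal{F}_{k-1}$ (immediate here via a regular conditional distribution, since $Y_k$ is independent of $\mathcal{F}_{k-1}$), and use independence precisely to ensure that "integrating out $Y_{k+1},\dots,Y_m$" is an operation that does not itself depend on $Y_k$. Everything after that — Hoeffding's lemma, the telescoping tower argument, and the Chernoff optimization — is routine and coincides with the usual proof of the Azuma--Hoeffding inequality; a fully detailed treatment is given in \cite{mcdiarmid1997centering}.
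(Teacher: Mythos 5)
Your argument is correct and is exactly the standard proof of McDiarmid's inequality — Doob martingale, conditional range bound of length $c_k$ via independence, conditional Hoeffding's lemma, tower property, and Chernoff optimization — which is the proof given in the cited reference \cite{mcdiarmid1997centering}; the paper itself states this lemma without proof, so there is nothing to diverge from. You were also right to read the displayed probability as the deviation of $f(Y_1,\dots,Y_m)$ from its expectation $\mathbb{E}f(Y_1,\dots,Y_m)$, since the left-hand side as printed is a typo.
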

\begin{lemma}\label{lemma1app}\footnote{Lemma \ref{lemma1app} and \ref{lemma2app} can be found in Theorem 1 in \citep{zheng2011estimation} in slightly different form.}
For $Y_{l}$, $\Sigma$, $\tau_{l}$ described as Assumption $\mathcal{A}$
and $z=iv$, define,
\[
M_{1}=Y_{l}'\Sigma^{\frac{1}{2}}(S_{l}-zI)^{-1}(\Sigma\sum_{l=1}^{n}\frac{\tau_{l}}{1+\tau_{l}a_{l}}-zI)^{-1}\Sigma^{\frac{1}{2}}Y_{l}
\]
and
\[
M_{2}=Y_{l}'\Sigma^{\frac{1}{2}}(S_{l}-zI)^{-1}(\Sigma\sum_{j\neq l}\frac{\tau_{j}}{1+\tau_{j}b_{j}^{l}}-zI)^{-1}\Sigma^{\frac{1}{2}}Y_{l}
\]
Then,
\[
\frac{1}{p}\mathrm{max_{l}}(M_{1}-M_{2})\rightarrow0\ a.s.
\]
\end{lemma}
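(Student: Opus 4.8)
The plan is to show that the difference $M_1 - M_2$ is small because the two resolvents $(\Sigma\sum_{l}\frac{\tau_l}{1+\tau_l a_l}-zI)^{-1}$ and $(\Sigma\sum_{j\neq l}\frac{\tau_j}{1+\tau_j b_j^l}-zI)^{-1}$ are close: they differ only in that the first sum includes the $l$-th term and uses the quadratic forms $a_j$ built from $S_j$, while the second excludes the $l$-th term and uses $b_j^l$ built from $S_{j,l}$. First I would write
\[
M_1 - M_2 = Y_l'\Sigma^{\frac12}(S_l-zI)^{-1}\bigl[(\Sigma\Gamma_1-zI)^{-1}-(\Sigma\Gamma_2-zI)^{-1}\bigr]\Sigma^{\frac12}Y_l,
\]
where $\Gamma_1 = \sum_{l}\frac{\tau_l}{1+\tau_l a_l}$ and $\Gamma_2 = \sum_{j\neq l}\frac{\tau_j}{1+\tau_j b_j^l}$ are scalars, and use the resolvent identity $(\Sigma\Gamma_1-zI)^{-1}-(\Sigma\Gamma_2-zI)^{-1} = (\Gamma_2-\Gamma_1)(\Sigma\Gamma_1-zI)^{-1}\Sigma(\Sigma\Gamma_2-zI)^{-1}$. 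By Lemma 3, both resolvents have operator norm at most $1/v$, and by Assumption $\mathcal{A}_6$ we have $\|\Sigma\|\leq Kp^{\delta}$ with $\delta<1/6$; also $(S_l-zI)^{-1}$ has norm $\leq 1/v$. Hence it remains to bound $|\Gamma_1-\Gamma_2|$ together with the quadratic form $Y_l'\Sigma^{\frac12}(\cdots)\Sigma^{\frac12}Y_l$, which is $O(p\,\|\Sigma\|^2/v^3)\cdot|\Gamma_1-\Gamma_2|$ deterministically, while Lemma 4 (the concentration bound for $X^*AX$) controls the stochastic fluctuation of the quadratic form around its trace.

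The heart of the argument is then to show $|\Gamma_1-\Gamma_2| = o(p^{-1-2\delta})$ a.s., uniformly in $l$. I would split $|\Gamma_1-\Gamma_2|\leq \frac{\tau_l}{|1+\tau_l a_l|} + \sum_{j\neq l}\tau_j\bigl|\frac{1}{1+\tau_j a_j}-\frac{1}{1+\tau_j b_j^l}\bigr|$. The first term is $O(1/n) = O(1/p)$ by $\mathcal{A}_5$ and the fact that $\mathrm{Im}(a_l)$ has a sign making $|1+\tau_l a_l|$ bounded below (since $z=iv$ and $(S_l-zI)^{-1}$ maps into the appropriate half-plane, $a_l\in Q_1$-type region). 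For the summed term, $|\frac{1}{1+\tau_j a_j}-\frac{1}{1+\tau_j b_j^l}| = \frac{\tau_j|a_j-b_j^l|}{|1+\tau_j a_j||1+\tau_j b_j^l|}\leq C\tau_j|a_j-b_j^l|$, and $a_j - b_j^l = Y_j'\Sigma^{\frac12}[(S_j-zI)^{-1}-(S_{j,l}-zI)^{-1}]\Sigma^{\frac12}Y_j$ is a rank-one perturbation difference, controlled by Lemma 2 after reducing the quadratic form to a trace via Lemma 4; this yields $|a_j-b_j^l| = O(\|\Sigma\|^2/(pv^2))$ plus a fluctuation term. Summing $n$ such terms each of size $O(\tau_j\cdot\|\Sigma\|^2/(pv^2)) = O(p^{2\delta-1}/n)$ and multiplying by the $O(p^{1+2\delta})$ prefactor from the resolvent bound gives a net bound of order $p^{6\delta-2}$, which is $o(1)$ precisely because $\delta<1/6$ — this is where the constraint $\delta<1/6$ in $\mathcal{A}_6$ is used.

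Finally, to upgrade convergence in probability to almost sure convergence uniformly over $l$, I would use a high-moment bound: Lemma 4 gives $\mathbb{E}|Y^*AY - \mathrm{tr}(A)|^{2k}\leq C_k p^k\|A\|^{2k}$ for every $k$, so choosing $k$ large enough (using that $Y_l^j$ has finite moments of all orders by $\mathcal{A}_3$) makes each tail probability $P(|M_1-M_2|>\varepsilon)$ summable in $p$ even after a union bound over $l=1,\dots,n$ with $n\asymp p$; Borel–Cantelli then finishes. The main obstacle is the careful bookkeeping of powers of $p$ in the deterministic bound: one must track the $\|\Sigma\|\leq Kp^{\delta}$ factors through both the resolvent-difference prefactor and the nested quadratic-form estimate for $a_j-b_j^l$, and verify that the exponents combine to something strictly negative, which they do only under $\delta<1/6$. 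Everything else is an application of the resolvent identity, the norm bounds in Lemmas 1–3, and the concentration inequality in Lemma 4, exactly as in the corresponding step of Theorem 1 of \citep{zheng2011estimation}.
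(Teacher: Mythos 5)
Your plan follows essentially the same route as the paper's proof: reduce $M_1-M_2$ to the scalar difference of the two weighted sums $\Gamma_1=\sum_{l}\frac{\tau_l}{1+\tau_l a_l}$ and $\Gamma_2=\sum_{j\neq l}\frac{\tau_j}{1+\tau_j b_j^l}$ (the paper does this by invoking its Lemma 1, the Bai-type perturbation bound, which is exactly the resolvent identity you write out), control the prefactor $|\Sigma^{1/2}Y_l|^2$ by Lemma 4 plus Borel--Cantelli, bound the single dropped $l$-th term using $n\tau_l\le\kappa$ from $\mathcal{A}_5$, and treat the remaining terms by concentration of quadratic forms (Lemma 4) together with a low-rank trace perturbation (Lemma 2), finishing with high moments and Borel--Cantelli. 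The only structural difference is that you compare $a_j$ with $b_j^l$ directly, whereas the paper compares both to the common quantity $c=\mathrm{tr}(\Sigma^{1/2}(S-zI)^{-1}\Sigma^{1/2})$; this is a cosmetic variant, since in either case the resolvents inside the quadratic forms ($S_j$, $S_{j,l}$) are independent of $Y_j$, which is what Lemma 4 requires. One genuine (small) gap in your reduction: $\|(\Sigma\Gamma_i-zI)^{-1}\|\le 1/v$ does not follow from Lemma 3, because $\Gamma_i$ is complex and $\Sigma\Gamma_i$ is not Hermitian; you need $\mathrm{Re}(-\Gamma_i/z)\ge 0$, which comes from the $Q_1$ facts (Lemma 7), or you can simply use the paper's Lemma 1 whose hypotheses encode exactly this.

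The quantitative bookkeeping, however, needs repair. First, the stated target $|\Gamma_1-\Gamma_2|=o(p^{-1-2\delta})$ is both stronger than necessary and false: the dropped term $\tau_l/(1+\tau_l a_l)$ alone is of order $1/n\asymp 1/p\gg p^{-1-2\delta}$. What your own prefactor accounting actually requires is $p^{2\delta}\max_l|\Gamma_1-\Gamma_2|\to 0$ (the paper, using $Y_l'\Sigma Y_l=O(p)$ rather than the crude $p\|\Sigma\|$, needs only $p^{\delta}\max_l|\Gamma_1-\Gamma_2|\to 0$). Second, Lemma 2 gives $|\mathrm{tr}([(S_j-zI)^{-1}-(S_{j,l}-zI)^{-1}]\Sigma)|\le\|\Sigma\|/v=O(p^{\delta})$, not $O(\|\Sigma\|^2/(pv^2))$, and the fluctuation terms $|Y_j'AY_j-\mathrm{tr}\,A|$ that Lemma 4 controls are in fact the dominant piece, of order $p^{1/2+\delta+\epsilon}$ almost surely after the union bound; hence your advertised net rate $p^{6\delta-2}$ does not follow from the estimates you state. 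With the correct accounting, $\sum_{j\ne l}\tau_j^2|a_j-b_j^l|=O\!\big(p^{-1/2+\delta+\epsilon}\big)$, and after the $p^{2\delta}$ prefactor the bound is $O\!\big(p^{3\delta-1/2+\epsilon}\big)\to 0$ precisely when $\delta<1/6$ and $\epsilon$ is small -- so your conclusion, and the role you assign to $\mathcal{A}_6$, are correct, but the intermediate rates must be rewritten along these lines (this is exactly how the paper tracks them, via $\max_l\max_{j\ne l}p^{\epsilon}|b_j^l-c|/p\to 0$ for $\epsilon<\tfrac12-\delta$ with $k>\tfrac{3}{1-2\delta-2\epsilon}$).
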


\begin{proof}[Proof of Lemma \ref{lemma1app}]

\begin{align*}
 & \frac{1}{p}\mathrm{max_{l}}|Y_{l}'\Sigma^{\frac{1}{2}}(S_{l}-zI)^{-1}(\Sigma\sum_{l=1}^{n}\frac{\tau_{l}}{1+\tau_{l}a_{l}}-zI)^{-1}\Sigma^{\frac{1}{2}}Z_{l}-Y_{l}'\Sigma^{\frac{1}{2}}(S_{l}-zI)^{-1}(\Sigma\sum_{j\neq l}\frac{\tau_{j}}{1+\tau_{j}b_{j}^{l}}-zI)^{-1}\Sigma^{\frac{1}{2}}Y_{l}|\\
 & =\frac{1}{p}\mathrm{max_{l}}|Y_{l}'\Sigma^{\frac{1}{2}}(S_{l}-zI)^{-1}(-\frac{1}{z})(-\frac{1}{z}\Sigma\sum_{l=1}^{n}\frac{\tau_{l}}{1+\tau_{l}a_{l}}+I)^{-1}\Sigma^{\frac{1}{2}}Y_{l}\\
 & -Y_{l}'\Sigma^{\frac{1}{2}}(S_{l}-zI)^{-1}(-\frac{1}{z})(-\frac{1}{z}\Sigma\sum_{j\neq l}\frac{\tau_{j}}{1+\tau_{j}b_{j}^{l}}+I)^{-1}\Sigma^{\frac{1}{2}}Y_{l}|\\
 & \leq\frac{1}{p}\mathrm{max}_{l}|-\frac{1}{z}\sum_{l}\frac{\tau_{l}}{1+\tau_{l}a_{l}}+\frac{1}{z}\sum_{j\neq l}\frac{\tau_{j}}{1+\tau_{j}b_{j}^{l}}||\Sigma^{\frac{1}{2}}Y_{l}|^{2}Cp^{\delta}
\end{align*}
The last inequality is due to Lemma 1 (where $q=\Sigma^{\frac{1}{2}}Y_{l}$,
$B=(S_{l}-zI)^{-1},$$A=\Sigma$, $w_{1}=-\frac{1}{z}\sum_{l=1}^{n}\frac{\tau_{l}}{1+\tau_{l}a_{l}}$,
$w_{2}=-\frac{1}{z}\sum_{j\neq l}\frac{\tau_{j}}{1+\tau_{j}b_{j}^{l}}$), and Assumption 6. \\
Applying Markov Inequality and Lemma 4,
\begin{align*}
P(|Y_{l}'\Sigma Y_{l}-tr(\Sigma)| & \geq p\epsilon)\leq\frac{E|Y_{l}'\Sigma Y_{l}-\mathrm{tr}(\Sigma)|^{2k}}{(p\epsilon)^{2k}}\\
 & \leq\frac{C_{k}p^{k}||\Sigma||^{2k}}{(p\epsilon)^{2k}}\\
 & \leq\frac{CC_{k}p^{k}p^{2\delta k}}{(p\epsilon)^{2k}}
\end{align*}
\textcolor{black}{Choosing $k>\frac{2}{1-2\delta}$}, and using Borel
Cantelli lemma, we
get $\mathrm{max}_{l}|Y_{l}'\Sigma Y_{l}-tr(\Sigma)|/p\rightarrow0$
a.s. As a consequence $\mathrm{max}_{l}|\Sigma^{\frac{1}{2}}Y_{l}|^{2}/p<M$$\forall n>n_{0}$
for some $n_{0}$.

Define, $c=tr(\Sigma^{\frac{1}{2}}(S-zI)^{-1}\Sigma^{\frac{1}{2}})$,
and consider
\[
\mathrm{max}_{l}p^{\delta}|\sum_{l}\frac{\tau_{l}}{1+\tau_{l}a_{l}}-\sum_{j\neq l}\frac{\tau_{j}}{1+\tau_{j}b_{j}^{l}}|
\]

\begin{align*}
 & =\mathrm{max}_{l}p^{\delta}|\sum_{l}\frac{\tau_{l}}{1+\tau_{l}a_{l}}-\sum_{l}\frac{\tau_{l}}{1+\tau_{l}c}+\sum_{l}\frac{\tau_{l}}{1+\tau_{l}c}-\sum_{j\neq l}\frac{\tau_{j}}{1+\tau_{j}c}+\sum_{j\neq l}\frac{\tau_{j}}{1+\tau_{j}c}-\sum_{j\neq l}\frac{\tau_{j}}{1+\tau_{j}b_{j}^{l}}|\\
 & \leq\mathrm{max}_{l}p^{\delta}|\sum_{l}\frac{\tau_{l}}{1+\tau_{l}a_{l}}-\sum_{l}\frac{\tau_{l}}{1+\tau_{l}c}|+\mathrm{max}_{l}p^{\delta}|\sum_{l}\frac{\tau_{l}}{1+\tau_{l}c}-\sum_{j\neq l}\frac{\tau_{j}}{1+\tau_{j}c}|+\mathrm{max}_{l}|\sum_{j\neq l}\frac{\tau_{j}}{1+\tau_{j}c}-\sum_{j\neq l}\frac{\tau_{j}}{1+\tau_{j}b_{j}^{l}}|\\
 & =\mathrm{max}_{l}p^{\delta}|\sum_{l}\frac{\tau_{l}}{1+\tau_{l}a_{l}}-\sum_{l}\frac{\tau_{l}}{1+\tau_{l}c}|+\mathrm{max}_{l}p^{\delta}|\frac{\tau_{l}}{1+\tau_{l}c}|+\mathrm{max}_{l}|\sum_{j\neq l}\frac{\tau_{j}}{1+\tau_{j}c}-\sum_{j\neq l}\frac{\tau_{j}}{1+\tau_{j}b_{j}^{l}}|
\end{align*}

$\mathrm{max}_{l}p^{\delta}|\frac{\tau_{l}}{1+\tau_{l}c}|\leq\mathrm{max}_{l}p^{\delta}|\frac{n\tau_{l}}{n(1+\tau_{l}c)}|\leq\frac{\kappa p^{\delta}}{n}\rightarrow0$

Now we will consider the third part of the above equation,
\begin{align*}
 & \mathrm{max}_{l}p^{\delta}|\sum_{j\neq l}\frac{\tau_{j}}{1+\tau_{j}b_{j}^{l}}-\sum_{j\neq l}\frac{\tau_{j}}{1+\tau_{j}c}|\\
 & =\mathrm{max}_{l}p^{\delta}|\sum_{j\neq l}\frac{\tau_{j}^{2}(c-b_{j}^{l})}{(1+\tau_{j}b_{j}^{l})(1+\tau_{j}c)}|\\
 & =\mathrm{max}_{l}p^{\delta}|\sum_{j\neq l}\frac{\tau_{j}^{2}p(c-b_{j}^{l})/p}{(1+\tau_{j}b_{j}^{l})(1+\tau_{j}c)}|
\end{align*}

$\mathrm{max}_{l}|\sum_{j\neq l}\frac{\tau_{j}^{2}p}{(1+\tau_{j}b_{j}^{l})(1+\tau_{j}c)}|\leq\frac{\kappa^{2}p}{n^{2}}$
by\textcolor{black}{{} Assumption 1 and 6,}

\begin{align*}
 & \mathrm{max}_{l}\mathrm{max}_{j\neq l}p^{\epsilon}|b_{j}^{l}/p-c/p|\\
 & =\mathrm{max}_{l}\mathrm{max}_{j\neq l}p^{\epsilon}|Y_{j}'\Sigma^{\frac{1}{2}}(S_{j,l}-zI)^{-1}\Sigma^{\frac{1}{2}}Y_{j}/p-tr(\Sigma^{\frac{1}{2}}(S-zI)^{-1}\Sigma^{\frac{1}{2}}/p)|\\
 & =\mathrm{max}_{l}\mathrm{max}_{j\neq l}p^{\epsilon}|Y_{j}'\Sigma^{\frac{1}{2}}(S_{j,l}-zI)^{-1}\Sigma^{\frac{1}{2}}Y_{j}/p-tr(\Sigma^{\frac{1}{2}}(S_{j,l}-zI)^{-1}\Sigma^{\frac{1}{2}}/p)|\\
 & +\mathrm{max}_{l}\mathrm{max}_{j\neq l}p^{\epsilon}|tr(\Sigma^{\frac{1}{2}}(S_{j,l}-zI)^{-1}\Sigma^{\frac{1}{2}}/p)-tr(\Sigma^{\frac{1}{2}}(S-zI)^{-1}\Sigma^{\frac{1}{2}}/p)|
\end{align*}

\textcolor{black}{Use of Lemma 4, Lemma 3 with Borel Cantelli Lemma
will give us for} $\epsilon<\frac{1}{2}-\delta$ and $k>\frac{3}{1-2\delta-2\epsilon}$
,
\[
\mathrm{max}_{l}\mathrm{max}_{j\neq l}p^{\epsilon}|Y_{j}'\Sigma^{\frac{1}{2}}(S_{j,l}-zI)^{-1}\Sigma^{\frac{1}{2}}Y_{j}/p-tr(\Sigma^{\frac{1}{2}}(S_{j,l}-zI)^{-1}\Sigma^{\frac{1}{2}}/p)|\rightarrow0\ \mathrm{a.s.}
\]

Also,
\begin{align*}
 & \mathrm{max}_{l}\mathrm{max}_{j\neq l}p^{\epsilon}|\frac{1}{p}tr(\Sigma^{\frac{1}{2}}(S_{j,l}-zI)^{-1}\Sigma^{\frac{1}{2}})-\frac{1}{p}tr(\Sigma^{\frac{1}{2}}(S-zI)^{-1}\Sigma^{\frac{1}{2}})|\\
 & =\mathrm{max}_{l}\mathrm{max}_{j\neq l}p^{\epsilon}|\frac{1}{p}tr[\Sigma^{\frac{1}{2}}\{(S_{j,l}-zI)^{-1}-(S-zI)^{-1}\}\Sigma^{\frac{1}{2}}]|\\
 & =\mathrm{max}_{l}\mathrm{max}_{j\neq l}p^{\epsilon}|\frac{1}{p}tr[\{(S_{j,l}-zI)^{-1}-(S-zI)^{-1}\}\Sigma]|\\
 & \leq\mathrm{max}_{l}\mathrm{max}_{j\leq l}p^{\epsilon}\frac{1}{p}\frac{||\Sigma||}{v}\\
 & \leq\frac{1}{p}\frac{Cp^{\delta+\epsilon}}{v}\rightarrow0\ \mathrm{a.s.}
\end{align*}
The first and second inequalities are result of
application of Lemma 2 and Assumption 6 respectively.\\
This proves out claim.
\end{proof}

\begin{lemma}\label{lemma2app}
For $Y_{l}$, $\Sigma$, $\tau_{l}$ described as Assumption $\mathcal{A}$
and $z=iv$, define,
\[
M_{3}=Y_{l}'\Sigma^{\frac{1}{2}}(S_{l}-zI)^{-1}(\Sigma\sum_{j\neq l}\frac{\tau_{j}}{1+\tau_{j}b_{j}^{l}}-zI)^{-1}\Sigma^{\frac{1}{2}}Y_{l}
\]
and
\[
M_{4}=tr[\Sigma^{\frac{1}{2}}(S_{l}-zI)^{-1}(\Sigma\sum_{j\neq l}\frac{\tau_{j}}{1+\tau_{j}b_{j}^{l}}-zI)^{-1}\Sigma^{\frac{1}{2}}],
\]
then
\[
\frac{1}{p}max_{l}|M_{3}-M_{4}|\rightarrow0\ a.s.
\]
\end{lemma}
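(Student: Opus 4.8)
The plan is to read Lemma \ref{lemma2app} as a concentration statement for a quadratic form about its trace, following exactly the template already used in the proof of Lemma \ref{lemma1app} (and in Theorem 1 of \cite{zheng2011estimation}). Set
\[
A_{l}\;=\;\Sigma^{\frac12}(S_{l}-zI)^{-1}\Big(\Sigma\sum_{j\neq l}\tfrac{\tau_{j}}{1+\tau_{j}b_{j}^{l}}-zI\Big)^{-1}\Sigma^{\frac12},
\]
so that $M_{3}=Y_{l}'A_{l}Y_{l}$ and $M_{4}=\operatorname{tr}(A_{l})$. The crucial structural observation is that $S_{l}=\sum_{j\neq l}\Sigma^{\frac12}Y_{j}Y_{j}'\Sigma^{\frac12}$ and each $b_{j}^{l}=Y_{j}'\Sigma^{\frac12}(S_{j,l}-zI)^{-1}\Sigma^{\frac12}Y_{j}$ depend only on $\{Y_{j}:j\neq l\}$; hence $A_{l}$ is independent of $Y_{l}$. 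Conditioning on $\{Y_{j}:j\neq l\}$ therefore freezes $A_{l}$, and the problem collapses to bounding $\mathbb{E}\big|Y_{l}'A_{l}Y_{l}-\operatorname{tr}(A_{l})\big|^{2k}$ for a deterministic matrix, which is precisely the setting of Lemma 4.

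Next I would establish a deterministic bound $\|A_{l}\|\leq Kp^{\delta}/v^{2}$ valid uniformly in $l$ (a.s.). Submultiplicativity of the operator norm together with $\|\Sigma^{\frac12}\|^{2}=\|\Sigma\|\leq Kp^{\delta}$ (Assumption $\mathcal{A}_{6}$) and $\|(S_{l}-zI)^{-1}\|\leq 1/v$ (Lemma 3, since $S_{l}$ is Hermitian nonnegative definite) reduce this to showing $\big\|\big(\Sigma\sum_{j\neq l}\tfrac{\tau_{j}}{1+\tau_{j}b_{j}^{l}}-zI\big)^{-1}\big\|\leq 1/v$. This is the only non-routine step, because the matrix is \emph{not} Hermitian: the scalar $\sum_{j\neq l}\tfrac{\tau_{j}}{1+\tau_{j}b_{j}^{l}}$ is complex. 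I would handle it by writing $\Sigma\sum_{j\neq l}\tfrac{\tau_{j}}{1+\tau_{j}b_{j}^{l}}-zI=-z(I+\eta\Sigma)$ with $\eta=-\tfrac1z\sum_{j\neq l}\tfrac{\tau_{j}}{1+\tau_{j}b_{j}^{l}}$, observing that $\eta\in Q_{1}$ since each summand $-\tfrac1z\cdot\tfrac{\tau_{j}}{1+\tau_{j}b_{j}^{l}}\in Q_{1}$ by Lemma 7 and $Q_{1}$ is a convex cone; in particular $\operatorname{Re}(\eta)\geq 0$. Then for every $x$, $\operatorname{Re}\big(x^{*}(I+\eta\Sigma)x\big)=|x|^{2}+\operatorname{Re}(\eta)\,x^{*}\Sigma x\geq |x|^{2}$ (using $\Sigma\geq 0$), whence $\|(I+\eta\Sigma)x\|\geq|x|$, so $\|(I+\eta\Sigma)^{-1}\|\leq 1$ and the claimed resolvent bound follows with constant $1/|z|=1/v$. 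Multiplying the three factors gives $\|A_{l}\|\leq Kp^{\delta}/v^{2}$.

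Finally I would run the usual Markov--Borel--Cantelli argument. Applying Lemma 4 conditionally on $\{Y_{j}:j\neq l\}$ and inserting the bound on $\|A_{l}\|$,
\[
\mathbb{E}\big|M_{3}-M_{4}\big|^{2k}\;\leq\;C_{k}\,p^{k}\|A_{l}\|^{2k}\;\leq\;C_{k}K^{2k}v^{-4k}\,p^{\,k(1+2\delta)},
\]
so by Markov's inequality $P\big(\tfrac1p|M_{3}-M_{4}|>\epsilon\big)\leq C_{k}K^{2k}v^{-4k}\epsilon^{-2k}p^{-k(1-2\delta)}$ for each $l$. A union bound over $l=1,\dots,n$, combined with $n\sim p/c$ from Assumption $\mathcal{A}_{1}$, yields
\[
P\Big(\max_{l}\tfrac1p|M_{3}-M_{4}|>\epsilon\Big)\;\leq\;C\,p^{\,1-k(1-2\delta)}\epsilon^{-2k}.
\]
Since $\delta<1/6$ we have $1-2\delta>2/3$, so choosing $k>2/(1-2\delta)$ (legitimate because the $Y_{l}^{j}$ have finite moments of all orders by $\mathcal{A}_{3}$) makes the right-hand side summable in $n$; Borel--Cantelli then gives $\max_{l}\tfrac1p|M_{3}-M_{4}|\to 0$ a.s. The main obstacle, as flagged, is the operator-norm control of the non-Hermitian resolvent, which is resolved by the $Q_{1}$-cone argument above; everything else parallels the proof of Lemma \ref{lemma1app}.
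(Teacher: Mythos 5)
Your proposal is correct and follows essentially the same route as the paper: concentration of the quadratic form about its trace via Lemma 4 (applied conditionally, since $S_{l}$ and the $b_{j}^{l}$ are independent of $Y_{l}$), then Markov's inequality with a suitably large $k$ and Borel--Cantelli, exactly as in the paper's brief argument. Your additional $Q_{1}$-cone bound $\bigl\|\bigl(\Sigma\sum_{j\neq l}\tfrac{\tau_{j}}{1+\tau_{j}b_{j}^{l}}-zI\bigr)^{-1}\bigr\|\leq 1/v$ simply makes explicit the operator-norm control that the paper leaves implicit in its stated bound $C_{k}p^{k}p^{2\delta k}/v^{2k}$, so there is no substantive difference in approach.
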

\begin{proof}[Proof of Lemma \ref{lemma2app}]
Using Lemma 4 and and Markov Inequality it is easy to show that,
\begin{align*}
 & E(Y_{l}'\Sigma^{\frac{1}{2}}(S_{l}-zI)^{-1}(\Sigma\sum_{j\neq l}\frac{\tau_{j}}{1+\tau_{j}b_{j}^{l}}-zI)^{-1}\Sigma^{\frac{1}{2}}Y_{l}-tr[\Sigma^{\frac{1}{2}}(S_{l}-zI)^{-1}(\Sigma\sum_{j\neq l}\frac{\tau_{j}}{1+\tau_{j}b_{j}^{l}}-zI)^{-1}\Sigma^{\frac{1}{2}}])\\
 & \leq C_{k}p^{k}\frac{p^{2\delta k}}{v^{2k}}
\end{align*}
After choosing appropriate value of $k$ and using Borel Cantelli Lemma
we can get the claim.
\end{proof}
\section{Proof of the Remarks:}
\begin{proof}[Proof of Remark 1]
\begin{align*}
M & =\sqrt{\frac{n}{p}}(WW^{*}-(\sum_{l}\tau_{l})\Lambda)\\
 & =\sqrt{\frac{n}{p}}(\sum_{k=1}^{p}e_{k}w_{k}^{*}W^{*}-(\sum_{l}\tau_{l})\Lambda)\\
 & =\sqrt{\frac{n}{p}}\sum_{k=1}^{p}e_{k}w_{k}^{*}(W_{l}^{*}+w_{k}e_{k}^{*})-\sqrt{\frac{n}{p}}(\sum_{l}\tau_{l})\Lambda\\
 & =\sqrt{\frac{n}{p}}\sum_{k=1}^{p}e_{k}w_{k}^{*}W_{k}^{*}+\sqrt{\frac{n}{p}}\sum_{k=1}^{p}e_{k}w_{k}^{*}w_{k}e_{k}^{*}-\sqrt{\frac{n}{p}}(\sum_{l}\tau_{l})\Lambda\\
 & =\sqrt{\frac{n}{p}}\sum_{k=1}^{p}e_{k}(W_{k}w_{k})^{*}+\sqrt{\frac{n}{p}}\sum_{k=1}^{p}e_{k}w_{k}^{*}w_{k}e_{k}^{*}-\sqrt{\frac{n}{p}}(\sum_{l}\tau_{l})\sum_{k=1}^{p}\lambda_{k}e_{k}e_{k}^{*}\\
 & =\sum_{k=1}^{p}e_{k}h_{k}^{*}+\sum_{k=1}^{p}e_{k}\sqrt{\frac{n}{p}}(w_{k}^{*}w_{k}-(\sum_{l}\tau_{l})\lambda_{k})e_{k}^{*}\\
 & =\sum_{k=1}^{p}e_{k}h_{k}^{*}+\sum_{k=1}^{p}e_{k}\sqrt{\frac{n}{p}}(w_{k}^{*}w_{k}-(\sum_{l}\tau_{l})\lambda_{k})e_{k}^{*}\\
 & =\sum_{k=1}^{p}e_{k}h_{k}^{*}+\sum_{k=1}^{p}e_{k}t_{kk}e_{k}^{*}\\
 & =\sum_{k=1}^{p}e_{k}(h_{k}+t_{kk}e_{k})^{*},
\end{align*}
also it is easy to see that,

\begin{align*}
M & =\sqrt{\frac{n}{p}}(WW^{*}-(\sum_{l}\tau_{l})\Lambda)\\
 & =\sqrt{\frac{n}{p}}(W_{k}W^{*}+e_{k}w_{k}^{*}W^{*}-(\sum_{l}\tau_{l})\Lambda_{k}-(\sum_{l}\tau_{l})e_{k}e_{k}^{*}\lambda_{k})\\
 & =\sqrt{\frac{n}{p}}(W_{k}W^{*}-(\sum_{l}\tau_{l})\Lambda_{k})+\sqrt{\frac{n}{p}}[e_{k}w_{k}^{*}(W_{k}^{*}+w_{k}e_{k}^{*})-(\sum_{l}\tau_{l})e_{k}e_{k}^{*}\lambda_{k}]\\
 & =M_{k}+e_{k}h_{k}^{*}+\sqrt{\frac{n}{p}}e_{k}w_{k}^{*}w_{k}e_{k}^{*}-\sqrt{\frac{n}{p}}(\sum_{l}\tau_{l})e_{k}e_{k}^{*}\lambda_{k}\\
 & =M_{k}+e_{k}h_{k}^{*}+e_{k}\sqrt{\frac{n}{p}}(w_{k}^{*}w_{k}-(\sum_{l}\tau_{l})\lambda_{k})e_{k}^{*}\\
 & =M_{k}+e_{k}h_{k}^{*}+t_{kk}e_{k}e_{k}^{*}\\
 & =M_{k}+e_{k}(h_{k}+t_{kk}e_{k})^{*}.
\end{align*}
\end{proof}

\begin{proof}[Proof of Remark2:]
\begin{align*}
 & tr[e_{k}(h_{k}+t_{kk}e_{k})^{*}(M_{k}-zI)^{-1}(h_{k}+t_{kk}e_{k})^{*}(M_{k}-zI)^{-1}e_{k}]\\
 & =(h_{k}+t_{kk}e_{k})^{*}(M_{k}-zI)^{-1}e_{k}tr[e_{k}(h_{k}+t_{kk}e_{k})^{*}(M_{k}-zI)^{-1}]\\
 & =\{(h_{k}+t_{kk}e_{k})^{*}(M_{k}-zI)^{-1}e_{k}\}\{(h_{k}+t_{kk}e_{k})^{*}(M_{k}-zI)^{-1}e_{k}\}\\
 & =\{(h_{k}+t_{kk}e_{k})^{*}(M_{k}-zI)^{-1}e_{k}\}^{2}
\end{align*}
and
\begin{align*}
 & tr[e_{k}(h_{k}+t_{kk}e_{k})^{*}(M_{k}-zI)^{-1}e_{k}(h_{k}+t_{kk}e_{k})^{*}(M_{k}-zI)^{-1}]\\
 & =tr[(h_{k}+t_{kk}e_{k})^{*}(M_{k}-zI)^{-1}e_{k}(h_{k}+t_{kk}e_{k})^{*}(M_{k}-zI)^{-1}e_{k}]\\
 & =\{(h_{k}+t_{kk}e_{k})^{*}(M_{k}-zI)^{-1}e_{k}\}^{2}.
\end{align*}
So,
\begin{align*}
LHS & =tr(M(M-zI)^{-1})\\
 & =tr\{e_{k}(h_{k}+t_{kk}e_{k})^{*}(M_{k}-zI+e_{k}(h_{k}+t_{kk}e_{k})^{*})^{-1}\}\\
 & =tr[e_{k}(h_{k}+t_{kk}e_{k})^{*}(M_{k}(z)^{-1}-\frac{(M_{k}-zI)^{-1}e_{k}(h_{k}+t_{kk}e_{k})^{*}(M_{k}-zI)^{-1}}{1+(h_{k}+t_{kk}e_{k})^{*}(M_{k}-zI)^{-1}e_{k}})]\\
 & =tr[e_{k}(h_{k}+t_{kk}e_{k})^{*}((M_{k}-zI)^{-1}-\frac{(M_{k}-zI)^{-1}e_{k}(h_{k}+t_{kk}e_{k})^{*}(M_{k}-zI)^{-1}}{1+(h_{k}+t_{kk}e_{k})^{*}(M_{k}-zI)^{-1}e_{k}})]\\
 & =tr[e_{k}(h_{k}+t_{kk}e_{k})^{*}(\frac{(M_{k}-zI)^{-1}(1+(h_{k}+t_{kk}e_{k})^{*}(M_{k}-zI)^{-1}e_{k})-(M_{k}-zI)^{-1}e_{k}(h_{k}+t_{kk}e_{k})^{*}(M_{k}-zI)^{-1}}{1+(h_{k}+t_{kk}e_{k})^{*}M_{k}(z)^{-1}e_{k}})\\
 & =tr[\frac{e_{k}(h_{k}+t_{kk}e_{k})^{*}(M_{k}-zI)^{-1}}{1+(h_{k}+t_{kk}e_{k})^{*}(M_{k}-zI)^{-1}e_{k}}]\\
 & =\frac{(h_{k}+t_{kk}e_{k})^{*}(M_{k}-zI)^{-1}e_{k}}{1+(h_{k}+t_{kk}e_{k})^{*}(M_{k}-zI)^{-1}e_{k}}
\end{align*}
\end{proof}

\begin{proof}[Proof of Remark 3]
To see this, manipulate the left hand side the following way-

\begin{align*}
(h_{k}+t_{kk}e_{k})^{*}(M_{k}-zI)^{-1}e_{k} & =(h_{k}+t_{kk}e_{k})^{*}(\sqrt{\frac{n}{p}}(W_{k}W^{*}-(\sum_{l}\tau_{l})\Lambda_{k})-zI)^{-1}e_{k}\\
 & =(h_{k}+t_{kk}e_{k})^{*}(\sqrt{\frac{n}{p}}(W_{k}W_{k}^{*}+W_{k}w_{k}e_{k}^{*}-(\sum_{l}\tau_{l})\Lambda_{k})-zI)^{-1}e_{k}\\
 & =(h_{k}+t_{kk}e_{k})^{*}(\sqrt{\frac{n}{p}}(W_{k}W_{k}^{*}-(\sum_{l}\tau_{l})\Lambda_{k})-zI+\sqrt{\frac{n}{p}}W_{k}w_{k}e_{k}^{*})^{-1}e_{k}\\
 & =(h_{k}+t_{kk}e_{k})^{*}(\sqrt{\frac{n}{p}}(W_{k}W_{k}^{*}-(\sum_{l}\tau_{l})\Lambda_{k})-zI+h_{k}e_{k}^{*})^{-1}e_{k}\\
 & =(h_{k}+t_{kk}e_{k})^{*}((\bar{M}_{k}-zI)+h_{k}e_{k}^{*})^{-1}e_{k}\\
 & =(h_{k}+t_{kk}e_{k})^{*}(\bar{(M}_{k}-zI)^{-1}-\frac{\bar{(M}_{k}-zI)^{-1}h_{k}e_{k}^{*}\bar{(M}_{k}-zI)^{-1}}{1+e_{k}^{*}\bar{(M}_{k}-zI)^{-1}h_{k}})e_{k}\\
 & =(h_{k}+t_{kk}e_{k})^{*}\bar{M}_{k}(z)^{-1}e_{k}-\frac{(h_{k}+t_{kk}e_{k})^{*}\bar{(M}_{k}-zI)^{-1}h_{k}e_{k}^{*}\bar{(M}_{k}-zI)^{-1}e_{k}}{1+e_{k}^{*}\bar{(M}_{k}-zI)^{-1}h_{k}}.
\end{align*}
Observe that $h_{k}^{*}e_{k}=e^{*}h_{k}=0$ because $W_{k}^{*}e_{k}=0$.
So,
\begin{align*}
\bar{(M}_{k}-zI)e_{k} & =\sqrt{\frac{n}{p}}(W_{k}W_{k}^{*}-(\sum_{l}\tau_{l})\Lambda_{k})e_{k}-ze_{k}\\
 & =\sqrt{\frac{n}{p}}(W_{k}W_{k}^{*}e_{k}-(\sum_{l}\tau_{l})\Lambda_{k}e_{k})-ze_{k}\\
 & =-ze_{k}.
\end{align*}
As a consequence $e_{k}^{*}(\bar{M}_{k}-zI)^{-1}h_{k}=-e_{k}^{*}h_{k}/z=0$.
And,
\begin{align*}
(h_{k}+t_{kk}e_{k})^{*}(M_{k}-zI)^{-1}e_{k} & =-(h_{k}+t_{kk}e_{k})^{*}e_{k}/z+(h_{k}+t_{kk}e_{k})^{*}\bar{(M}_{k}-zI)^{-1}h_{k}e_{k}^{*}e_{k}/z\\
 & =-t_{kk}/z+(h_{k}+t_{kk}e_{k})^{*}\bar{(M}_{k}-zI)^{-1}h_{k}/z\\
 & =\frac{-t_{kk}+h_{k}^{*}\bar{(M}_{k}-zI)^{-1}h_{k}+t_{kk}e_{k}{}^{*}\bar{(M}_{k}-zI)^{-1}h_{k}}{z}\\
 & =\frac{-t_{kk}+h_{k}^{*}\bar{(M}_{k}-zI)^{-1}h_{k}}{z}.
\end{align*}
\end{proof}

\begin{proof}[Proof of Remark 4]
\begin{align*}
h_{k}^{*}\bar{(M}_{k}-zI)^{-1}h_{k} & =\frac{n}{p}w_{k}^{*}W_{k}^{*}\bar{(M}_{k}-zI)^{-1}W_{k}w_{k}\\
 & =\frac{n}{p}tr(W_{k}w_{k}w_{k}^{*}W_{k}^{*}\bar{(M}_{k}-zI)^{-1})\\
 & =\frac{n\lambda_{k}}{pn}tr(W_{k}diag(\tau_{1}^{\frac{1}{2}},..,\tau_{n}^{\frac{1}{2}})UY_{k}Y_{k}^{*}U^{*}diag(\tau_{1}^{\frac{1}{2}},..,\tau_{n}^{\frac{1}{2}})W_{k}^{*}\bar{(M}_{k}-zI)^{-1})\\
 & =\frac{n\lambda_{k}}{p}tr(W_{k}diag(\tau_{1}^{\frac{1}{2}},..,\tau_{n}^{\frac{1}{2}})U(Y_{k}Y_{k}^{*}-I)U^{*}diag(\tau_{1}^{\frac{1}{2}},..,\tau_{n}^{\frac{1}{2}})W_{k}^{*}\bar{(M}_{k}-zI)^{-1})\\
 & +\frac{n\lambda_{k}}{p}tr(W_{k}diag(\tau_{1},..,\tau_{n})W_{k}^{*}\bar{(M}_{k}-zI)^{-1})\\
 & =\frac{n\lambda_{k}}{p}tr(W_{k}diag(\tau_{1}^{\frac{1}{2}},..,\tau_{n}^{\frac{1}{2}})U(Y_{k}Y_{k}^{*}-I)U^{*}diag(\tau_{1}^{\frac{1}{2}},..,\tau_{n}^{\frac{1}{2}})W_{k}^{*}\bar{(M}_{k}-zI)^{-1})\\
 & +\frac{n\lambda_{k}}{p}tr(\Lambda_{k}^{\frac{1}{2}}UY_{n}(diag(\tau_{1},..,\tau_{n}))^{2}Y_{n}^{*}U^{*}\Lambda_{k}^{\frac{1}{2}}\bar{(M}_{k}-zI)^{-1})\\
 & =\frac{n\lambda_{k}}{p}tr(W_{k}diag(\tau_{1}^{\frac{1}{2}},..,\tau_{n}^{\frac{1}{2}})U(Y_{k}Y_{k}^{*}-I)U^{*}diag(\tau_{1}^{\frac{1}{2}},..,\tau_{n}^{\frac{1}{2}})W_{k}^{*}\bar{(M}_{k}-zI)^{-1})\\
 & +\frac{n\lambda_{k}}{p}tr(\Lambda_{k}^{\frac{1}{2}}UY_{n}(\sum_{l}\tau_{n}^{2}e_{l}e_{l}^{*}-(\sum_{l}\tau_{n}^{2})I)Y_{n}^{*}U^{*}\Lambda_{k}^{\frac{1}{2}}\bar{(M}_{k}-zI)^{-1})\\
 & +\frac{n\lambda_{k}}{p}tr(\Lambda_{k}^{\frac{1}{2}}UY_{n}(\sum_{l}\tau_{n}^{2})Y_{n}^{*}U^{*}\Lambda_{k}^{\frac{1}{2}}\bar{(M}_{k}-zI)^{-1})\\
 & =\frac{n\lambda_{k}}{p}tr(W_{k}diag(\tau_{1}^{\frac{1}{2}},..,\tau_{n}^{\frac{1}{2}})U(Y_{k}Y_{k}^{*}-I)U^{*}diag(\tau_{1}^{\frac{1}{2}},..,\tau_{n}^{\frac{1}{2}})W_{k}^{*}\bar{(M}_{k}-zI)^{-1})\\
 & +\frac{n\lambda_{k}}{p}tr(\Lambda_{k}^{\frac{1}{2}}UX_{n}(\sum_{l}\tau_{n}^{2}e_{l}e_{l}^{*}-(\sum_{l}\tau_{n}^{2})I)Y_{n}^{*}U^{*}\Lambda_{k}^{\frac{1}{2}}\bar{(M}_{k}-zI)^{-1})\\
 & +\frac{n\lambda_{k}(\sum_{l}\tau_{n}^{2})}{p}tr(\Lambda_{k}^{\frac{1}{2}}U(Y_{n}Y_{n}^{*}-I)U^{*}\Lambda_{k}^{\frac{1}{2}}\bar{(M}_{k}-zI)^{-1})\\
 & +\frac{n\lambda_{k}(\sum_{l}\tau_{n}^{2})}{p}tr(\Lambda_{k}\bar{(M}_{k}-zI)^{-1}).
\end{align*}
\end{proof}

\end{document}